\titlespacing{\paragraph}{%
  0pt}{
  0.1\baselineskip}{
  1em}
\titlespacing\section{0pt}{8pt plus 1pt minus 1pt}{2pt plus 1pt minus 1pt}
\titlespacing\subsection{0pt}{8pt plus 1pt minus 1pt}{2pt plus 1pt minus 1pt}
\titlespacing\subsubsection{0pt}{8pt plus 1pt minus 1pt}{2pt plus 1pt minus 1pt}
\newtheoremstyle{slplain}
  {.4\baselineskip\@plus.1\baselineskip\@minus.1\baselineskip}
  {.3\baselineskip\@plus.1\baselineskip\@minus.1\baselineskip}
  {\itshape}
  {}
  {\bfseries}
  {.}
  { }
  {}
\theoremstyle{slplain} 
\algnewcommand\algorithmicswitch{\textbf{switch}}
\algnewcommand\algorithmiccase{\textbf{case}}
\renewcommand{\paragraph}[1]{\vspace{0.07cm}\noindent {\bf #1}:}
\newtheorem*{theorem*}{Theorem}
\newtheorem{theorem}{Theorem}[section]
\newtheorem{lemma}[theorem]{Lemma}
\newtheorem{claim}[theorem]{Claim}
\newtheorem{corollary}[theorem]{Corollary}
\newtheorem{definition}[theorem]{Definition}
\newtheorem{invariant}[theorem]{Invariant}
\newtheorem*{rep@theorem}{\rep@title}
\newcommand{\newreptheorem}[2]{%
\newenvironment{rep#1}[1]{%
 \def\rep@title{#2 \ref{##1}}%
 \begin{rep@theorem}}%
 {\end{rep@theorem}}}
\theoremstyle{definition}
\theoremstyle{remark}
\numberwithin{equation}{section}
\newtheoremstyle{etplain}
  {.0\baselineskip\@plus.1\baselineskip\@minus.1\baselineskip}
  {.0\baselineskip\@plus.1\baselineskip\@minus.1\baselineskip}
  {\itshape}
  {}
  {\bfseries}
  {.}
  { }
  {}
\newcommand{\idlow}[1]{\mathord{\mathcode`\-="702D\it #1\mathcode`\-="2200}}
\newcommand{\id}[1]{\ensuremath{\idlow{#1}}}
\newcommand{\litlow}[1]{\mathord{\mathcode`\-="702D\sf #1\mathcode`\-="2200}}
\newcommand{\lit}[1]{\ensuremath{\litlow{#1}}}
\newcommand{\namedref}[2]{\hyperref[#2]{#1~\ref*{#2}}}
\newcommand{\sectionref}[1]{\namedref{Section}{#1}}
\newcommand{\theoremref}[1]{\namedref{Theorem}{#1}}
\newcommand{\figureref}[1]{\namedref{Figure}{#1}}
\newcommand{\figurerefb}[2]{\hyperref[#1]{Figure~\ref*{#1}#2}}
\newcommand{\lemmaref}[1]{\namedref{Lemma}{#1}}
\newcommand{\invariantref}[1]{\namedref{Invariant}{#1}}
\newcommand{\corollaryref}[1]{\namedref{Corollary}{#1}}
\newcommand{\equationref}[1]{\hyperref[#1]{(\ref*{#1})}}
\renewcommand{\eqref}{\equationref}
\newcommand{\confspacek}{\Lambda_k \to \mathbb{N}}
\newcommand{\reach}{\Longrightarrow}
\newcommand{\DEBUG}[1]{}
\renewcommand{\setminus}{-}
\renewcommand{\emptyset}{\varnothing}
\newcommand{\FullOrShort}{short}
  \newcommand{\fullOnly}[1]{#1}
  \newcommand{\shortOnly}[1]{}
    \newcommand{\fullOnly}[1]{}
    \newcommand{\shortOnly}[1]{#1}
\begin{document}

\linespread{0.96}

\date{}

\title{Space-Optimal Majority in Population Protocols }

\author{
  Dan Alistarh\thanks{Dan Alistarh is supported by a Swiss National Science Foundation Ambizione Fellowship.}\\
  \small ETH Zurich\\
	\small dan.alistarh@inf.ethz.ch\\
  \and
  James Aspnes\thanks{James Aspnes is supported in part by NSF grants CCF-1637385 and CCF-1650596.}\\
        \small Yale\\
        \small james.aspnes@yale.edu 
  \and
  Rati Gelashvili\thanks{Rati Gelashvili is supported by the National Science Foundation under grants CCF-1217921, CCF-1301926, and IIS-1447786, the Department of Energy under grant ER26116/DE-SC0008923, and Oracle and Intel corporations.}\\
        \small MIT\\
	\small gelash@mit.edu
}

\maketitle
\begin{abstract}
{\small
Population protocols
  are a popular model of distributed computing, 
  in which $n$ agents with limited local state interact randomly, 
  and cooperate to collectively compute global predicates.
Inspired by recent developments in DNA programming, an extensive series of papers, 
  across different communities, has examined the computability and complexity 
 characteristics of this model. 
Majority, or consensus, is a central task in this model, in which agents need to collectively reach 
  a decision as to which one of two states $A$ or $B$ had a higher initial count. 
Two complexity metrics are important: the \emph{time} that a protocol requires to stabilize 
  to an output decision, and the \emph{state space size} that each agent requires to do so. 

It is currently known that majority requires $\Omega( \log \log n )$ 
  states per agent to allow for fast (poly-logarithmic time) stabilization, 
  and that $O( \log^2 n )$ states are sufficient. 
Thus, there is an exponential gap between the upper and lower bounds for this problem. 

We address this question.
On the negative side, we provide a new lower bound of $\Omega( \log n )$ states for 
  any protocol which stabilizes in $O( n^{1-c} )$ expected time, for any constant $c > 0$. 
This result is conditional on basic monotonicity and output assumptions, satisfied by all known protocols. 
Technically, it represents a significant departure from previous lower bounds, 
  in that it does not rely on the existence of dense configurations. 
Instead, we introduce a new generalized surgery technique to prove the existence of incorrect executions 
  for any algorithm which would contradict the lower bound.
Subsequently, our lower bound applies to more general initial configurations.

On the positive side, we give a new algorithm for majority which uses $O( \log n )$ states, and stabilizes in $O( \log^2 n )$ expected time. 
Central to the algorithm is a new \emph{leaderless phase clock} technique, which allows agents to synchronize in phases of $\Theta(n \log{n})$ consecutive interactions using $O( \log n )$ states per agent, 
exploiting a new connection between population protocols and power-of-two-choices load balancing mechanisms.
We also employ our phase clock to build a leader election algorithm with 
  a state space of size $O( \log n )$, which stabilizes in $O( \log^2 n )$ expected time. 
}
\end{abstract}


\thispagestyle{empty}

\newpage
\setcounter{page}{1}

\section{Introduction}
Population protocols~\cite{AADFP06} are a model of distributed computing 
  in which agents with very little computational power and interacting randomly 
  cooperate to collectively perform computational tasks. 
Introduced to model animal populations equipped with sensors~\cite{AADFP06}, 
  they have proved a useful abstraction for settings from wireless sensor networks~\cite{PVV09, DV12}, 
  to gene regulatory networks~\cite{BB04}, and chemical reaction networks~\cite{CCDS15}. 
In this last context, there is an intriguing line of applied research showing that population protocols 
  can be implemented at the level of DNA molecules~\cite{CDSPCSS13}, and 
  that some natural protocols are equivalent to computational tasks solved by living cells 
  in order to function correctly~\cite{CCN12}. 

A population protocol consists of a set of $n$ finite-state agents, interacting 
  in randomly chosen pairs, where each interaction may update the local state of both participants. 
A \emph{configuration} captures the ``global state'' of the system at any given time: 
  since agents are anonymous, the configuration can be entirely described 
  by the number of agents in each state.      
The protocol starts in some valid initial configuration, 
  and defines the outcomes of pairwise interactions.
The goal is to have all agents stabilize to some configuration, 
  representing the output of the computation, such that all future configurations satisfy  
  some predicate over the initial configuration of the system. 
  
In the fundamental \emph{majority} task~\cite{AAE08, PVV09, DV12},
 agents start in one of two input states $A$ and $B$, and must stabilize 
  on a decision as to which state has a higher initial count. 
Another important task is \emph{leader election}~\cite{AAE08le, AG15, DS15}, 
  which requires the system to stabilize to final configurations in which 
  a \emph{single} agent is in a special \emph{leader} state. 
One key complexity measure for algorithms is expected \emph{parallel time}, 
  defined as the number of pairwise interactions until stabilization, 
  divided by $n$, the number of agents. 
The other is \emph{state complexity},
  the number of \emph{distinct states} that an agent can internally represent.

This model leads to non-trivial connections 
  between standard computational models and \emph{natural} computation. 
There is strong evidence to suggest that the cell cycle switch in eukaryotic cells solves 
  an approximate version of majority~\cite{CCN12}, and a 
  three-state population protocol for approximate majority was 
  empirically studied as a model of epigenetic cell memory 
  by nucleosome modification~\cite{DMST07}. The majority task is a key component 
  when simulating register machines via population 
  protocols~\cite{AADFP06, AAE08le, AAE08}.
Thus, it is not surprising that there has been considerable interest 
  in the complexity of majority computation~\cite{AAE08, PVV09, DV12, CCN12, BFKMW16, AAEGR17}.

\paragraph{Complexity Thresholds}
On the lower bound side, a progression of 
  deep technical results~\cite{Doty14, CCDS15} culminated in  
  Doty and Soloveichik~\cite{DS15} showing that  
  \emph{leader election is impossible in sub-linear expected time} for protocols 
  which are restricted to a \emph{constant} number of states per agent.
This result can be extended to majority; 
  in fact,~\cite{AAEGR17} generalized it to show that 
  any protocol for exact majority using $\leq 0.5 \cdot \log \log n$ states must take
   $\Omega ( n / \textnormal{polylog } n )$ expected time, 
  even if the initial discrepancy between 
  the two input states $A$ and $B$ is polylogarithmic in $n$.
The only prior known lower bound was proved in~\cite{AGV15}, 
  showing that sublinear expected time is impossible using 
  at most \emph{four} states per agent. 

The first protocol for exact majority 
  was given by Draief and Vojnovic~\cite{DV12} and by Mertzios et al.~\cite{MNRS14}.
The protocol uses only four states, but needs \emph{linear} expected time 
  to stabilize if the initial discrepancy $\epsilon n$ between the two input states 
  is constant.
Later work~\cite{AGV15} gave the first \emph{poly-logarithmic expected time} protocol 
  for exact majority. 
Unfortunately, this algorithm requires a \emph{linear} in $n$ states per agent. 
Reference~\cite{AAEGR17} reduced the state space to $O( \log^2 n )$, 
  by introducing a state quantization technique. 
Another protocol with $O( \log^2 n)$ states, but better stabilization time
  was recently presented in~\cite{BCER17}.

\paragraph{Summary}
The results described above highlight trade-offs between the stabilization time 
  of a population protocol, and the number of states available at each agent. 
In particular, there is currently still an exponential gap between the best known lower bound, 
  of $\Omega( \log \log n )$ states per agent, 
  and the $O( \log^2 n )$ space used by the best known majority algorithm of~\cite{AAEGR17}.   

\paragraph{Contribution} 
In this paper, we address this gap, by providing tight \emph{logarithmic} upper 
  and lower bounds for majority computation in population protocols. 
For instance, when the discrepancy between the initial counts of 
  majority and minority states is not too high, 
  we show that any algorithm which stabilizes 
  in expected time $O(n^{1-c})$ for $c > 0$ requires $\Omega( \log n )$ states.
We also give a new algorithm using $O( \log n )$ states 
  which stabilizes in expected time $O( \log n \cdot \log{\frac{1}{\epsilon}})$,
  where $\epsilon n$ is the discrepancy between input states.
Notice that $\epsilon$ may depend on $n$ and can range from $1/n$ to $1$,
  corresponding to the majority state having an $\epsilon n$ advantage of
  anywhere from $1$ to $n$ more agents in the initial configuration.
Further, we give a new algorithm for leader election using 
  $O(\log{n})$ states and $O(\log^2{n})$ expected stabilization time.

The fact that the optimal state threshold for this problem 
  is logarithmic may not be entirely surprising. 
However, the techniques we develop to achieve this result are non-trivial, 
  and appear to have implications beyond the majority problem. 
We provide an overview of these techniques below. 

To understand the lower bound, it is useful to contrast 
  it with previous techniques. 
The results of~\cite{DS15, AAEGR17} employ three technical steps.
The first step proves that, from an initial configuration, 
  every algorithm must reach a \emph{dense} configuration, 
  where all states that are expressible by the algorithm are present 
  in large (near-linear) count.
The second step consists of applying a \emph{transition ordering lemma} of~\cite{CCDS15}
  which establishes properties that the state transitions must have in
  order to reduce certain state counts fast from dense configurations.
Finally, these properties are used to perform careful ad-hoc \emph{surgery} arguments 
  to show that any algorithm that stabilizes to a correct output faster than allowed 
  using few states must necessarily have executions in which 
  it stabilizes to the wrong output. 
  
A fundamental barrier to better lower bounds is that the \emph{first step} does not hold 
  for algorithms using, e.g. $O( \sqrt {\log n} )$ states: 
  with such a state space, it is possible to build algorithms that never go 
  through a configuration where all states are expressed in high counts. 
The main contribution of our lower bound is circumventing this challenge. 
We develop a generalization of the transition ordering lemma,
  and a new general surgery technique, which do not require the existence of 
  dense configurations. 

Our lower bound is contingent on basic monotonicity assumptions,
  but requires an additional assumption that we call \emph{output dominance},
  which is satisfied by all known majority algorithms, 
  yet leaving open the possibility
  that some non-standard algorithm might be able to circumvent it
  (however, we think this is unlikely to be the case).
We discuss output dominance in detail in~\sectionref{sec:model}.
Since we eliminate the density requirement, our lower bound technique applies 
  to a significantly more general set of initial configurations than in previous arguments. 
It can also be generalized to other types of predicates, such as equality. 

On the upper bound side, we introduce a new synchronization construct, 
  called a \emph{leaderless phase clock}.
A phase clock is an object which allows agents to have an (approximate) common notion 
  of time, by which they collectively count time in phases 
  of $\Theta(n \log n)$ interactions, with bounded skew. 
The phase clock ensures that all agents will be in the same 
  phase during at least $\Theta(\log n)$ interactions of each agent.

Phase clocks are critical for generic register simulations for population protocols, e.g.~\cite{AAER07}. 
However, they are rarely used in algorithm design, since all known constructions require 
  the existence of a \emph{unique leader}, which is expensive to generate. 
One key innovation behind our algorithm is that it is \emph{leaderless}, as agents maintain the shared 
  clock collectively, without relying on a special leader agent. 
At the implementation level, the phase clock is based on a simple but powerful connection 
  to load balancing by power of two choices, e.g.~\cite{ABKU, berenbrink2006balanced, PTW15}. 

We build on the phase clock to obtain a new space-optimal algorithm for majority, 
  called Phased-Majority. 
In a nutshell, the algorithm splits agents into \emph{workers}, 
  whose job is to compute the majority value, 
  and \emph{clocks}, which implement a leaderless phase clock.
Workers alternate carefully-designed \emph{cancellation} and \emph{doubling} phases.
In the former, agents of disagreeing opinions as to the initial majority 
  cancel each other out, while the latter agents attempt to spread their current opinion.
These dynamics ensure stabilization in 
  $O( \log{n} \cdot \log{\frac{1}{\epsilon}})$ time, 
  both in expectation and with high probability.
Splitting a state space in different types is common, i.e. in ``Leader-Minion'' algorithm 
  of~\cite{AG15} where each state is either a leader or a minion. 
However, doing so explicitly at the beginning of the protocol
  and maintaining a proportion of counts of agents in certain types of states 
  is due to Ghaffari and Parter~\cite{GP16}.
Our cancellation and doubling phases are inspired by~\cite{AAE08le}.

We further exploit the phase clock to obtain a simple algorithm for leader election 
  using $O( \log n )$ states, which stabilizes in $O( \log^2 n )$ expected time.
Prior to this, the best constructions used $O(\log^2 n)$ states~\cite{AAEGR17, BCER17}.
However, based on a different phase clock construction, parallel work 
  by G{\k{a}}sieniec and Stachowiak~\cite{GS17} has designed a 
  polylogarithmic-time leader election protocol using $O(\log \log n)$ states.
This is optimal due to the unified lower bound of~\cite{AAEGR17} for majority and leader election.
Combined, our results and~\cite{GS17} demonstrate an exponential separation 
  between the space complexity of leader election and majority in this model.

\paragraph{Stabilization vs Convergence}
A protocol is said to \emph{converge} to the correct output when its execution first
  reaches a point after which all configurations satisfy the correct output requirement,
  despite possibly non-zero probability of further divergence.
However, a protocol is said to \emph{stabilize} only when the probability of reaching
  a configuration with an incorrect decision actually becomes $0$.
In this paper we exclusively deal with the stabilization requirement.
We should note that~\cite{AAE08le} provides a protocol using a constant number of states
  and with a polylogarithmic expected parallel convergence time 
  if the initial configuration is equipped with a leader.
Our lower bound applies to such initial configurations and demonstrates an interesting separation,
  as for similarly fast stabilization, $\Omega(\log{n})$ states would be necessary.
\section{Model and Problem Statement}
\label{sec:model}

A \emph{task} in the population protocol model is specified by a finite set of input states $I$, and a finite set of output symbols, $O$. 
The predicate corresponding to the task maps any input configuration onto an allowable set of output symbols. We instantiate this definition for majority and leader election below. 

A \emph{population protocol} $\mathcal{P}_k$ with $k$ states
  is defined by a triple $\mathcal{P}_k = (\Lambda_k, \delta_k, \gamma_k)$.
  $\Lambda_k$ is the set of \emph{states} available to the protocol, satisfying $I \subseteq \Lambda_k$ and $|\Lambda_k| = k$.
  The protocol consists of a set of state transitions of the type 
  $A + B \rightarrow C + D,$
  \noindent defined by the protocol's state transition function $\delta_k : \Lambda_k \times \Lambda_k \rightarrow \Lambda_k \times \Lambda_k$. 
	Finally, $\gamma_k : \Lambda_k \rightarrow O$ is the protocol's output function.
This definition extends to protocols which work for \emph{variable} number of states: 
  in that case, the population protocol $\mathcal{P}$ will be a sequence of protocols 
  $\mathcal{P}_i, \mathcal{P}_{i+1}, \ldots$, 
  where $\mathcal{P}_i$ is the protocol with $i$ states.
Later in this section, we will explain in detail how the number of states 
  used by the protocol relates to the number of agents in the system.   

In the following, we will assume a set of $n \geq 2$ agents, interacting pairwise. 
Each agent executes a deterministic state machine, with states in the set $\Lambda_k$. 
The \emph{legal initial configurations} of the protocol are exactly configurations where each agent starts in a state from $I$. 
Once started, each agent keeps updating its state following interactions with other agents, according to a transition function $\delta_k$. 
Each \emph{execution step} is one interaction between a pair of agents, selected to interact uniformly at random from the set of all pairs. 
The agents in states $S_1$ and $S_2$ transition to states 
  given by $\delta_k( S_1, S_2 )$ after the interaction.

\paragraph{Configurations} 
Agents are \emph{anonymous}, 
  so any two agents in the same state are identical and interchangeable. 
Thus, we represent any set of agents simply 
  by the \emph{counts of agents} in every state, which we call a \emph{configuration}.
More formally, a \emph{configuration} $c$ is a function 
  $c: \Lambda_k \to \mathbb{N}$, where $c(S)$ represents the 
  \emph{number of agents in state $S$ in configuration $c$}.
We let $|c|$ stand for the sum, over all states $S \in \Lambda_k$, of $c(S)$,
  which is the same as the total number of agents in configuration $c$.
For instance, if $c$ is a configuration of all agents in the system,
  then $c$ describes the global state of the system, and $|c| = n$.

We say that a configuration $c'$ is \emph{reachable} from a configuration $c$, 
  denoted $c \reach c'$, if there exists a sequence of consecutive steps 
  (interactions from $\delta_k$ between pairs of agents) 
  leading from $c$ to $c'$.
If the transition sequence is $p$, we will also write $c \reach_p c'$.
We call a configuration $c$ the \emph{sum of configurations} $c_1$ and $c_2$  
  and write $c = c_1 + c_2$, when $c(S) = c_1(S) + c_2(S)$ for all states $S \in \Lambda_k$.

\paragraph{The Majority Problem}
In the \emph{majority problem}, agents start in one of two initial states $A, B \in I$.
The output set is $O = \{\id{Win}_A, \id{Win}_B\}$, where, intuitively, an initial state wins if its initial count is larger than the other state's. 
Formally, given an initial configuration $i_n$,
  it is standard to define $\epsilon$ as $\frac{| i_n(A) - i_n(B) |}{n}$.
Thus, $\epsilon$ depends on $n$ and may take values from $1/n$ to $1$.
We will be interested in the value $\epsilon n = | i_n(A) - i_n(B) |$,
  called the \emph{discrepancy}, 
  i.e. initial relative advantage of the majority state.

We say that a configuration $c$ \emph{correctly outputs the majority decision} for $i_n$,
  when for any state $S \in \Lambda_k$ with $c(S) > 0$, 
  if $i_n(A) > i_n(B)$ then $\gamma_k(S) = \id{Win}_A$,
  and if $i_n(B) > i_n(A)$ then $\gamma_k(S) = \id{Win}_B$.
  (The output in case of an initial tie can be arbitrary.) 
A configuration $c$ has a \emph{stable correct majority decision} 
  for $i_n$, if for all configurations $c'$ with $c \reach c'$, 
  $c'$ correctly outputs the majority decision for $i_n$.

In this paper we consider the \emph{exact} majority task, 
  as opposed to \emph{approximate} majority~\cite{AAE08}, 
  which allows agents to produce the wrong output with some probability.
The number of steps until the system reaches a configuration
  with a stable correct majority decision clearly depends on the 
  randomness in selecting interaction partners at each step.
We say that a population protocol $\mathcal{P}_k$ \emph{stably computes majority decision}
  from $i_n$ within $\ell$ steps with probability $1 - \phi$,
  if, with probability $1 - \phi$, any configuration $c$ reachable from $i_n$ by 
  the protocol with $\geq \ell$ steps has a stable correct majority decision
  (with the remaining probability $\phi$, more steps are required in order 
  for the system to stabilize to the correct decision). 

\paragraph{Leader Election} 
In the \emph{leader election} problem, $I = \{A \}$ and 
  in the initial configuration $i_n$ all agents start in the same initial state $A$.
The output set is $O = \{\id{Win}, \id{Lose}\}$. 
Intuitively, a single agent should output $\id{Win}$, 
  while the others should output $\id{Lose}$.

We say that a configuration $c$ \emph{has a single leader} if there exists 
  some state $S \in \Lambda_n$ with $\gamma_n(S) = \id{Win}$ and $c(S) = 1$, 
  such that for any other state $S' \neq S$, $c(S') > 0$ implies 
  $\gamma_n(S') = \id{Lose}$.
A configuration $c$ of $n$ agents has a \emph{stable leader},
  if for all $c'$ reachable from $c$, it holds that $c'$ has a single leader.\footnote{Thi standard definition allows different agents to assume the identity of the single leader after stabilization. We could additionally require that the leader agent remains the same. This is satisfied by our leader election protocol.}

A population protocol $\mathcal{P}_k$ \emph{stably elects a leader}
  within $r$ steps with probability $1 - \phi$, 
  if, with probability $1 - \phi$, any configuration $c$ reachable from $i_n$ by 
  the protocol within $\geq r$ steps has a stable leader.


\paragraph{Complexity Measures} 
The above setup considers sequential interactions; 
  however, interactions between pairs of distinct agents are independent, 
  and are usually considered as occurring in parallel. 
It is customary to define one unit of \emph{parallel time} as $n$ consecutive steps 
  of the protocol.

A population protocol $\mathcal{P}$ stably elects a leader using $s(n)$ states 
  in time $t(n)$ if, for all sufficiently large $n$, the expected number of steps for
  protocol $\mathcal{P}_{s(n)}$ (with $s(n)$ states) to stably elect a leader
  from the initial configuration, divided by $n$, is $t(n)$. 
We call $s(n)$ the \emph{state complexity} and $t(n)$ the \emph{time complexity}
  (or stabilization time) of the protocol.
For the majority problem, the complexity measures might also depend on $\epsilon$.
Thus, $\mathcal{P}$ having state complexity $s(n, \epsilon)$ 
  and time complexity $t(n, \epsilon)$ means that for sufficiently large $n$,
  $\mathcal{P}_{s(n, \epsilon)}$ stabilizes to the correct majority decision
  in expected time $t(n, \epsilon)$ for all $\epsilon$.
If the expected time is finite, then we say that population protocol 
  stably elects a leader (or stably computes majority decision).

\paragraph{Monotonicity}
The above definition of population protocols only requires that for any $n$, 
  there is just one protocol $\mathcal{P}_{s(n)}$ that stabilizes fast for $n$ agents.
In particular, notice that, so far, we did not constrain how protocols $\mathcal{P}_k$ 
  with different number of states $k$ are related to each other.

Additionally, we would like our protocols to be \emph{monotonic}, meaning that 
  a population protocol with a certain number of states that solves a task 
  for $n$ agents should not be slower when running with $n' < n$ agents.
Formally, a monotonic population protocol $\mathcal{P}$ stably elects a leader
  with $s(n)$ states in time $t(n)$, if there exists a sufficiently large constant $d$,
  such that for all $n \geq d$, protocol $\mathcal{P}_{s(n)}$ stably elects a leader 
  from the initial configuration $i_{n'}$ of $n'$ agents,
  for any $n'$ with $d \leq n' \leq n$, in expected parallel time $t(n)$.

A monotonic population protocol $\mathcal{P}$ stably computes majority decision
  with $s(n, \epsilon)$ states in time $t(n, \epsilon)$, 
  if there exists a sufficiently large constant $d$, such that for all $n \geq d$, 
  $\mathcal{P}_{s(n, \epsilon)}$ stably computes majority decision 
  from the initial configuration $i_{n'}$ of $n'$ agents with discrepancy $\epsilon' n'$,
  for any $n'$ with $d \leq n' \leq n$ and $\epsilon' \geq \epsilon$,
  in expected parallel time $t(n, \epsilon)$.
  
\paragraph{Output Dominance} 
Our lower bound will make the following additional assumption on the output properties of population protocols for majority: 

  \begin{definition}[Output Dominance]
For any population protocol $\mathcal{P}_k \in \mathcal{P}$, let 
  $c$ be a configuration with a stable majority decision.  Let
  let $c'$ be another configuration, such that for any state 
  $S \in \Lambda_k$, if $c'(S) > 0$, then $c(S) > 0$.
Then, for any configuration $c''$ such that $c' \reach c''$, 
  if $c''$ has a stable majority decision, then this decision is the same as in $c$.
\end{definition}

Intuitively, output dominance says that, if we change the \emph{counts} of states in any configuration $c$ with a stable output, 
 then the protocol will still stabilize to the same output decision. In other words, the protocol cannot swap output decisions from a stable configuration if the count of some states changes. 
To our knowledge, all known techniques for achieving exact majority in population protocols satisfy this condition. 


\section{Lower Bound on Majority}
\begin{theorem}
\label{thm:majmainlb}
Assume any monotonic population protocol $\mathcal{P}$ satisfying output dominance,
  which stably computes majority decision using $s(n, \epsilon)$ states. 
Then, the time complexity of $\mathcal{P}$ must be
  $\Omega\left(\frac{n-2\epsilon n}{3^{2 s(n, \epsilon)} \cdot s(n, \epsilon)^7 \cdot (\epsilon n)^2}\right)$.
\end{theorem}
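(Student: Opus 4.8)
The plan is to argue by contradiction: assume a monotonic, output-dominant protocol $\mathcal{P}$ stably computes majority with $k = s(n,\epsilon)$ states in expected parallel time asymptotically faster than the claimed bound, and manufacture an execution that stabilizes to the wrong decision. I would work at a population size $m \le n$ chosen so that the discrepancy is small (on the order of $\epsilon n$ or a constant), using monotonicity to ensure $\mathcal{P}_k$ still stabilizes quickly there; the factor $n - 2\epsilon n$ in the bound signals that we have roughly that many "spare" agents to redistribute between the $A$-heavy and $B$-heavy sides. The core object is a pair of initial configurations $i^A$ (with $A$ in the majority by a discrepancy of $\epsilon m$) and $i^B$ (with $B$ in the majority by the same discrepancy), which differ only by moving $2\epsilon m$ agents from $B$ to $A$. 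The aim is to show that the set of states reachable with non-negligible probability from $i^A$, and from $i^B$, overlap enough that a surgery can graft a fast stabilizing execution from one onto a configuration descended from the other.

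The first substantive step is a reachability/density-free structural lemma: from $i^A$, within the fast stabilization time, the protocol reaches with high probability a configuration $c^A$ with a stable $\id{Win}_A$ decision, and similarly $c^B$ with stable $\id{Win}_B$; moreover, because the time is short, only $O(\mathrm{poly}(k))$-many states can have grown to large count, and one can control which states appear. The second step is the generalized transition-ordering argument promised in the introduction: order the transitions by how fast they can drive down the count of a "rare" state, and extract a transition sequence that, applied to one of these stable configurations, can absorb a small perturbation (the $2\epsilon m$ displaced agents) without touching the states whose presence/absence controls the output — this is where the $3^{2k}$ and $k^7$ factors enter, as they bound the length and branching of the surgery sequences and the number of state-types we must simultaneously juggle. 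The third step is the surgery itself: start from $i^A$, run to $c^A$ (stable, outputs $\id{Win}_A$); then show that a configuration $c''$ obtained from $c^A$ by the small perturbation is reachable from $i^B$ as well (by a symmetric run plus the absorbing transition sequence), so that $c''$ is reachable from an initial configuration whose true majority is $B$; since $c''$ still has every state of $c^A$ present and descends to a stable decision, output dominance forces that decision to be $\id{Win}_A$ — contradicting correctness from $i^B$.

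The main obstacle I anticipate is exactly the step that replaces the old "dense configuration" argument: without density, the transition-ordering lemma of~\cite{CCDS15} does not directly apply, so I must develop a version that works when only a bounded, protocol-chosen subset of states is abundant. The delicate point is ensuring that the surgery sequence that re-balances the $2\epsilon m$ agents between the $A$-side and $B$-side does not create or destroy any state that carries output information, and does not blow up the transition count beyond $3^{2k}\cdot k^{O(1)}$; controlling this requires tracking, for each state, a "potential to be eliminated quickly" and showing such states can be pumped up and down using only abundant intermediaries. A secondary technical nuisance is making the probabilistic bookkeeping match: the $(\epsilon n)^2$ in the denominator strongly suggests a birthday/collision or hitting-time estimate — roughly, a random walk in the discrepancy takes $\Theta((\epsilon n)^2)$ steps to be pushed across a gap of $\epsilon n$ — and stitching this quantitative bound together with the qualitative surgery so that the constructed wrong execution has positive probability (hence contradicts *stabilization*, not merely convergence) is the part that needs the most care.
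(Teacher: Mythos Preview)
Your high-level contradiction skeleton (monotonicity to drop to a smaller population, a bottleneck-free fast run to a stable configuration, a transition-ordering lemma, a surgery, then output dominance) matches the paper, but the concrete mechanism you propose for the surgery is different from the paper's and, as stated, has a real gap.

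The paper does \emph{not} use two initial configurations $i^A$ and $i^B$ and try to graft one run onto the other. Instead it works entirely with a single $B$-majority initial configuration $i_{n'}$ on $n' = (n-2\epsilon n)/(k+1)$ agents, runs it via a $\beta^2$-bottleneck-free sequence $q$ to a stable $\id{Win}_B$ configuration $y_{n'}$, and then proves the crucial structural fact that $y_{n'}(A)=0$ (this is Lemma~\ref{lem:minzero}, and it is itself a consequence of output dominance). That zero count is what seeds the \emph{suffix} transition ordering lemma: it gives a chain $d_1=A,d_2,\ldots,d_m$ inside $\Delta=\{d:y_{n'}(d)\le b\}$ together with transitions $\alpha_j$ each occurring $\ge b$ times in $q$, with $b=3^k\cdot 2\epsilon n$. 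The surgery then builds a \emph{larger} $A$-majority initial configuration $i = (m{+}1)\cdot i_{n'} + 2\epsilon n\cdot A$ and shows $i\Longrightarrow z$ with $\mathrm{supp}(z)\subseteq\mathrm{supp}(y_{n'})$: one copy of $i_{n'}$ supplies the ``cushion'' $y_{n'}$ so that states outside $\Delta$ never go negative, while the remaining $m$ copies are consumed inductively, using the $\alpha_j$ to absorb the $2\epsilon n$ extra $A$ agents through the chain $d_1,\ldots,d_m$, at a blow-up of at most a factor $3$ per level (whence $3^k$ and the choice of $b$). Output dominance applied to $z$ versus $y_{n'}$ then forces a wrong stable decision from $i$.

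Your plan lacks three of these ingredients. First, you have no analogue of $y_{n'}(A)=0$, which is what anchors the ordering and makes the surgery start; without it there is no reason the $A$ agents can be absorbed at all. Second, the ``multiple copies of $i_{n'}$'' construction is essential---each level of the ordering consumes one fresh copy of $q$ to keep counts nonnegative---and your perturb-$c^A$-then-reach-from-$i^B$ picture has no mechanism to supply those extra runs. Third, your reading of $(\epsilon n)^2$ as a random-walk/hitting-time quantity is off: it comes purely from the bottleneck threshold, since $\beta=\Theta(k^2\cdot 3^k\cdot \epsilon n)$ and Lemma~\ref{lem:bottlefree} needs time $o(n'/(k^2\beta^2))$, which unwinds exactly to the stated bound with the extra $k$ from $n'=(n-2\epsilon n)/(k+1)$. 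No probabilistic birthday or collision estimate is involved anywhere in the lower bound---the only probability enters through Lemma~\ref{lem:bottletime}, a simple union bound over $k^2$ transition types.
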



%
%

\paragraph{Suffix Transition Ordering}
In this section we develop the main technical tool behind the lower bound, called the \emph{suffix transition ordering lemma}.
This result generalizes the classic \emph{transition ordering lemma} 
  of Chen, Cummings, Doty and Soloveichik~\cite{CCDS15},
  that has been a critical piece of the lower bounds of~\cite{DS15, AAEGR17}.
Proofs are deferred to~\sectionref{app:majlower}.

Fix a function $f : \mathbb{N} \to \mathbb{R}^{+}$.
Consider a configuration $c$ reached by an execution of a protocol $\mathcal{P}_k$, 
  and states $r_1, r_2 \in \Lambda_k$. 
A transition $\alpha : (r_1, r_2) \to (z_1, z_2)$ is an $f$\emph{-bottleneck} for $c$,
  if $c(r_1) \cdot c(r_2) \leq f(|c|)$. 
This bottleneck transition implies that the probability of 
  a transition $(r_1, r_2) \to (z_1, z_2)$ is bounded. 
Hence, proving that transition sequences from initial configuration 
  to final configurations contain a bottleneck implies 
  a lower bound on the stabilization time.
Conversely, if a protocol stabilizes fast, then it must be possible 
  to stabilize using a transition sequence which does not contain any bottleneck.
\begin{replemma}{lem:bottlefree}
Consider a population protocol $\mathcal{P}_k$ for majority,
  executing in a system of $n$ agents.
Fix a function $f$. 
Assume that $\mathcal{P}_k$ stabilizes in expected time 
  $o\left(\frac{n}{f(n) \cdot k^2}\right)$ from an initial configuration $i_n$.
Then, for all sufficiently large $n$, there exists a configuration $y_n$ with $n$ agents 
  and a transition sequence $p_n$, such that 
  (1) $i_n \reach_{p_n} y_n$,
  (2) $p_n$ has no $f$-bottleneck, and 
  (3) $y_n$ has a stable majority decision.
\end{replemma}
Next, we prove that, in monotonic population protocols that solve majority,
  the initial state $A$ cannot occur in configurations that have
  a stable majority decision $\id{WIN}_B$, and vice-versa. 
\begin{replemma}{lem:minzero}
Let $\mathcal{P}$ be a monotonic population protocol satisfying output dominance
  that stably computes majority decision for all sufficiently large $n$ 
  using $s(n, \epsilon)$ states.
For all sufficiently large $n'$ and $n > 2n'$, consider executing protocol 
  $\mathcal{P}_{s(n, \epsilon)}$ in a system of $n$ agents, from an initial configuration 
  $i_{n'}$ with $\epsilon n'$ more agents in state $B$.
Consider any $c$ with $i_{n'} \reach c$, that has a stable majority decision $\id{WIN}_B$.
Then $c(A) = 0$.
\end{replemma}
We showed that fast stabilization requires a bottleneck-free transition sequence.
The classic \emph{transition ordering lemma}~\cite{CCDS15}  
  proved that in such a transition sequence, 
  there exists an ordering of all states whose counts decrease more than some threshold, 
  such that, for each of these states $d_j$, 
  the sequence contains at least a certain number of a specific transition that consumes $d_j$, 
  but does not consume or produce any states $d_1, \ldots, d_{j-1}$ that are earlier in the ordering.

A critical prerequisite is proving that counts of states must decrease. 
Towards this goal, for protocols with constant number of states, Doty showed 
  in~\cite{Doty14} that protocols must pass through configurations 
  where all reachable states are in large counts. 
This result was strengthened in~\cite{AAEGR17} to hold for protocols 
  with at most $1/2 \log \log n$ states. 
For protocols with more than $\log \log n$ states, 
  such ``dense'' intermediate configurations may no longer occur. 
Instead, we prove the following \emph{suffix transition ordering lemma},
  which considers an ordering of certain states starting with state $A$, 
  whose count decreases due to~\lemmaref{lem:minzero}.
\begin{replemma}{lem:ordering}[Suffix Transition Ordering Lemma]
Let $\mathcal{P}_k$ be a population protocol 
  executing in a system of $n$ agents.
Fix $b \in \mathbb{N}$, and let $\beta = k^2 b + k b$. 
Let $x, y : \confspacek$ be configurations of $n$ agents such that 
  (1) $x \reach_q y$ via a transition sequence $q$ without a $\beta^2$-bottleneck.
  (2) $x(A) \geq \beta$, and (3) $y(A) = 0$.
Define
\begin{equation*}
\Delta = \{d \in \Lambda_k \mid y(d) \leq b\}
\end{equation*}
to be the set of states whose count in configuration $y$ is at most $b$. 
Then there is an order $\{d_1, d_2,\ldots, d_m\} \subseteq \Delta$, such that $d_1 = A$ and 
  for all $j \in \{1, \ldots, m\}$
  (1) $d_j \in \Delta$, and 
  (2) there is a transition $\alpha_j$ of the form $(d_j, s_j) \rightarrow (o_j, o_j')$
      that occurs at least $b$ times in $q$.
    Moreover, $s_j, o_j, o_j' \in (\Lambda_k \setminus \Delta) \cup \{d_{j+1}, \ldots, d_m\}$.
\end{replemma}
Notice that we do not require the ordering to contain all the states in $\Delta$,
  i.e. we could have $|\Delta| > m$ and $\Delta - \{d_1,\ldots,d_m\} \neq \emptyset$. 
This could happen, for instance, if some state was always present in a zero count.

\paragraph{Proof of Theorem~\ref{thm:majmainlb}}
This technical tool established, we return to the main lower bound proof. 
We will proceed by contradiction. Assume a protocol $\mathcal{P}_{s(n, \epsilon)}$ which would contradict the lower bound. 

Then, for all sufficiently large $n$, $\mathcal{P}_{s(n, \epsilon)}$ stably computes 
  majority decision in expected parallel time 
  $o\left(\frac{n-2\epsilon n}{3^{2 \cdot s(n, \epsilon)} \cdot s(n, \epsilon)^7 \cdot (\epsilon n)^2}\right)$.
We denote $k = s(n, \epsilon)$, $n' = \frac{n-2\epsilon n}{k+1}$,
  $b(n) = 3^k  \cdot (2\epsilon n)$ and $\beta(n) = k^2 \cdot b(n) + k \cdot b(n)$.
Let $i_{n'}$ be an initial configuration of $n'$ agents,
  with $\epsilon n'$ more agents in state $B$.

By monotonicity of the protocol $\mathcal{P}$, $\mathcal{P}_k$ should 
  also stabilize from $i_{n'}$ in expected time 
  $o\left(\frac{n - 2\epsilon n}{3^{2k} \cdot k^7 \cdot (\epsilon n)^2}\right)$,
  which is the same as $o\left(\frac{n'}{k^2 \cdot \beta(n)^2}\right)$.
Thus, by~\lemmaref{lem:bottlefree}, there exists a transition sequence $q$ without
  a $\beta(n)^2$ bottleneck, and configuration $y_{n'}$ with a stable majority decision,
  such that $i_{n'} \reach_q y_{n'}$.

Recall that the discrepancy $\epsilon n$ describes how many more agents
  were in the initial majority state than in the initial minority state,
  and can have a value anywhere from $1$ to $n$.   
The bound is only non-trivial in the case when $\epsilon n \in o(\sqrt{n})$,
  and $n' = \frac{n - 2 \epsilon n}{k+1} \in \omega(k^2 \cdot \beta(n)^2)$.
In this case, we have $i_{n'}(A) = \frac{n' - \epsilon n'}{2} \geq \beta(n)$
  for all sufficiently large $n$.
Also, by~\lemmaref{lem:minzero}, $y_{n'}(A) = 0$.
Therefore, we can apply the suffix transition ordering~\lemmaref{lem:ordering} 
  with $\mathcal{P}_k$, $b = b(n)$ and $\beta = \beta(n)$.
This gives an ordering $\{d_1, \ldots, d_m\}$ on a subset of $\Delta$
  and corresponding transitions $\alpha_j$.
\begin{repclaim}{clm:surgery}
Let $n'' = n' \cdot (m+1) + 2 \epsilon n$ and $i$ be an initial configuration
  of $n''$ agents consisting of $m+1$ copies of configuration $i_{n'}$ plus
  $2 \epsilon n$ agents in state $A$.
  Then, $i \reach z$, for a configuration $z$, such that for all $s \in \Lambda_k$,
  if $z(s) > 0$ then $y_{n'}(s) > 0$.

Here $y_{n'}$ comes from the application of~\lemmaref{lem:bottlefree} and 
  $m$ is the size of the ordering on a subset of $\Delta$.
\end{repclaim}
\begin{proof}[Proof Sketch]
In this proof, we consider transition sequences that might temporarily
  bring counts of agents in certain states below zero.
This will not be a problem because later we add more agents in these states, 
  so that the final transition sequence is well-formed. That is, no count ever falls below zero.  

We proceed by induction, as follows.
For every $j$ with $1 \leq j \leq m$, consider an initial configuration $\iota_j$ 
  consisting of $j$ copies of configuration $i_{n'}$ plus $2 \epsilon n$ agents in state $A$.
Then, there exists a transition sequence $q_j$ from $\iota_j$ 
  that leads to a configuration $z_j$, with the following properties:
\begin{itemize}
\item[1.] For any $d \in \Delta - \{d_{j+1}, \ldots, d_m\}$,
  the count of agents in $d$ remains non-negative throughout $q_j$.
  Moreover, if $y_{n'}(d) = 0$, then $z_j(d) = 0$.
\item[2.] For any $d \not \in \Delta - \{d_{j+1}, \ldots, d_m\}$
    the minimum count of agents in $d$ during $q_j$ is $\geq -3^j \cdot (2 \epsilon n)$.
\item[3.] For any $d \in \{d_{j+1}, \ldots, d_m\}$,
  if $y_{n'}(d) = 0$, then $|z_j(d)| \leq 3^j \cdot (2 \epsilon n)$.
\end{itemize}
\noindent The technical details of this inductive argument are deferred to~\sectionref{app:majlower}.
Given this, we take $i = i_{n'} + \iota_m$ and $z = y_{n'} + z_m$.
The transition sequence $p$ from $i$ to $z$ starts by
  $q$ from $i_{n'}$ to $y_{n'}$, followed by $q_m$.

By the first property of $q_m$, and the fact that no count is ever 
  negative in $q$ from $i_{n'}$ to $y_{n'}$, for any $d \in \Delta$, 
  the count of agents in state $d$ never becomes negative during $p$.
Next, consider any state $d \in \Lambda_k \setminus \Delta$.
By the second property, when $q_m$ is executed from $\iota_m$ to $z_m$, 
  the minimum possible count in $q_m$ is $-3^m \cdot (2 \epsilon n)$.
However, in transition sequence $p$, $q_m$ from $\iota_m$ to $z_m$ follows $q$, 
  and after $q$ we have an extra configuration $y_{n'}$ in the system.
By the definition of $\Delta$,
  $y_{n'}(d) \geq b(n) \geq 3^k \cdot (2 \epsilon n) \geq 3^m \cdot (2 \epsilon n)$.
Therefore, the count of agents in $d$ also never becomes negative
  during $p$, and thus the final transition sequence $p$ is well-formed.

Now, consider a state $s$, such that $y_{n'}(s) = 0$.
We only need to show that $z(s) = 0$.
By definition of $\Delta$, we have $s \in \Delta$, and the first property implies 
  $z(s) = z_m(s) = 0$, completing the proof of the claim.
\end{proof}

Returning to the main thread, we have $n'' \leq n$ due to $m \leq k$.
Moreover, the initial configuration $i$ of $n''$ agents
  has at least $\epsilon n \geq \epsilon n''$ more agents in state $A$ than $B$
  (since $(m+1) \cdot \epsilon n' \leq \epsilon n$, which follows from 
  $(m+1)n' \leq (k+1)n' \leq n$).
So, monotonicity of $\mathcal{P}$ implies that
  $\mathcal{P}_k$ also stably computes majority decision from initial configuration $i$. 
We know $i \reach z$, so it must be possible to reach a configuration
  $y$ from $z$ that has a stable majority decision (otherwise $\mathcal{P}_k$ 
  would not have a finite time complexity to stabilize from $i$).
By output dominance property of $\mathcal{P}$ for $z$ and $y_{n'}$, 
  $y$ has to have the same majority decision as $y_{n'}$.
However, the correct majority decision is $\id{WIN}_B$ in $i_{n'}$
  and $\id{WIN}_A$ in $i$.
This contradiction completes the proof of the theorem. We now make a few remarks on this proof. 

This lower bound implies, for instance, that for $\epsilon = 1/n$, a 
  monotonic protocol satisfying output dominance and stably solves majority
  using $\log{n}/(4 \log{3})$ states, needs to have time 
  complexity $\Omega(\sqrt{n}/\polylog n)$.
But we can get a slightly weaker bound without monotonicity.

\paragraph{Monotonicity}
We use monotonicity of the protocol to invoke the same protocol 
  with different number of agents.
In particular, in~\theoremref{thm:majmainlb}, 
  if the protocol uses $k$ states for $n$ agents,
  we need to be able to use the same protocol for $n/k$ agents.
Suppose instead that the protocol used for more agents never has less states\footnote{Formally, we require that $s(n, \epsilon)$ for any fixed $\epsilon$ be monotonically non-decreasing for all sufficiently large $n$.}.
If the state complexity is $k \leq \log{n}/(2\log\log{n})$,
  then we can find infinitely many $n$ with the desired property that
  the same protocol works for $n/k$ and $n$ agents.
This allows us to apply the same lower bound argument,
  but we would only get a lower bound for state complexities 
  up to $\log{n}/(2 \log \log{n})$.

  

\section{Leaderless Phase Clock}
\label{sec:sync}

Intuitively, the phase clock works as follows. 
Each agent keeps a local counter, intialized at $0$.
On each interaction, the two agents compare their values, 
  and the one with the \emph{lower} counter value increments its local counter. 
Since interactions are uniformly random, we can connect this to the classic power-of-two-choices load balancing process~\cite{ABKU, PTW15} to obtain 
  that the agents' counter values are concentrated within an additive 
  $O( \log n )$ factor with respect to the mean, with high probability. 

The above procedure has the obvious drawback that, as the counters continue to increment, 
  agents will need unbounded space to store the values. 
We overcome this as follows.
We fix a period $\Psi \in \Theta(\log{n})$, and a range value $\rho \in \Theta(\log{n})$, 
  with $\Psi \gg \rho$. 
The goal of the algorithm is to maintain a ``phase clock" with values between $0$ and $\Psi - 1$, 
  with the property that clocks at different agents are guaranteed to be within some 
  interval of range $\rho$ around the mean clock value, with high probability. 

We let each phase clock state be $V_i$, where $i$ is from $0$ to $\Psi - 1$
  and represents the counter value of the agent in state $V_i$.
The update rule upon each interaction is as follows.
For any $i \leq j$, if $i \not \in [0, \rho - 1]$ or $j \not \in [\Psi - \rho, \Psi - 1]$,
  then we are not dealing with a wrap-around and let
  the agent that has the \emph{lower} counter value increment its local counter.  
Formally, in this case we have that
\begin{eqnarray} 
	V_i + V_j \rightarrow V_{i + 1} + V_j.
\end{eqnarray}

\noindent In the second case, the lower of agent values, say $i$, is in  $[0, \rho - 1]$
  while the other value, $j$, is in $[\Psi - \rho, \Psi - 1]$. 
In this case, we simply increment the level of the agent with the \emph{higher} 
  counter value. 
Formally, when $i \in [0, \rho - 1]$ and $j \in [\Psi - \rho, \Psi - 1]$, 
  we have that 
\begin{eqnarray} 
\label{eqn:second}
	V_i + V_j \rightarrow V_{i} + V_{j + 1}.
\end{eqnarray}

\noindent Finally, if an agent would reach counter value $\Psi$ as the result 
  of the increment, it simply resets to value $V_0$: 
\begin{eqnarray} 
\label{eqn:four}
	V_{\Psi - 1} + V_{\Psi - 1} \rightarrow V_{\Psi - 1} + V_0 \textnormal{ and } V_{i} +  V_{\Psi - 1} \rightarrow V_{i} + V_0, \textnormal{ } \forall i \in [0, \rho - 1].
\end{eqnarray}

\paragraph{Analysis}
We will show that counter values stay concentrated around the mean, 
  so that the difference between the largest and the smallest value will 
  be less than $\rho \in O( \log n )$, with high probability. 
The updates in~\ref{eqn:second}---\ref{eqn:four} allow the algorithm to reset the counter value 
  to $0$ periodically, once the values reach a range where inconsistent wrap-arounds 
  become extremely unlikely. 

For any configuration $c$, let $w_{\ell}(c)$ be the weight of agent $\ell$, 
  defined as follows.
Assume agent $\ell$ is in state $V_i$.
For $i \in [0, \rho-1]$, if in $c$ there exists some agent in state $V_j$ 
  with $j \in [\Psi - \rho, \Psi - 1]$ 
  (i.e. if $\sum_{j \in [\Psi - \rho, \Psi - 1]} c(V_j) > 0$),
  then we have $w_{\ell}(c) = i + \Psi$.
In all other cases, we have $w_\ell(c) = i$.
Given this definition, 
  let $\mu(c) = \frac{\sum_{\ell  =  1}^n w_{\ell}(c)}{n}$ be the mean weight,
  and $x_{\ell}(c) = w_{\ell}(c) - \mu(c)$.
Let us also define $G(c)$, the \emph{gap} in configuration $c$,
  as $\max_{\ell} w_{\ell}(c) - \min_{\ell} w_{\ell}(c)$.
From an initial configuration with a gap sufficiently smaller than $\rho$,
  we consider the number of steps to reach a configuration with a gap of at least $\rho$.
The goal is to show that a large number of steps is required with high probability.
Our definitions are chosen to ensure the following invariant
  as long as the gap is not $\geq \rho$ in the execution:
The evolution of the values $x_{\ell}(c)$ is identical to that of an algorithm where 
  there is no wrap-around once the value would reach $\Psi$. In turn, the algorithm will ensure the gap bound invariant with high probability. 

Therefore, in the following, we will simplify the exposition by considering the process 
  where values continue to increase unboundedly. 
Critically, we notice that this process is now identical to the classical 
  two-choice load-balancing process:
  consider a set of $n$ bins, whose ball counts are initially $0$. 
At each step $t$, we pick two bins uniformly at random, 
  and insert a ball into the \emph{less loaded} of the two. 
Here, let us use $x_{\ell}(t)$ to represents the number of balls in $\ell$-th bin, 
  minus the average number of balls per bin after $t$ steps.
For a fixed constant $\alpha < 1$, define the potential function 
  $\Gamma(t) = \sum_{\ell = 1}^n 2 \cosh (\alpha x_{\ell}(t)) =  \sum_{\ell = 1}^n \left(\exp(\alpha x_{\ell}(t))  + \exp(- \alpha x_{\ell}(t) )\right).$
Peres, Talwar, and Wieder prove the following lemma~\cite{PTW15}. 
\begin{lemma}[Theorem 2.9 in~\cite{PTW15}]
	\label{lem:ptw}
	Given the above process, for any $t \geq 0$, 
$\E [ \Gamma (t + 1) | \Gamma(t) ] \leq \left(  1 - \frac{\alpha}{n}  \right) \Gamma(t) + \theta$, 
where $\alpha < 1$ is a constant from the definition of $\Gamma$ and $\theta \gg 1$ is a fixed constant.
\end{lemma}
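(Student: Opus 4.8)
The plan is to prove Lemma~\ref{lem:ptw} by a direct one-step analysis of the potential function, bounding the expected contribution of each of the two "slots" (the positive-exponential part $\Phi(t) = \sum_\ell \exp(\alpha x_\ell(t))$ and the negative-exponential part $\Psi(t) = \sum_\ell \exp(-\alpha x_\ell(t))$) separately; their sum is $\Gamma(t)$. First I would fix a configuration at time $t$, order the bins so that $x_1(t) \ge x_2(t) \ge \cdots \ge x_n(t)$, and write down exactly what happens to the vector $(x_\ell)$ in one step: two bins $i \le j$ (in sorted order) are chosen with probability proportional to $1$ for the ordered pair; a ball goes into the lower-indexed-load bin, i.e. the ball is placed in bin $j$ (the less loaded), so $x_j$ increases by $1-\tfrac1n$ and every other coordinate decreases by $\tfrac1n$. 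The probability that the ball lands in the bin of (sorted) rank $\ell$ is $p_\ell = \tfrac{2\ell-1}{n^2}$ — this is the standard power-of-two-choices hit distribution, biased toward higher ranks (more loaded $=$ lower rank gets hit less).

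Next I would estimate $\E[\Phi(t+1)\mid \Gamma(t)]$. For a bin of rank $\ell$ that does \emph{not} receive the ball, $x_\ell$ drops by $\tfrac1n$, so its term is multiplied by $e^{-\alpha/n} \le 1 - \tfrac{\alpha}{n} + \tfrac{\alpha^2}{2n^2}$; for the bin that receives the ball, $x_\ell$ increases by $1 - \tfrac1n$, so its term gets multiplied by roughly $e^{\alpha}$, a bounded factor. Collecting terms, $\E[\Phi(t+1)] \le (1-\tfrac{\alpha}{n})\Phi(t) + (\text{correction})$, where the correction comes from (a) the $O(\alpha^2/n^2)$ second-order term summed over all bins, contributing $O(\alpha^2/n)\,\Phi(t)$, and (b) the extra gain on the hit bin, which is $\sum_\ell p_\ell e^{\alpha x_\ell(t)}(e^{\alpha(1-1/n)} - e^{-\alpha/n}) = O(1)\sum_\ell p_\ell e^{\alpha x_\ell(t)}$. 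The key structural point — and the main obstacle — is that $\sum_\ell p_\ell e^{\alpha x_\ell(t)}$ must be controlled: because $p_\ell$ is \emph{increasing} in $\ell$ (favoring the less-loaded, smaller-$x$ bins) while $e^{\alpha x_\ell}$ is \emph{decreasing} in $\ell$, these are oppositely sorted, so by a Chebyshev-sum / rearrangement argument $\sum_\ell p_\ell e^{\alpha x_\ell(t)} \le \tfrac1n \sum_\ell e^{\alpha x_\ell(t)} = \tfrac{1}{n}\Phi(t)$. Thus the hit-bin correction is also only $O(\tfrac1n)\Phi(t)$, and choosing $\alpha$ a small enough constant absorbs the extra constant into the contraction, giving $\E[\Phi(t+1)] \le (1 - \tfrac{c\alpha}{n})\Phi(t) + \theta_1$ for some constant $\theta_1$ (the additive slack handles the regime where $\Phi(t) = O(n)$, i.e. all $x_\ell$ near $0$, where the multiplicative bound alone is too weak).

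Then I would repeat the symmetric computation for $\Psi(t) = \sum_\ell \exp(-\alpha x_\ell(t))$: now a non-hit bin's term is multiplied by $e^{\alpha/n} \le 1 + \tfrac\alpha n + O(\tfrac{\alpha^2}{n^2})$ — this is \emph{expanding}, not contracting — but the hit bin's term is multiplied by $e^{-\alpha(1-1/n)}$, a factor strictly below $1$, and crucially the hit probabilities $p_\ell$ are large precisely on the small-$x$ (large $e^{-\alpha x_\ell}$) bins, so the hit bins carry most of the $\Psi$-mass. Quantitatively $\sum_\ell p_\ell e^{-\alpha x_\ell(t)} \ge \tfrac1n \Psi(t)$ by the same oppositely-sorted argument, and the gain from hitting, $-(1 - e^{-\alpha})\sum_\ell p_\ell e^{-\alpha x_\ell} \le -\tfrac{1-e^{-\alpha}}{n}\Psi(t)$, beats the $+\tfrac\alpha n \Psi(t)$ expansion once $\alpha$ is small (since $1 - e^{-\alpha} = \alpha - \tfrac{\alpha^2}{2} + \cdots > \alpha(1 - O(\alpha))$, which exceeds $\alpha$... careful — need the net to be a contraction, so I keep an explicit slack by noting the true coefficient on $\Psi$ after combining is $1 + \tfrac\alpha n - \tfrac{1-e^{-\alpha}}{n} + O(\tfrac{\alpha^2}{n}) \le 1 - \tfrac{\alpha}{4n}$ for $\alpha$ a sufficiently small constant, using $1 - e^{-\alpha} \ge \alpha - \tfrac{\alpha^2}{2}$). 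Adding the two bounds, $\E[\Gamma(t+1)\mid\Gamma(t)] = \E[\Phi(t+1)] + \E[\Psi(t+1)] \le (1 - \tfrac{\alpha'}{n})\Gamma(t) + \theta$ with $\alpha' = \tfrac\alpha4$ and $\theta = \theta_1 + \theta_2$; renaming $\alpha' \to \alpha$ (absorbing the constant into the "constant $\alpha$" of the statement) finishes. I expect the genuinely delicate point to be making the rearrangement inequality $\sum_\ell p_\ell e^{\pm\alpha x_\ell} \lessgtr \tfrac1n\sum_\ell e^{\pm\alpha x_\ell}$ airtight while simultaneously keeping the Taylor remainders uniform, and handling the wrap-around/reset edge cases by invoking the stated invariant that, as long as the gap has not reached $\rho$, the $x_\ell$-process is \emph{exactly} the unbounded two-choice process — so no separate treatment of the $V_{\Psi-1}+V_{\Psi-1}\to V_{\Psi-1}+V_0$ transitions is needed here. (In fact, since this lemma is quoted verbatim as Theorem~2.9 of~\cite{PTW15}, the cleanest route is simply to cite it; the sketch above is the self-contained argument one would give if an explicit proof were wanted.)
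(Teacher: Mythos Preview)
The paper itself does not prove this lemma; it cites it as Theorem~2.9 of~\cite{PTW15}, exactly as you note in your final parenthetical. So on that count your proposal and the paper agree.

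Your self-contained sketch, however, has a real gap. The Chebyshev-sum bounds $\sum_\ell p_\ell e^{\alpha x_\ell} \le \Phi/n$ and $\sum_\ell p_\ell e^{-\alpha x_\ell} \ge \Psi/n$ are correct, but they are not strong enough to produce a contraction for either piece. Substituting them gives
\[
\E[\Phi(t+1)\mid x(t)] \;\le\; e^{-\alpha/n}\Bigl(1+\tfrac{e^{\alpha}-1}{n}\Bigr)\Phi(t),
\qquad
\E[\Psi(t+1)\mid x(t)] \;\le\; e^{\alpha/n}\Bigl(1-\tfrac{1-e^{-\alpha}}{n}\Bigr)\Psi(t),
\]
and both leading factors equal $1 + \alpha^2/(2n) + O(\alpha^3/n)$, which is strictly larger than $1$ for every $\alpha>0$. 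Concretely, your displayed inequality $1 + \tfrac{\alpha}{n} - \tfrac{1-e^{-\alpha}}{n} + O(\tfrac{\alpha^2}{n}) \le 1 - \tfrac{\alpha}{4n}$ is false: since $1-e^{-\alpha} = \alpha - \alpha^2/2 + O(\alpha^3)$, the left-hand side is $1 + \alpha^2/(2n) + O(\alpha^2/n) > 1$. Making $\alpha$ small shrinks the $+\alpha^2/(2n)$ excess but never flips its sign, so ``choosing $\alpha$ a small enough constant'' cannot absorb it into a contraction. The same arithmetic defeats your $\Phi$-side claim: the ``hit-bin correction'' $O(1)\cdot\Phi/n$ has hidden constant $e^{\alpha}-1 \approx \alpha$, which exactly cancels the $-\alpha/n$ drift rather than being dominated by it.

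The rearrangement inequality is tight precisely when all $x_\ell$ are equal, and extracting quantitative slack from it when the $x_\ell$ are spread out requires an additional structural argument. In~\cite{PTW15} this is done via a case analysis (separating, roughly, the regimes where $\Phi$ dominates, where $\Psi$ dominates, and where $\Gamma = O(n)$), with the additive $\theta$ arising from the last case; only inside the first two cases does one get a strict improvement over the $1/n$-averaging bound and hence a multiplicative factor below $1$. Your sketch supplies the additive $\theta$ but not the case split (or any substitute for it) that actually drives the multiplicative factor under $1$.
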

\noindent From here, we can prove the following bounded gap property of the leaderless phase clock.
\begin{repcorollary}{cor:ptw}
Suppose $c$ is a configuration with $G(c) \leq \gamma \log n$, for some constant $\gamma$.
Then, for any constant parameter $\beta$, there exists a constant $\gamma'(\beta)$,
  such that with probability $1-m/n^{\beta}$,
  for each configuration $c'$ reached by the $m$ interactions following $c$,
  it holds that $G(c') < \gamma'(\beta) \log{n}$.
\end{repcorollary}
\section{Phased Majority}
\paragraph{Overview} At a high level, the state space of the algorithm algorithm is partitioned into  
  into \emph{worker}, \emph{clock}, \emph{backup} and \emph{terminator} states.
Every state falls into one of these categories, 
  allowing us to uniquely categorize the agents based on the state they are in.
The purpose of \emph{worker} agents is to reach a consensus on the output decision.
The purpose of \emph{clock} agents is to synchronize worker agents, enabling a logarithmic state space.
The job of \emph{backup} agents is to ensure correctness via a slower protocol, 
  which is only used with low probability.
The \emph{terminator} agents are there to spread a final majority decision.
Every agent starts as worker, but depending on state transitions, 
  may become a clock, a backup or a terminator.

The algorithm alternates \emph{cancellation} phases, during which workers with different opinions cancel each other out, 
and \emph{doubling} phases, during which workers which still have a ``strong'' opinion attempt to spread it to other agents. 
Clock agents will keep these phases in sync.

\paragraph{State Space} The state of a \emph{worker} agent consists of a triple of: 
  (1) a \emph{phase number} in $\{1, 2, \ldots, 2\log{n}+1\}$; 
  (2) a \emph{value} $\in \{1, 1/2, 0\}$;
  (3) its \emph{current preference} $\id{WIN}_A$ or $\id{WIN}_B$. 
The state of a \emph{clock} agent consists of a pair  
  (1) \emph{position}, a number, 
  describing the current value of its phase clock, initially $0$, and 
  (2) its \emph{current preference} for $\id{WIN}_A$ or $\id{WIN}_B$. 
Backup agents implement a set of four possible states, 
  which serve as a way to implement the four-state protocol of~\cite{DV12,MNRS14}.
We use this as a slow but dependable backup in the case of a low-probability error event. 
There are two terminator states, $D_A$ and $D_B$.
Additionally, every state encodes the agent's original input state ($A$ or $B$) 
  and a single \lit{clock-creation} bit flag.

Agents with input $A$ start in a worker state, with phase number $1$, value $1$, 
  and preference $\id{WIN}_A$.
Agents with input $B$ start in a similar initial state, but with preference $\id{WIN}_B$. 
The \lit{clock-creation} flag is \id{true} for all agents,
  meaning that all agents could still become clocks.
The output of a clock or a worker state is its preference.
The output of an backup state is the output of the
  corresponding state of the $4$-state protocol. 
The output mapping for terminator states is the obvious 
  $\gamma(D_A) = \id{WIN}_A$ and $\gamma(D_B) = \id{WIN}_B$.

A worker agent is \emph{strong} if its current \emph{value} is $1/2$ or $1$. 
A worker agent with value $0$ is \emph{weak}.
We say that a worker is \emph{in phase $\phi$} if its phase number is $\phi$.
For the phase clock, we will set the precise value of the parameter 
  $\rho = \Theta(\log{n})$ in the next section, during the analysis.
The size of the clock will be $\Psi = 4 \rho$.
Clock states with position in $[\rho, 2\rho)$ 
  and $[3\rho, 4\rho)$ will be labelled as buffer states.
We will label states $[0, \rho)$ as \id{ODD} states,
  and $[2\rho, 3\rho)$ as \id{EVEN} states.

We now describe the different interaction types.
Pseudocode is given in~\figureref{fig:phmajcode1} and~\figureref{fig:phmajcode2}.

\paragraph{Backup and Terminator Interactions}
When both agents are backups, 
  they behave as in the $4$-state protocol of~\cite{DV12, MNRS14}.
Backup agents do not change their type,
  but cause non-backup interaction partners to change their type to a backup.
When an agent changes to a backup state, 
  it uses an input state of the $4$-state protocol corresponding to its original input.

An interaction between a terminator agent in state $D_X$ with $X \in \{A, B\}$ 
  and a clock or a worker with preference $\id{WIN}_X$ results in both agents in state $D_X$.
However, both agents end up in backup states 
  after an interaction between $D_A$ and $D_B$, or 
  a terminator agent and a worker/clock agent of the opposite preference.

\paragraph{Clock State Update}
When two clock agents interact, they update positions
  according to the phase clock algorithm described in~\sectionref{sec:sync}.
They might both change to backup states (a low probability event), if their positions 
  had a gap larger than the maximum allowed threshold $\rho$ of the phase clock.
A clock agent that meets a worker agent remains in a clock state 
  with the same position, but adopts the preference of the interaction partner 
  if the interaction partner was strong.

\paragraph{Worker State Update}
Suppose two workers in the same phase interact.
When one is weak and the other is strong,
  the preference of the agent that was weak always 
  gets updated to the preference of the strong agent.

Similar to~\cite{AAE08le}, there are two types of phases.
Odd phases are \emph{cancellation phases},
  and even phases are \emph{doubling phases}.
In a cancellation phase, if both interacting workers have value $1$ but different preferences,
  then both values are updated to $0$, preferences are kept,
  but if \lit{clock-creation} flag is \id{true} at both agents,
  then one of the agents (say, with preference $\id{WIN}_A$) becomes a clock.
Its position is set to $0$ and its preference is carried over from the previous worker state.
This is how clocks are created.
In a doubling phase, if one worker has value $1$ and 
  another has value $0$, then both values are updated to $1/2$.

\paragraph{Worker Phase and State Updates}
Suppose a worker in phase $\phi$ meets a clock.
The clock does not change its state.
If $\phi$ is odd and the label of the clock's state is \id{EVEN},
  or if $\phi$ is even and the label is \id{ODD}, then the worker enters phase $\phi+1$.
Otherwise, the worker does not change its state.

Suppose two workers meet.
If their phase numbers are equal, 
  they interact according to the rules described earlier. 
When one is in phase $\phi$ and another is in phase $\phi+1$,
  the worker in phase $\phi$ enters phase $\phi+1$ (the second worker remains unchanged).
When phase numbers differ by $>1$, both agents become backups.

Here is what happens when a worker enters phase $\phi+1$.
When $\phi+1$ is odd and the agent already had value $1$, then it becomes a
  a terminator in state $D_X$ given its preference was $\id{WIN}_X$ for $X \in \{A, B\}$.
Similarly, if the worker was already in maximum round $\phi = 2 \log{n} + 1$,
  it becomes a terminator with its preference. 
Otherwise, the agent remains a worker and sets phase number to $\phi + 1$.
If $\phi+1$ is odd and the agent had value $1/2$, it updates the value to $1$,
  otherwise, the it keeps the value unchanged.

\paragraph{Clock Creation Flag}
During a cancellation, \lit{clock-creation} flag
  determines whether one of the agents becomes a clock
  instead of becoming a weak worker.
Initially, \lit{clock-creation} is set to \id{true} at every agent.
We will set a threshold $T_c < \rho$, 
  such that when any clock with \lit{clock-creation}=\id{true}
  reaches position $T_c$, it sets \lit{clock-creation} to \id{false}.
During any interaction between two agents, 
  one of which has \lit{clock-creation}=\id{false},
  both agents set \lit{clock-creation} to \id{false}.
An agent can never change \lit{clock-creation} from \id{false} back to \id{true}.

\paragraph{Analysis}
We take a sufficiently large\footnote{For the purposes of~\lemmaref{lem:rumor}, which is given in~\sectionref{app:majupper}.} constant $\beta$, apply~\corollaryref{cor:ptw} with $\gamma = 29(\beta+1)$,
  and take the corresponding $\rho = \gamma'(\beta) \log{n} > \gamma \log{n}$ to be 
  the whp upper bound on the gap that occurs in our phase clock
  (an interaction between two clocks with gap $\geq \rho$ leads to an error and both agents become backups).
We set the \lit{clock-creation} threshold to $T_c = 23(\beta+1)\log{n} < \rho$.
\begin{replemma}{lem:majerror}[Backup]
Let $c$ be a configuration of all agents, containing a backup agent.
Then, within $O(n^2 \log{n})$ expected intaractions from $c$, 
  the system will stabilize to the correct majority decision.
\end{replemma}
We call an execution \emph{backup-free}
  if no agent is ever in a backup state.
Next, we define an invariant and use it to show that the system 
  may never stabilize to the wrong majority decision.
\begin{invariant}[Sum Invariant]
\label{inv:majsum}
Potential $Q(c)$ is defined for configuration $c$ as follows.
For each worker in $c$ in phase $\phi$ with value $v$,
  if its preference is $\id{WIN}_A$, 
  we add $v \cdot 2^{\log{n} - \lfloor (\phi-1)/2 \rfloor}$ to $Q(c)$.
If its preference is $\id{WIN}_B$, we subtract 
  $v \cdot 2^{\log{n} - \lfloor (\phi-1)/2 \rfloor}$ from $Q(c)$.
Suppose $c$ is reachable from an initial configuration
  where input $X \in \{A, B\}$ has the majority with advantage $\epsilon n$,
  by a backup-free execution during which no agent is ever in a terminator state $D_X$.
If $X=A$, we have $Q(c) \geq \epsilon n^2$,
  and if $X=B$, then $Q(c) \leq \epsilon n^2$.
\end{invariant}
\begin{replemma}{lem:majcorrect}[Correctness]
If the protocol stabilizes to $\id{WIN}_X$,
  then $X \in \{A, B\}$ was the initial majority.
\end{replemma}
\begin{replemma}{lem:majdone}[Terminator]
Let $c$ be a configuration of all agents, containing a terminator agent.
In backup-free executions,
  the system stabilizes to the correct majority decision within 
  $O(n \log n)$ interactions in expectation and with high probability.
Otherwise, the system stabilizes within $O(n^2 \log{n})$ expected intaractions.
\end{replemma}
We derive a lemma about each type of phase.
A similar statement is proved for duplication in~\sectionref{app:majupper}.
Since our phases are inspired by~\cite{AAE08le}, here
  we are able to reuse some of their analysis techniques.
\begin{replemma}{lem:majcancel}[Cancellation]
Suppose in configuration $c$ every agent is either a clock 
  or a worker in the same cancellation phase $\phi$ ($\phi$ is odd).
Consider executing $8(\beta+1)n\log{n}$ interactions from $c$
  conditioned on an event that during this interaction sequence,
  no clock is ever in a state with label $\id{EVEN}$,
  and that the phase clock gap is never larger than $\rho$.
Let $c'$ be the resulting configuration.
Then, with probability $1-n^{-\beta}$, in $c'$ it holds that: 
  (1) all strong agents have the same preference,
  or there are at most $n/10$ strong agents with each preference;
  (2) every agent is still a clock, or a worker in phase $\phi$.
\end{replemma}
The final theorem is given below and proved in~\lemmaref{lem:majwhp} 
  and~\lemmaref{lem:majexp} in~\sectionref{app:majupper}.
\begin{theorem}
If the initial majority state has an advantage of $\epsilon n$ agents 
  over the minority state,
  our algorithm stabilizes to the correct majority decision
  in $O(\log{1/\epsilon} \cdot \log{n})$ parallel time, 
  both w.h.p. and in expectation.
\end{theorem}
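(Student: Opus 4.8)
The plan is to prove the high-probability bound first and deduce the expected-time bound from it. Throughout, condition on the event \emph{clean}: over the first $O(n\log(1/\epsilon)\log n)$ interactions, no backup agent is ever created and the phase-clock gap never reaches $\rho$. The first and hardest step is to establish, with high probability, the \emph{phase-structure invariant}: during phase $1$, before any clock clears its \lit{clock-creation} flag at position $T_c$, a $\Theta(n)$ fraction of agents becomes clocks; the clock gap stays $<\rho$ forever (\corollaryref{cor:ptw}, since the clock dynamics on the weights $x_\ell$ are exactly the two-choice process); and, because $\Psi=4\rho$ with $\rho$-wide buffer bands, all workers stay within one phase of one another, each phase $\phi$ spans $\Theta(n\log n)$ consecutive interactions, and inside each phase there is a sub-configuration in which every worker is in phase $\phi$. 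This is where the connection to two-choice load balancing and epidemic spreading (\lemmaref{lem:rumor}) is used in full; the details are deferred to \sectionref{app:majupper}.

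Granting the phase-structure invariant, apply \lemmaref{lem:majcancel} to each odd (cancellation) phase and its doubling counterpart to each even phase, together with \lemmaref{lem:rumor}, to conclude that each phase completes within $\Theta(n\log n)$ interactions with the workers advancing in lock-step and that the strong workers evolve as designed. The crux is then the Sum Invariant (\invariantref{inv:majsum}): while the execution is backup-free and no terminator $D_X$ for the true majority $X$ has been created, $|Q(c)|\ge \epsilon n^2$. But at the start of cancellation phase $2j+1$ each strong worker has weight $2^{\log n-j}=n/2^j$, so $|Q(c)|\le n\cdot(n/2^j)=n^2/2^j$, which is below $\epsilon n^2$ once $j\ge\log(1/\epsilon)+1$. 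Since the clock never stalls, phases keep advancing, so this forces a terminator $D_X$ to appear within the first $O(\log(1/\epsilon))$ phases (the hard phase cap $2\log n+1$ serving as a backstop), i.e. within $O(\log(1/\epsilon)\cdot\log n)$ parallel time.

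Once a terminator is present, \lemmaref{lem:majdone} (backup-free branch) yields stabilization within a further $O(n\log n)$ interactions, and \lemmaref{lem:majcorrect} guarantees the stable decision is correct. Conditioned on \emph{clean}, the protocol therefore stabilizes to the correct decision in $O(\log(1/\epsilon)\cdot\log n)$ parallel time; since the per-phase length bound is itself high-probability, this gives the claimed high-probability statement. For the expectation: the complement of \emph{clean} has probability at most $n^{-\Omega(1)}$ over the $O(n\log^2 n)$ relevant interactions (\corollaryref{cor:ptw} and a union bound, taking $\beta$ large), and on that event \lemmaref{lem:majerror} --- equivalently the fallback branch of \lemmaref{lem:majdone} --- still stabilizes to the correct decision in $O(n^2\log n)$ expected interactions, i.e. $O(n\log n)$ expected parallel time. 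Hence the expected parallel time is $O(\log(1/\epsilon)\log n)+n^{-\Omega(1)}\cdot O(n\log n)=O(\log(1/\epsilon)\log n)$; in particular it is finite, so the protocol stably computes the majority decision.

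The main obstacle is the phase-structure invariant: one must simultaneously control clock creation (which occurs only through cancellations, and only until the \lit{clock-creation} flag is cleared), clock synchronization via the two-choice process, and lock-step phase progression of the workers, with the parameters $\rho$, $\Psi=4\rho$, $T_c$ tuned so that \corollaryref{cor:ptw} and \lemmaref{lem:rumor} apply --- everything downstream (the applicability of \lemmaref{lem:majcancel} and its doubling counterpart, and the $\Theta(n\log n)$ per-phase length) rests on it.
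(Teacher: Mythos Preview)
Your high-probability argument tracks the paper's closely: establish the phase-clock structure via \corollaryref{cor:ptw} and the clock-creation analysis, then alternate \lemmaref{lem:majcancel} and \lemmaref{lem:majdup}, and use \invariantref{inv:majsum} to cap the number of phases at $O(\log(1/\epsilon))$. Your contrapositive use of the Sum Invariant (if all workers are in phase $2j+1$ with $j>\log(1/\epsilon)$ and no backup, then $D_X$ must already exist) is clean and matches the paper's reasoning.

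The expectation argument, however, has a real gap. You assume that on the complement of \emph{clean} a backup (or terminator) agent is present, so that \lemmaref{lem:majerror} or the fallback branch of \lemmaref{lem:majdone} applies. That inference is not valid. First, ``the phase-clock gap reaches $\rho$'' does not by itself create a backup: a backup appears only when two clocks whose positions differ by $\ge\rho$ actually interact, and the gap may close before that ever happens. Second, and more importantly, your high-probability argument relies on many events beyond \emph{clean}---each application of \lemmaref{lem:majcancel}, \lemmaref{lem:majdup}, \lemmaref{lem:rumor}, and \lemmaref{lem:syncinc} has its own $n^{-\beta}$ failure probability---and on those failures (say a cancellation phase leaves $>n/10$ strong workers of each preference) no backup or terminator is produced at all; the system simply sits in an ``ordinary'' worker/clock configuration that your argument does not cover.

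The paper's \lemmaref{lem:majexp} closes this gap by proving, unconditionally from \emph{any} reachable configuration, that a backup or terminator must arise within polynomially many expected interactions. The mechanism is different from anything in your sketch: with at least two clocks present, clock--clock interactions force some clock to cycle through all positions in expected $O(n^2\log n)$ interactions; such a clock is guaranteed to pass through both an \id{ODD}- and an \id{EVEN}-labelled state, so any fixed worker has a $\Theta(1/n^2)$ chance per cycle of being pushed into its next phase; iterating to the phase cap $2\log n+1$ forces a terminator. This yields a crude $O(n^4\log^2 n)$ parallel-time bound in the bad event, which, multiplied by its $n^{-\Omega(1)}$ probability, is negligible. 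Without some argument of this type---a worst-case polynomial bound from arbitrary configurations---your expectation claim is not established.
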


\section{Conclusions and Future Work}
We have given tight logarithmic upper and lower bounds for the space (state) complexity 
  of fast exact majority in population protocols. 
Our lower bound is contingent on an output dominance assumption, 
  satisfied by all known protocols. 
Together with the recent $O(\log \log n)$-states leader election 
  by G{\k{a}}sieniec and Stachowiak~\cite{GS17}, our results suggest an exponential 
  gap between the space complexity of these two fundamental tasks in population protocols.
Unlike~\cite{DS15, AAEGR17}, we do not require fast stabilization from configurations 
  where all states have large counts.
As a result, our lower bound technique works from more general initial configurations,
  and thus is potentially applicable to broader settings.
It also applies to predicates such as equality. 
Similarly, the leaderless phase clock we introduce is quite general, 
  and should be applicable broadly. 
In particular, recent results~\cite{PTW15} suggest that it should be applicable 
  to settings where the communication graph can be modelled by an expander.  
Exploring and characterizing these generalizations is an interesting direction
  for future research.

One open question and a technical challenge that we would love to see settled 
  is getting rid of the ``output dominance'' assumption in our majority lower bound. 
We conjecture that the same $\Omega(\log n)$ lower bound must hold unconditionally.
Moreover, we conjecture that all majority algorithms that stabilize to the 
  correct decision in polylogarithmic time must satisfy ``output dominance'',
  which would obviously imply the unconditional lower bound.
Some of the technical tools developed in this paper will hopefully be helpful,
  but the proof of this conjecture is likely to at least involve more
  complex ``surgeries'' on transition sequences.

Our lower bound of $\Omega(\log n)$ states follows when the initial discrepancy
  between the counts of majority and minority states is not larger than $\sqrt{n}$.
We could ask what happens for larger discrepancies.
We think that in this case, it might be possible to stabilize in polylogarithmic
  time using $O(\log \log n)$ states.
One idea is to use truncated phased majority algorithm with less levels.

Our lower bound applies to algorithms that stabilize fast, which is a stronger 
  requirement than convergence. 
Our results highlight a separation. 
While fast stabilization for majority requires $\Omega(\log n)$ states, 
  it is possible converge fast using $O(\log \log n)$ states.
While some of our technical tools may well turn out to be useful when dealing
  with the convergence requirement as opposed to stabilization, 
  solving this problem might require developing 
  novel and interesting techniques.

\section{Acknowledgments}
We would like to thank David Doty, David Soloveichik, Petra Berenbrink
  and anonymous reviewers for helpful feedback.   

\bibliographystyle{alpha}
\bibliography{biblio}
\appendix
\section{Majority Lower Bound}
\label{app:majlower}
Given a protocol $\mathcal{P}_k$ with $k$ states executing in a system of $n$ agents, 
  for a configuration $c$ and a set of configurations $Y$, 
  let us define $T[c \reach Y]$ as the expected parallel time it takes from $c$
  to reach some configuration in $Y$ for the first time.
\begin{lemma}
\label{lem:bottletime}
In a system of $n$ agents executing protocol $\mathcal{P}_k$, 
  let $f : \mathbb{N} \to \mathbb{R}^{+}$ be a fixed function,
  $c : \confspacek$ be a configuration, and 
  $Y$ be a set of configurations, such that every 
  transition sequence from $c$ to some $y \in Y$ has an $f$-bottleneck.
Then it holds that $T[c \reach Y] \geq \frac{n-1}{2 f(n) k^2}$.\footnote{Notice that the assumption is about \emph{every} transition sequence having a bottleneck. Thus, passing some bottleneck cannot be avoided. However, it is true that a particular bottleneck in some fixed configuration can be cleared if the necessary bottleneck states are generated by subsequent non-bottleneck transitions.}  
\end{lemma}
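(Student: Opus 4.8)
The plan is to lower-bound the expected hitting time $T[c \reach Y]$ by a standard bottleneck/union-bound argument. The key observation is that, since every transition sequence from $c$ to some configuration in $Y$ must contain an $f$-bottleneck transition, the execution cannot reach $Y$ before it performs \emph{some} bottleneck interaction. So it suffices to show that each single step of the protocol triggers a fixed bottleneck transition only with small probability, and then sum over the (at most $k^2$) possible transition types that could be bottlenecks at a given configuration.

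Concretely, I would argue as follows. First, fix an arbitrary reachable configuration $c'$ along the execution. A transition $\alpha : (r_1, r_2) \to (z_1, z_2)$ is an $f$-bottleneck for $c'$ exactly when $c'(r_1) \cdot c'(r_2) \leq f(n)$. The probability that the next interaction picks an (ordered or unordered) pair of agents in states $r_1$ and $r_2$ is at most $\frac{c'(r_1) \cdot c'(r_2)}{\binom{n}{2}} \le \frac{2 f(n)}{n(n-1)}$ when $\alpha$ is a bottleneck (and if $r_1 = r_2$ the count of pairs is even smaller). Since there are at most $k^2$ distinct transitions in $\delta_k$, the probability that the next step executes \emph{any} bottleneck transition for the current configuration is at most $\frac{2 f(n) k^2}{n(n-1)}$. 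Therefore, the number of steps $N$ until the first bottleneck transition occurs stochastically dominates a geometric random variable with success probability $\frac{2 f(n) k^2}{n(n-1)}$, so $\E[N] \geq \frac{n(n-1)}{2 f(n) k^2}$. Dividing by $n$ to convert to parallel time gives $T[c \reach Y] \geq \frac{\E[N]}{n} \geq \frac{n-1}{2 f(n) k^2}$, which is exactly the claimed bound. One should note that reaching $Y$ requires at least one bottleneck step by hypothesis, so $T[c \reach Y] \ge T[\text{first bottleneck step}]$, which is what we bounded.

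The only mildly delicate point — and the place I would be most careful — is the accounting of interaction probabilities: whether one counts ordered or unordered pairs, how the $r_1 = r_2$ case is handled, and making sure the per-step bottleneck probability bound $\frac{2 f(n) k^2}{n(n-1)}$ is genuinely uniform over all reachable configurations (it is, since it only depends on the definition of bottleneck and the number of transition types, not on the configuration). A secondary subtlety is the stochastic-domination / optional-stopping step needed to pass from the per-step bound to $\E[N] \geq \frac{n(n-1)}{2 f(n) k^2}$: since the per-step success probability is bounded above by $p := \frac{2 f(n) k^2}{n(n-1)}$ regardless of history, the first-success time dominates $\mathrm{Geom}(p)$, giving $\E[N] \ge 1/p$; this is routine. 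No density or monotonicity assumptions are needed here — this lemma is purely about the random interaction process — so it is really just the clean probabilistic core underlying \lemmaref{lem:bottlefree}.
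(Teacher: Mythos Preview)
Your proposal is correct and follows essentially the same argument as the paper: bound the per-step probability of any $f$-bottleneck transition by $\frac{2f(n)k^2}{n(n-1)}$ via a union bound over the at most $k^2$ transition types, dominate the first-bottleneck time by a geometric with this success probability, and divide by $n$ to get parallel time. The paper phrases the domination step as ``bounding by a Bernoulli trial'' but the content is identical to your stochastic-domination remark.
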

\begin{proof}
By definition, every transition sequence from $c$ to a configuration $y \in Y$ contains 
  an $f$-bottleneck, so it is sufficient to lower bound the expected time for the first 
  $f$-bottleneck transition to occur from $c$ before reaching $Y$.
In any configuration $c'$ reachable from $c$, for any pair of states $r_1, r_2 \in \Lambda_k$ 
  such that $(r_1, r_2) \to (p_1, p_2)$ is an $f$-bottleneck transition in $c'$, 
  the definition implies that $c'(r_1) \cdot c'(r_2) \leq f(n)$.
Thus, the probability that the next pair of agents selected to interact are 
  in states $r_1$ and $r_2$, is at most $\frac{2f(n)}{n(n-1)}$.
Taking an union bound over all $k^2$ possible such transitions, the probability that the 
  next transition is $f$-bottleneck is at most $k^2\frac{2f(n)}{n(n-1)}$.
Bounding by a Bernoulli trial with success probability $\frac{2f(n)k^2}{n(n-1)}$,
  the expected number of interactions until the first $f$-bottleneck transition is at least 
  $\frac{n(n-1)}{2 f(n) k^2}$.
The expected parallel time is this quantity divided by $n$, completing the argument.
\end{proof}
\begin{lemma}
\label{lem:bottlefree}
Consider a population protocol $\mathcal{P}_k$ for majority,
  executing in a system of $n$ agents.
Fix a function $f$. 
Assume that $\mathcal{P}_k$ stabilizes in expected time 
  $o\left(\frac{n}{f(n) \cdot k^2}\right)$ from an initial configuration $i_n$.
Then, for all sufficiently large $n$, there exists a configuration $y_n$ with $n$ agents 
  and a transition sequence $p_n$, such that 
  (1) $i_n \reach_{p_n} y_n$,
  (2) $p_n$ has no $f$-bottleneck, and 
  (3) $y_n$ has a stable majority decision.
\end{lemma}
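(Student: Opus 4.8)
\textbf{Proof plan for Lemma~\ref{lem:bottlefree}.}
The plan is to argue by contradiction, using \lemmaref{lem:bottletime} as the engine. Suppose that for infinitely many $n$ (large enough), \emph{every} transition sequence $p_n$ with $i_n \reach_{p_n} y_n$ and $y_n$ having a stable majority decision must contain an $f$-bottleneck. Let $Y$ be the set of all configurations with $n$ agents that have a stable majority decision. Then by the contrapositive of what we assumed, every transition sequence from $i_n$ to some $y \in Y$ contains an $f$-bottleneck, so \lemmaref{lem:bottletime} applies and gives $T[i_n \reach Y] \geq \frac{n-1}{2 f(n) k^2}$. This is a parallel-time lower bound of order $\Omega\!\left(\frac{n}{f(n) k^2}\right)$ on reaching \emph{any} stably-correct configuration.

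The next step is to observe that $T[i_n \reach Y]$ lower-bounds the expected stabilization time of $\mathcal{P}_k$ from $i_n$: the protocol cannot have stabilized (i.e. reached a configuration from which all reachable configurations output the correct majority) before it has entered $Y$, since by definition membership in $Y$ is exactly the property of having a stable correct majority decision. Hence the expected stabilization time is $\geq T[i_n \reach Y] = \Omega\!\left(\frac{n}{f(n) k^2}\right)$, which directly contradicts the hypothesis that $\mathcal{P}_k$ stabilizes in expected time $o\!\left(\frac{n}{f(n) k^2}\right)$. Therefore, for all sufficiently large $n$, there must exist at least one transition sequence $p_n$ from $i_n$ to some $y_n \in Y$ with no $f$-bottleneck, which is exactly the claimed conclusion (properties (1), (2), (3)).

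The one point that needs a little care — and which I expect to be the only genuine obstacle — is the reduction from ``expected stabilization time is small'' to ``$T[i_n \reach Y]$ is small'': one must make sure that a stabilized configuration indeed lies in $Y$, i.e. that the protocol's notion of stabilization coincides with reaching a configuration all of whose successors correctly output the majority. This is immediate from the definitions in \sectionref{sec:model} (a configuration has a stable correct majority decision iff every reachable configuration correctly outputs the decision), but it is worth stating explicitly so that the chain of inequalities $\text{(expected stabilization time)} \geq T[i_n \reach Y] \geq \frac{n-1}{2f(n)k^2}$ is airtight. Everything else is a routine contrapositive argument combined with the already-established \lemmaref{lem:bottletime}.
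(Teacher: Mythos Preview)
Your proposal is correct and follows essentially the same approach as the paper's proof: assume for contradiction that every transition sequence from $i_n$ to a stable-output configuration contains an $f$-bottleneck, invoke \lemmaref{lem:bottletime} to get $T[i_n \reach Y] \geq \frac{n-1}{2 f(n) k^2}$, and derive a contradiction with the assumed $o\!\left(\frac{n}{f(n) k^2}\right)$ stabilization time. Your extra care in spelling out why the expected stabilization time is at least $T[i_n \reach Y]$ is a welcome clarification that the paper leaves implicit.
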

\begin{proof}
We know that the expected stabilization time from $i_n$ is finite.
Therefore, a configuration $y_n$ that has a stable majority decision
  must be reachable from $i_n$ through some transition sequence $p_n$. 
However, we also need $p_n$ to satisfy the second requirement.

Let $Y_n$ be a set of all stable output configurations with $n$ agents.
Suppose for contradiction that every transition sequence from $i_n$ 
  to some $y \in Y_n$ has an $f$-bottleneck.
Then, using~\lemmaref{lem:bottletime}, the expected time to stabilize from 
  $i_n$ to a majority decision is 
  $T[i_n \reach Y_n] \geq \frac{n-1}{2 f(n) k^2} = \Theta(\frac{n}{f(n)k^2})$.
But we know that the protocol stabilizes from $i_n$ in time $o(\frac{n}{f(n)k^2})$,
  and the contradiction completes the proof.
\end{proof}
\begin{lemma}
\label{lem:minzero}
Let $\mathcal{P}$ be a monotonic population protocol satisfying output dominance
  that stably computes majority decision for all sufficiently large $n$ 
  using $s(n, \epsilon)$ states.
For all sufficiently large $n'$ and $n > 2n'$, consider executing protocol 
  $\mathcal{P}_{s(n, \epsilon)}$ in a system of $n$ agents, from an initial configuration 
  $i_{n'}$ with $\epsilon n'$ more agents in state $B$.
Consider any $c$ with $i_{n'} \reach c$, that has a stable majority decision $\id{WIN}_B$.
Then $c(A) = 0$.
\end{lemma}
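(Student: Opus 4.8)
The plan is to argue by contradiction using monotonicity together with output dominance, amplifying a single offending configuration into a full-size system where the majority decision is clearly wrong. Suppose there is a configuration $c$ with $i_{n'} \reach c$, a stable majority decision $\id{WIN}_B$, yet $c(A) > 0$. Since $n > 2n'$, we have room to place at least two disjoint copies of an $n'$-agent population inside the $n$-agent system, plus spare agents. The idea is to take one copy evolve to $c$, and then use the surviving $A$-agent in $c$ as a ``seed'' that, combined with additional fresh copies of $i_{n'}$ or extra $A$-agents, shifts the global majority to $A$ while keeping every state that appears supported by $c$.

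Concretely, first I would run the protocol $\mathcal{P}_{s(n,\epsilon)}$ on one block of $n'$ agents started in $i_{n'}$ until it reaches $c$; by assumption $c$ has stable decision $\id{WIN}_B$ and $c(A)\geq 1$. Next, I would take a second initial block, also started in $i_{n'}$ (so it still favors $B$ by $\epsilon n'$), together with enough additional agents placed in state $A$ so that the \emph{combined} initial configuration of this whole second part has strictly more $A$'s than $B$'s; since $n>2n'$ we can afford roughly $n - n'$ extra agents, far more than the $\epsilon n'$ needed to flip the balance. Now consider the full $n$-agent initial configuration $i$ consisting of the first block's $i_{n'}$ plus this second part. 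By monotonicity, $\mathcal{P}_{s(n,\epsilon)}$ stably computes the majority decision from $i$, and since $A$ is the strict majority in $i$, every stable configuration reachable from $i$ must output $\id{WIN}_A$.

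On the other hand, from $i$ we can first drive the first block to $c$ while leaving the rest untouched, reaching a configuration $c^{+}$ with $c^{+}(S) > 0 \implies$ ($c(S)>0$ or $S$ is one of the few states used in the second part, in particular $A$). The delicate point — and the main obstacle — is to make the ``states present'' condition of output dominance hold with respect to $c$: I need to arrange the second part so that every state it ever exhibits before we freeze it is already present in $c$, which in particular requires $c(A)>0$ (true by our contradiction hypothesis) and may require choosing the extra agents to all be in state $A$ and keeping them from interacting, or more carefully routing the second block's early transitions. Granting this, output dominance applied to $c$ and $c^{+}$ says any stable decision reachable from $c^{+}$ equals that of $c$, namely $\id{WIN}_B$; since $\mathcal{P}_{s(n,\epsilon)}$ has finite expected stabilization time from $i$, some stable configuration \emph{is} reachable from $c^{+}$, so it outputs $\id{WIN}_B$ — contradicting that it must output $\id{WIN}_A$. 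Hence $c(A)=0$. The one step I expect to require the most care is verifying the hypothesis of output dominance: ensuring the auxiliary agents contribute no state outside $\mathrm{supp}(c)$, which is precisely where $c(A)>0$ is used and where the bound $n>2n'$ gives the needed slack.
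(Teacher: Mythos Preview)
Your overall strategy---contradiction via monotonicity plus output dominance---matches the paper, but you introduce an unnecessary complication that creates a genuine gap. You add a \emph{second} block started in $i_{n'}$ to your full configuration. That block contains agents in state $B$, so your frozen configuration $c^{+}$ has $c^{+}(B)>0$; yet nothing guarantees $c(B)>0$, and the hypothesis of output dominance fails. You seem to sense this (``the delicate point\ldots''), but your proposed remedies---keeping extra $A$-agents from interacting, or ``routing the second block's early transitions''---do not address the $B$-agents already sitting in that second $i_{n'}$ block.

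The fix, which is exactly what the paper does, is to drop the second $i_{n'}$ block entirely: just take the initial $n$-agent configuration to be $i_{n'}$ together with $n-n'$ extra agents \emph{all in state $A$}. Since $n>2n'$, this makes $A$ the strict majority. Run the first $n'$ agents to $c$ while the extra agents sit untouched; the resulting configuration is $c' = c + (n-n')\cdot A$, whose support is exactly $\mathrm{supp}(c)\cup\{A\}=\mathrm{supp}(c)$ because $c(A)>0$ by hypothesis. Now output dominance applies cleanly, and the contradiction goes through. (Incidentally, output dominance only constrains the states present in the single configuration $c'$, not states appearing along the way, so your worry about ``every state it ever exhibits'' is stronger than what is actually needed.)
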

\begin{proof}
For sufficiently large $n'$ and $n$,
  we can consider executing protocol $\mathcal{P}_{s(n, \epsilon)}$
  from an initial configuration $i_{n'}$, and know that it stabilizes 
  to the correct majority decision, because $\mathcal{P}$ is a monotonic protocol. 

Assume for contradiction that $c(A) > 0$.
Since $c$ has a stable majority decision $\id{WIN}_B$,
  we must have $\gamma_{s(n, \epsilon)}(A) = \id{WIN}_B$.
Now consider a system of $n$ agents, executing $\mathcal{P}_{s(n, \epsilon)}$,
  where $n'$ agents start in configuration $i_{n'}$ and reach $c$,
  and the remaining agents each start in state $A$.
Clearly, for the system of $n > 2n'$ agents, $A$ is the majority.
Define $c'$ to be configuration $c$ plus $n-n'$ agents in state $A$.
We only added agents in state $A$ from $c$ to $c'$ and $c(A) > 0$, 
  thus for any state $s \in \Lambda_{s(n, \epsilon)}$ with $c'(s) > 0$, we have $c(s) > 0$. 
However, as $c$ has a stable majority $\id{WIN}_B$,
  by output dominance, any configuration $c''$ with $c' \reach c''$
  that has a stable majority decision, should have a decision $\id{WIN}_B$. 

As $\mathcal{P}$ stably computes the majority decision, 
  $\mathcal{P}_{s(n, \epsilon)}$ should stabilize in a finite expected time for $n$ agents.
$c'$ is reachable from an initial configuration of $n$ agents.
Thus, some configuration $c''$ with a stable majority decision must be reachable from $c'$.
However, the initial configuration has majority $A$,
  and $c''$ has a majority decision $\id{WIN}_B$, a contradiction.
\end{proof}
\begin{lemma}
\label{lem:ordering}[Suffix Transition Ordering Lemma]
Let $\mathcal{P}_k$ be a population protocol executing in a system of $n$ agents.
Fix $b \in \mathbb{N}$, and let $\beta = k^2 b + k b$. 
Let $x, y : \confspacek$ be configurations of $n$ agents such that 
  (1) $x \reach_q y$ via a transition sequence $q$ without a $\beta^2$-bottleneck.
  (2) $x(A) \geq \beta$, and (3) $y(A) = 0$.
Define
\begin{equation*}
\Delta = \{d \in \Lambda_k \mid y(d) \leq b\}
\end{equation*}
to be the set of states whose count in configuration $y$ is at most $b$. 
Then there is an order $\{d_1, d_2,\ldots, d_m\} \subseteq \Delta$\footnote{Recall that it is possible that $\{d_1, d_2, \ldots, d_m\} \subset \Delta$.}, such that $d_1 = A$ and 
  for all $j \in \{1, \ldots, m\}$
  (1) $d_j \in \Delta$, and 
  (2) there is a transition $\alpha_j$ of the form $(d_j, s_j) \rightarrow (o_j, o_j')$
      that occurs at least $b$ times in $q$.
    Moreover, $s_j, o_j, o_j' \in (\Lambda_k \setminus \Delta) \cup \{d_{j+1}, \ldots, d_m\}$. 
\end{lemma}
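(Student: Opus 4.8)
\textbf{Proof plan for Lemma~\ref{lem:ordering}.}
The plan is to build the ordering $d_1, d_2, \ldots$ greedily, starting from $d_1 = A$, by repeatedly extracting, from the suffix of $q$, a transition that consumes the current state $d_j$ but is ``innocent'' with respect to all states processed so far. I would maintain a running set $\Phi_j = \{d_1, \ldots, d_j\} \subseteq \Delta$ of states already placed in the order, and argue that as long as $\Phi_j$ does not yet certify the conclusion, there is a next state $d_{j+1} \in \Delta \setminus \Phi_j$ and a transition $\alpha_{j+1}$ of the form $(d_{j+1}, s) \to (o, o')$ occurring at least $b$ times in $q$ with $s, o, o' \notin \Phi_j$ (so they lie in $(\Lambda_k \setminus \Delta) \cup \{d_{j+1}, d_{j+2}, \ldots\}$, as required once the order is complete). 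The key accounting is: since $x(A) \ge \beta = k^2 b + kb$ and $y(A) = 0$, the count of $A$ must drop by at least $\beta$ across $q$; each transition that decreases $c(A)$ by one and is not counted toward our $\alpha_1$ either (i) has another state in $\Phi_1 = \{A\}$ on its left/right side or (ii) is one of at most $k^2$ transition types that is invoked fewer than $b$ times. Transitions of type (ii) account for at most $k^2 b$ net decrease; so some single transition type that decreases $c(A)$ occurs $\ge b$ times — and if \emph{every} such transition type had another occurrence of $A$ among $s, o, o'$, I would need a separate argument, so I pick $\alpha_1$ to be a transition with the fewest $A$'s among $\{s, o, o'\}$ and show that number must be $0$.

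Here is the inductive step in more detail. Suppose $d_1, \ldots, d_j$ and $\alpha_1, \ldots, \alpha_j$ are chosen. Consider the states in $\Phi_j$. I would look at the \emph{last} occurrence in $q$ of any state of $\Phi_j$ being \emph{produced}; actually the cleaner route (this is the route of~\cite{CCDS15}) is to look at the suffix of $q$ after the final transition that produces some state of $\Phi_j$, and argue about the configuration $z$ at that point. Since $y(d) = 0$ is impossible to guarantee for all $d \in \Phi_j$ — we only know $y(d) \le b$ — I instead track the quantity $\sum_{d \in \Phi_j} c(d)$ and use that it starts at $\ge x(A) \ge \beta$ worth of $A$-mass (for $j=1$) and ends at $\le jb \le kb$, hence it decreases by at least $\beta - kb = k^2 b$; every transition that decreases $\sum_{d \in \Phi_j} c(d)$ without being of the innocent form must either touch two states of $\Phi_j$ or be one of $\le k^2$ rare ($<b$ times) transition types. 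Because the bound $\beta^2$ on bottleneck-freeness guarantees $x(A)$ and the relevant counts are large enough that no single transition invocation can be ``for free'' suppressed, I extract $\alpha_{j+1}$ as a transition occurring $\ge b$ times that decreases $\sum_{d \in \Phi_j} c(d)$ and whose $s, o, o'$ avoid $\Phi_j$; its consumed state in $\Phi_j$-complement-land I name... wait — the consumed state \emph{is} in $\Phi_j$? No: I want $d_{j+1}$ to be the state consumed. So more precisely: among the $\ge k^2 b$ units of decrease of $\sum_{d\in\Phi_j} c(d)$, after discarding the $< k^2 b$ units from rare transition types, some transition type $\alpha_{j+1}: (r_1, r_2) \to (o, o')$ occurs $\ge b$ times with $\{r_1, r_2\} \cap \Phi_j \ne \emptyset$ (it decreases the sum) while also being ``minimal'' — and I argue it can be chosen so that exactly one of $r_1, r_2$ is in $\Phi_j$ and $o, o' \notin \Phi_j$, by the same minimality trick, and then set $d_{j+1}$ to be this (unique) state of $\{r_1, r_2\} \setminus \Phi_j$...

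\emph{Correction to the above:} the role of $d_{j+1}$ must be the \emph{newly consumed} state, so $d_{j+1} \in \{r_1, r_2\}$ but $d_{j+1} \notin \Phi_j$, and the transition $\alpha_{j+1} = (d_{j+1}, s_{j+1}) \to (o_{j+1}, o_{j+1}')$ has $s_{j+1} \in \Phi_j \cup (\text{stuff})$ — but we need $s_{j+1}$ to avoid $\Phi_{j+1}$, which is impossible if $s_{j+1} \in \Phi_j$. So the right formulation is the one in the lemma statement: the \emph{other} participant $s_j$ and the outputs $o_j, o_j'$ must lie in $(\Lambda_k \setminus \Delta) \cup \{d_{j+1}, \ldots, d_m\}$ — i.e., they avoid the \emph{prefix} $\{d_1, \ldots, d_j\}$ but \emph{may} equal later states. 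So I build the order from the \emph{front}: $d_1 = A$; then $\alpha_1 = (A, s_1) \to (o_1, o_1')$ occurring $\ge b$ times with $s_1, o_1, o_1' \ne A$; recurse on the configuration dynamics having ``removed'' $A$. The recursion is: restrict attention to the sub-multiset of transitions of $q$ that never touch $A$ — these still form (by a projection/restriction argument) a valid transition sequence in the protocol on state set $\Lambda_k \setminus \{A\}$ taking the $A$-free part of $x$ to the $A$-free part of $y$ — and the states of $\Delta \setminus \{A\}$ that still have large starting count play the role of $A$. \textbf{The main obstacle} will be making this restriction argument rigorous: when we delete all transitions involving $d_1, \ldots, d_j$ from $q$, the remaining transitions need not compose into a legal sequence from $x$ restricted appropriately, because an intermediate count could have gone negative relative to the restricted configuration; the fix, following~\cite{CCDS15}, is to not literally delete transitions but to reorder/segment $q$ by the last production time of each $d_i$ and argue within the final segment where the ``already-ordered'' states are only consumed, never produced — there the counts are monotone and the bottleneck-freeness (with the generous $\beta^2$ slack, chosen precisely so that $k^2 b + kb = \beta$ units of slack are available at each of the $\le k$ levels) gives the required $\ge b$ repetitions. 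I expect the bulk of the work to be this bookkeeping of segments and the choice of $b$ versus $\beta$; the greedy extraction itself is then a short pigeonhole.
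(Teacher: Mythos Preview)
Your proposal circles the right ideas but misses the structural choice that makes the argument go through. You try to build the ordering \emph{forward} from $d_1 = A$, maintaining $\Phi_j = \{d_1,\ldots,d_j\}$; as you yourself notice, a transition that decreases $\sum_{d\in\Phi_j} c(d)$ consumes a state \emph{in} $\Phi_j$, whereas the lemma needs the partner $s_j$ and outputs $o_j,o_j'$ to \emph{avoid} the prefix $\{d_1,\ldots,d_j\}$. Your attempted fixes (projecting to $\Lambda_k\setminus\{A\}$; segmenting by last production times) do not close this gap, and the projection idea fails for a more basic reason than counts going negative: after removing $A$ there is no reason any remaining state in $\Delta$ should have large count in $x$, so the recursion has no anchor.

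The paper builds the ordering in \emph{reverse}. It maintains $\Delta_j\subseteq\Delta$ as the set of states \emph{not yet placed} (starting with $\Delta_{|\Delta|}=\Delta$ and shrinking) and tracks $\Phi_j(c)=\sum_{d\in\Delta_j} c(d)$. Because $A$ stays in $\Delta_j$ until the final step, one has $\Phi_j(x)\ge x(A)\ge\beta$ and $\Phi_j(y)\le |\Delta_j|\cdot b\le kb$ at \emph{every} stage, not just the first. Now take the suffix $r$ of $q$ after the last configuration with $\Phi_j\ge\beta$: along $r$ every $d\in\Delta_j$ satisfies $c(d)<\beta$. This is where the $\beta^2$-bottleneck-freeness is used, and it is the step you never make precise: for any transition $(r_1,r_2)\to(p_1,p_2)$ in $r$ that strictly decreases $\Phi_j$, say $r_1\in\Delta_j$, we have $c(r_1)<\beta$ but $c(r_1)c(r_2)>\beta^2$, hence $c(r_2)>\beta$ and $r_2\notin\Delta_j$; and then strict decrease forces $p_1,p_2\notin\Delta_j$ as well. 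The total drop of $\Phi_j$ across $r$ is at least $\beta-kb=k^2b$, so pigeonhole over $\le k^2$ transition types gives one occurring $\ge b$ times. Setting $e_j=r_1$ and iterating until $e_j=A$ yields, after re-indexing, exactly the condition $s_j,o_j,o_j'\in(\Lambda_k\setminus\Delta)\cup\{d_{j+1},\ldots,d_m\}$. No segmenting or projection is needed.
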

\begin{proof}
We know by definition that $A \in \Delta$.
We will construct the ordering in reverse, i.e. we will determine $e_j$ 
  for $j = |\Delta|, |\Delta|-1, \ldots$ in this order, until $e_j = A$.
Then, we set $m = |\Delta| - j + 1$ and 
  $d_1 = e_j, \ldots, d_m = e_{|\Delta|}$.

We start by setting $j = |\Delta|$.
Let $\Delta_{|\Delta|} = \Delta$.
At each step, we will define the next $\Delta_{j-1}$ as $\Delta_j \setminus \{e_j\}$.
We define $\Phi_j : (\confspacek) \to \mathbb{N}$ based on $\Delta_j$ as 
  $\Phi_j(c) = \sum_{d \in \Delta_j} c(d)$, i.e. the number of agents in states from $\Delta_j$ 
  in configuration $c$.
Notice that once $\Delta_j$ is well-defined, so is $\Phi_j$.

The following works for all $j$ as long as $e_{j'} \neq A$ for all $j' > j$, 
  and thus, lets us construct the ordering.
Because $y(d) \leq b$ for all states in $\Delta$, it follows that 
  $\Phi_j(y) \leq j b \leq k b$.
On the other hand, we know that $x(A) \geq \beta$ and $A \in \Delta_j$, 
  so $\Phi_j(x) \geq \beta \geq k b \geq \Phi_j(y)$.
Let $c'$ be the last configuration along $q$ from $x$ to $y$ where $\Phi_j(c') \geq \beta$,
  and $r$ be the suffix of $q$ after $c'$.
Then, $r$ must contain a subsequence of transitions $u$ each of which strictly decreases $\Phi_j$, 
  with the total decrease over all of $u$ being at least 
  $\Phi_j(c') - \Phi_j(y) \geq \beta - k b \geq k^2 b$.

Let $\alpha: (r_1, r_2) \to (p_1, p_2)$ be any transition in $u$.
$\alpha$ is in $u$ so it strictly decreases $\Phi_j$, and without loss of generality $r_1 \in \Delta_j$.
Transition $\alpha$ is not a $\beta^2$-bottleneck since $q$ does not contain such bottlenecks,
  and all configurations $c$ along $u$ have $c(d) < \beta$ for all $d \in \Delta_j$ by definition of $r$.
Hence, we must have $c(r_2) > \beta$ meaning $r_2 \not\in \Delta_j$.
Exactly one state in $\Delta_j$ decreases its count in transition $\alpha$, 
  but $\alpha$ strictly decreases $\Phi_j$,
  so it must be that both $p_1 \not\in \Delta_j$ and $p_2 \not\in \Delta_j$.
We take $d_j = r_1, s_j = r_2, o_j = p_1$ and $o_j' = p_2$.

There are $k^2$ different types of transitions.
Each transition in $u$ decreases $\Phi_j$ by one and there are at least 
  $k^2 b$ such instances, at least one transition type must repeat 
  in $u$ at least $b$ times, completing the proof.
\end{proof}
\begin{claim}
\label{clm:surgery}
Let $n'' = n' \cdot (m+1) + 2 \epsilon n$ and $i$ be an initial configuration
  of $n''$ agents consisting of $m+1$ copies of configuration $i_{n'}$ plus
  $2 \epsilon n$ agents in state $A$.
Then, $i \reach z$, for a configuration $z$, such that for all $s \in \Lambda_k$,
  if $z(s) > 0$ then $y_{n'}(s) > 0$.

Here $y_{n'}$ comes from the application of~\lemmaref{lem:bottlefree} and 
  $m$ is the size of the ordering on a subset of $\Delta$.
\end{claim}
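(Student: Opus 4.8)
The plan is to prove the inductive statement that precedes the claim in the proof sketch: for each $j$ with $1 \le j \le m$, starting from $\iota_j$ (which is $j$ copies of $i_{n'}$ plus $2\epsilon n$ agents in state $A$), there is a transition sequence $q_j$ reaching a configuration $z_j$ satisfying properties (1)--(3). I would carry out the induction on $j$, peeling off one state $d_j$ from the ordering at each step, and then read off the claim by taking $i = i_{n'} + \iota_m$ and $z = y_{n'} + z_m$, concatenating $q$ (from $i_{n'}$ to $y_{n'}$) with $q_m$; the well-formedness of the concatenation is checked exactly as in the sketch, using that every state $d \in \Lambda_k \setminus \Delta$ has $y_{n'}(d) \ge b(n) \ge 3^k(2\epsilon n) \ge 3^m(2\epsilon n)$ to absorb the negative excursions allowed by property (2), and that property (1) keeps all $\Delta$-counts nonnegative along $q_m$ while $q$ itself never goes negative.

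For the base case $j=1$: we take one copy of $i_{n'}$, run the bottleneck-free sequence $q$ to reach $y_{n'}$, and then we have $2\epsilon n$ extra agents in state $A = d_1$. The only state we are still allowed to disturb freely is $d_1$ (since $\Delta - \{d_2,\dots,d_m\} = \{d_1\}$ together with any other zero-count states in $\Delta$). We do nothing further, so $z_1 = y_{n'} + (2\epsilon n \text{ copies of } A)$; property (1) holds because $y_{n'}(d_1) = y_{n'}(A) = 0$ by Lemma~\ref{lem:minzero}, but $z_1(d_1) = 2\epsilon n \ne 0$ --- so in fact the base case must already begin to use the transition $\alpha_1$ to drain these $2\epsilon n$ agents out of $A$. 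The inductive step is the real content: given $q_{j-1}$ on $\iota_{j-1}$, we add one more copy of $i_{n'}$, run $q$ on it to produce another $y_{n'}$, which supplies $y_{n'}(d_j)$ agents in state $d_j$ plus a full fresh copy of every non-$\Delta$ state; then we invoke the transition $\alpha_j : (d_j, s_j) \to (o_j, o_j')$, which occurs $\ge b$ times in $q$ and whose partner/product states $s_j, o_j, o_j'$ all lie in $(\Lambda_k \setminus \Delta) \cup \{d_{j+1},\dots,d_m\}$, to "pump" the leftover $d_j$-count down to zero (if $y_{n'}(d_j)=0$) or to the required small bound, pushing the imbalance into states $d_{j+1},\dots,d_m$ and into the replenished non-$\Delta$ states. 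The bookkeeping is that firing $\alpha_j$ roughly triples the worst-case deficit in the remaining states, which is exactly why the bound in property (2) is $3^j(2\epsilon n)$ and why we need $m+1$ copies of $i_{n'}$ rather than one.

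The main obstacle I anticipate is the precise accounting in the inductive step: verifying that after running $q$ on the fresh copy there really are enough agents in state $d_j$ (and in the partner state $s_j$, which may itself be one of $d_{j+1},\dots,d_m$ and hence only available in bounded, possibly negative, count) to fire $\alpha_j$ the required number of times, and that doing so leaves the counts of $d_1,\dots,d_j$ clean (nonnegative throughout, and exactly zero when $y_{n'}$ is zero there) while only degrading $d_{j+1},\dots,d_m$ and the non-$\Delta$ states within the $3^j(2\epsilon n)$ budget. The subtlety is that $\alpha_j$ may consume a state earlier in the final ordering only if that state is not in $\Delta$ at all --- the ordering guarantees $s_j, o_j, o_j' \notin \{d_1,\dots,d_j\}$ --- so the drain of $d_j$ never re-pollutes an already-cleaned state; one has to track this invariant carefully across all $m$ steps and confirm the constant $3$ in the geometric blow-up is actually achieved (each unit of leftover $d_j$, when removed by $\alpha_j$, can create at most one new $d_{j+1},\dots,d_m$ deficit per product slot, plus the pre-existing deficit, giving the factor $3$). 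Once that invariant is nailed down, the concatenation argument and the final contradiction with monotonicity and output dominance follow as in the sketch.
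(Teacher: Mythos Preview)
Your overall structure matches the paper's: induction on $j$, then concatenate one more copy of $q$ at the end and use the $y_{n'}$ it produces to absorb the negative excursions in non-$\Delta$ states. Your base case (after your self-correction) is also exactly right: run $q$, then append $2\epsilon n$ copies of $\alpha_1$ to drain the surplus $A$'s.

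The gap is in the inductive step, and it is precisely the obstacle you flag but do not resolve. Appending copies of $\alpha_j$ at the end only lets you \emph{decrease} the count of $d_j$; it cannot repair a \emph{deficit}, nor can it retroactively make the count of $d_j$ nonnegative \emph{throughout} $q_j$, which is what property~(1) demands. Concretely: during $q_{j-1}$ the count of $d_j$ may dip as low as $-3^{j-1}(2\epsilon n)$ by property~(2), and running an unmodified $q$ on the fresh copy beforehand contributes only $y_{n'}(d_j)$ to the buffer, which may well be $0$ since $d_j \in \Delta$. Likewise the final count $y_{n'}(d_j) + z_{j-1}(d_j)$ may be negative, and then no amount of appending $\alpha_j$ brings it to $0$. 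The paper's missing ingredient is to also \emph{remove} up to $3^{j-1}(2\epsilon n)$ occurrences of $\alpha_j$ from the fresh copy of $q$ before running it (this is legal because $\alpha_j$ occurs at least $b(n) \ge 3^{j-1}(2\epsilon n)$ times in $q$); each removal leaves one extra agent in $d_j$ at the end of the modified $q$, creating exactly the surplus needed to keep $d_j$ nonnegative through $q_{j-1}$. The combination of removals (to fix deficits and mid-sequence negativity) and appends (to fix surpluses) is what makes the bookkeeping close, and the factor $3$ you correctly intuited arises from the interaction of both operations with the states $s_j, o_j, o_j'$.
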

\begin{proof}
In this proof, we consider transition sequences that might temporarily
  bring counts of agents in certain states below zero.
This will not be a problem because later we add more agents in these states, 
  so that the final transition sequence is well-formed,
  meaning that no count ever falls below zero.  

We do the following induction.
For every $j$ with $1 \leq j \leq m$, consider an initial configuration $\iota_j$ 
  consisting of $j$ copies of configuration $i_{n'}$ plus $2 \epsilon n$ agents in state $A$.
Then, there exists a transition sequence $q_j$ from $\iota_j$ 
  that leads to a configuration $z_j$, with the following properties:
\begin{itemize}
\item[1.] For any $d \in \Delta - \{d_{j+1}, \ldots, d_m\}$,
  the count of agents in $d$ remains non-negative throughout $q_j$.
  Moreover, if $y_{n'}(d) = 0$, then $z_j(d) = 0$.
\item[2.] For any $d \not \in \Delta - \{d_{j+1}, \ldots, d_m\}$
    the minimum count of agents in $d$ during $q_j$ is $\geq -3^j \cdot (2 \epsilon n)$.
\item[3.] For any $d \in \{d_{j+1}, \ldots, d_m\}$,
  if $y_{n'}(d) = 0$, then $|z_j(d)| \leq 3^j \cdot (2 \epsilon n)$.
\end{itemize}

\paragraph{The base case} Consider $j = 1$.
Here $\iota_1$ is simply $i_{n'}$ combined with $2 \epsilon n$ agents in state $A$.
We know $i_{n'} \reach_q y_{n'}$. 
Thus, from $\iota_1$ by the same transition sequence $q$ 
  we reach a configuration $y_{n'}$ plus $2 \epsilon n$ agents in state $d_1 = A$.
Moreover, by suffix transition ordering lemma, we know that transition $\alpha_1$
  of form $(A, s_1) \to (o_1, o_1')$ occurs at least $b(n) \geq (2 \epsilon n)$
  times in $q$.
We add $2 \epsilon n$ occurences of transition $\alpha_1$ at the end of $q$
  and let $q_1$ be the resulting transition sequence.
$z_1$ is the configuration reached by $q_1$ from $\iota_1$.  

For any $d \in \Lambda_k$, during the transition sequence $q$, 
  the counts of agents are non-negative.
In the configuration after $q$, the count of agents in state $d_1 = A$
  is $y_{n'}(A) + 2 \epsilon n = 2 \epsilon n$, and during the 
  remaining transitions of $q_1$ ($2 \epsilon n$ occurences of $\alpha_1$),
  the count of agents in $A$ remains non-negative and 
  reaches $z_1(d_1) = 0$ as required (since $y_{n'}(d_1) = y_{n'}(A) = 0$).
$s_1, o_1, o_1' \in (\Lambda_k \setminus \Delta) \cup \{d_2, \ldots d_m\}$
  implies that for any state $d \in \Delta \setminus \{d_1, d_2, \ldots, d_m\}$,
  the count of agents in $d$ remains unchanged and non-negative 
  for the rest of $q_1$.
Moreover, $z_1(d) = y_{n'}(d)$, thus if $y_{n'}(d) = 0$ then $z_1(d) = 0$.
This completes the proof of the first property.

Now, consider any $d \not\in \Delta - \{d_2, \ldots, d_{m}\}$.
The count of $d$ is non-negative during $q$, 
  and might decrease by at most $2 \epsilon n < 3^j \cdot (2 \epsilon n)$
  during the remaining $2\epsilon n$ occurences of transition $\alpha_1$ in $q_1$
  (achieved only when $s_1 = d$ and $s_1 \neq o_1, o_1'$).
This proves the second property.

The final count of any state in $z_1$ differs by at most 
  $2 \cdot (2 \epsilon n) \leq 3^j \cdot (2 \epsilon n)$
  from the count of the same state in $y_{n'}$.
  (the only states with different counts can be $s_1, o_1$ and $o_1'$, and the largest 
  possible difference of precisely $2 \cdot (2 \epsilon n)$ is attained when $o_1 = o_1'$).
This implies the third property.

\paragraph{Inductive step}
We assume the inductive hypothesis for some $j < m$ and prove it for $j+1$.
Inductive hypothesis gives us configuration $\iota_j$ and a transition sequence $q_j$ 
  to another configuration $z_j$, satisfying the three properties for $j$.
We have $\iota_{j+1} = i_{n'} + \iota_j$, 
  adding another new configuration $i_{n'}$ to previous $\iota_j$. 

Let $u$ be the minimum count of state $d_{j+1}$ during $q_j$.
If $u \geq 0$, we let $q_{j+1}^1 = q$.
Otherwise, we remove $|u| \leq 3^j \cdot (2 \epsilon n) \leq b(n)$ instances
  of transition $\alpha_{j+1}$ from $q$, 
  and call the resulting transition sequence $q_{j+1}^1$.

Now from $\iota_{j+1} = i_{n'} + \iota_j$ consider performing 
  transition sequence $q_{j+1}^1$ followed by $q_j$.
$q_{j+1}^1$ affects the extra configuration $i_{n'}$ 
  (difference between $\iota_j$ and $\iota_{j+1}$),
  and produces $|u|$ extra agents in state $d_{j+1}$ if $u$ was negative.  
Now, when $q_j$ is performed afterwards, 
  the count of state $d_{j+1}$ never becomes negative.  

Let $v$ be the count of $d_{j+1}$ in the configuration reached by 
  the transition sequence $q_{j+1}^1$ followed by $q_j$ from $\iota_{j+1}$.
Since the count never becomes negative, we have $v \geq 0$.
If $y_{n'}(d_{j+1}) > 0$, then we let this sequence be $q_{j+1}$.
If $y_{n'}(d_{j+1}) = 0$, then we add $v$ occurences of transition $\alpha_{j+1}$, 
  i.e. $q_{j+1}$ is $q_{j+1}^1$ followed by $q_j$ 
  followed by $v$ times $\alpha_{j+1}$.
The configuration reached from $\iota_{j+1}$ by $q_{j+1}$ is $z_{j+1}$.

Consider $d \in \Delta - \{d_{j+2}, \ldots, d_m\}$.
For $d = d_{j+1}$, if $y_{n'}(d_{j+1}) = 0$, then we ensured that $z_{j+1}(d_{j+1}) = 0$
  by adding $v$ occurences of transitions $\alpha_{j+1}$ at the end.
In fact, by construction, the count of agents in $d_{j+1}$ 
  never becomes negative during $q_{j+1}$.
It does not become negative during $q_{j+1}^1$ and 
  the $|u|$ extra agents in state $d_{j+1}$ that are introduced 
  ensure futher non-negativity of the count during $q_j$. 
Finally, if the count is positive and $y_{n'}(d_{j+1}) = 0$, 
  it will be reduced to $0$ by the additional occurences of transition $\alpha_{j+1}$, 
  but it will not become negative.
For $d \in \Delta - \{d_{j+1}, d_{j+2}, \ldots, d_m\}$,
  recall that $\alpha_{j+1} = (d_{j+1}, s_{j+1}) \to (o_{j+1}, o_{j+1}')$, where
  $s_{j+1}, o_{j+1}, o_{j+1}' \in (\Lambda_k \setminus \Delta) \cup \{d_{j+2}, \ldots d_m\}$.
Thus, none of $s_{j+1}, o_{j+1}, o_{j+1}'$ are equal to $d$.
This implies that the count of agents in $d$ remain non-negative during $q_{j+1}$
  as the removal and addition of $\alpha_{j+1}$ does not affect the count
  (count is otherwise non-negative during $q$; also during $q_j$ by inductive hypothesis).
If $y_{n'}(d) = 0$, we have $z_{j+1}(d) = z_j(d) + y_{n'}(d) = 0$, as desired.
This proves the first property.

The states for which the minimum count of agents during $q_{j+1}$ 
  might be smaller than during $q_j$ are $s_{j+1}, o_{j+1}$ and $o_{j+1}'$.
Let us first consider $o_{j+1}$ and $o_{j+1}'$.
In our construction, we might have removed at most $3^j \cdot (2 \epsilon n)$ 
  occurences of $\alpha_{j+1}$ from $q$ to get $q_{j+1}^1$, and the largest decrease 
  of count would happen by $2 \cdot 3^j \cdot (2 \epsilon n)$ if $o_{j+1} = o_{j+1}'$.
Adding transitions $\alpha_{j+1}$ at the end only increases the count of $o_{j+1}$ and $o_{j+1}'$. 
Therefore, the minimum count of agents for these two states is 
  $-3^j \cdot (2 \epsilon n) - 2 \cdot 3^j \cdot (2 \epsilon n) = -3^{j+1} \cdot (2 \epsilon n)$,
  as desired.
Now consider state $s_{j+1}$.
We can assume $s_{j+1} \neq o_{j+1}, o_{j+1}'$ as otherwise,
  the counts would either not change or can be analyzed as above for $o_{j+1}$.
Removing occurences of transition $\alpha_{j+1}$ only increases count of $s_{j+1}$,
  and it only decreases if we add $v$ occurences of $\alpha_{j+1}$ at the end to get
  the count of $d_{j+1}$ to $0$.
Since $y_{n'}(d_{j+1})$ should be $0$ in this case in order for us to add transitions at the end,
  we know $v = z_j(d_{j+1})$ if $u \geq 0$, and $v = z_j(d_{j+1}) + |u|$ if $u < 0$. 
In the second case, we remove $|u|$ occurences before adding $v$ occurences,
  so the minimum count in both cases decreases by at most $|z_j(d_{j+1})|$.
By induction hypothesis the minimum count is $\geq -3^j \cdot (2 \epsilon n)$
  and $|z_j(d_{j+1})| \leq 3^j \cdot (2 \epsilon n)$, so the new minimum 
  count of $s_{j+1}$ is $\geq -2 \cdot 3^j \cdot (2 \epsilon n) \geq -3^{j+1} \cdot (2 \epsilon n)$.
This proves the second property.

In order to bound the maximum new $|z_{j+1}(d)|$ for $d \in \{d_{j+2}, \ldots, d_m\}$ with
  $y_{n'}(d) = 0$, we take a similar approach.
Since $y_{n'}(d) = 0$, if $|z_{j+1}(d)|$ differs from $|z_j(d)|$, then $d$ 
  must be either $s_{j+1}, o_{j+1}$ or $o_{j+1}'$.
The minimum negative value that $z_{j+1}(d)$ can achieve can be shown to be
  $3^{j+1} \cdot (2 \epsilon n)$ with the same argument as in the previous paragraph - 
  considering $d = o_{j+1} = o_{j+1}'$ and $d = s_{j+1}$ and estimating 
  the maximum possible decrease, combined with $|z_j(d)| \leq 3^j \cdot (2 \epsilon n)$.
Let us now bound the maximum positive value.
If $d = o_{j+1} = o_{j+1}'$, the increase caused by $v$ additional occurences of $\alpha_{j+1}$
  at the end of $q_{j+1}$ is $2v$.
As before, $v = z_j(d_{j+1})$ if $u \geq 0$, and $v = z_j(d_{j+1}) + |u|$ if $u < 0$,
  and in the second case, we also decrease the count of $d$ by $2 |u|$ 
  when removing $|u|$ occurences of $\alpha_{j+1}$ to build $q_{j+1}^1$ from $q$.
Thus, the maximum increase is $2 |z_j(d_{j+1})| \leq 2 \cdot 3^j \cdot (2 \epsilon n)$.
If $d = s_{j+1}$, then the only increase comes from at most $|u| \leq 3^j \cdot (2 \epsilon n)$
  removed occurences of $\alpha_{j+1}$.
Therefore, the maximum positive value of $z_{j+1}(d)$ equals
  maximum positive value of $z_j(d)$ which is $3^j \cdot (2 \epsilon n)$ plus
  the maximum possible increase of $2 \cdot 3^j \cdot (2 \epsilon n)$,
  giving $3^{j+1} \cdot (2 \epsilon n)$ as desired.
This completes the proof for the third property and of the induction.

\paragraph{The rest of the proof}
We take $i = i_{n'} + \iota_m$ and $z = y_{n'} + z_m$.
The transition sequence $p$ from $i$ to $z$ starts by
  $q$ from $i_{n'}$ to $y_{n'}$, followed by $q_m$.

By the first property of $q_m$, and the fact that no count is ever 
  negative in $q$ from $i_{n'}$ to $y_{n'}$, for any $d \in \Delta$, 
  the count of agents in state $d$ never becomes negative during $p$.
Next, consider any state $d \in \Lambda_k \setminus \Delta$.
By the second property, when $q_m$ is executed from $\iota_m$ to $z_m$, 
  the minimum possible count in $q_m$ is $-3^m \cdot (2 \epsilon n)$.
However, in transition sequence $p$, $q_m$ from $\iota_m$ to $z_m$ follows $q$, 
  and after $q$ we have an extra configuration $y_{n'}$ in the system.
By the definition of $\Delta$,
  $y_{n'}(d) \geq b(n) \geq 3^k \cdot (2 \epsilon n) \geq 3^m \cdot (2 \epsilon n)$.
Therefore, the count of agents in $d$ also never becomes negative
  during $p$, and thus the final transition sequence $p$ is well-formed.

Now, consider a state $s$, such that $y_{n'}(s) = 0$.
We only need to show that $z(s) = 0$.
By definition of $\Delta$, we have $s \in \Delta$, and the first property implies 
  $z(s) = z_m(s) = 0$, completing the proof.
\end{proof}

\section{Leaderless Phase Clock}
\begin{corollary}
\label{cor:ptw}
Given the above process, the following holds:
Suppose $c$ is a configuration with $G(c) \leq \gamma \log n$, for some constant $\gamma$.
Then, for any constant parameter $\beta$, there exists a constant $\gamma'(\beta)$,
  such that with probability $1-m/n^{\beta}$,
  for each configuration $c'$ reached by the $m$ interactions following $c$,
  it holds that $G(c') < \gamma'(\beta) \log{n}$.
\end{corollary}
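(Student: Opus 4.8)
The plan is to translate the gap bound into a bound on the exponential potential $\Gamma$ of \lemmaref{lem:ptw}, show that $\E[\Gamma(t)]$ stays polynomially bounded for \emph{every} $t$, and then finish with Markov's inequality together with a union bound over the $m$ steps.

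First I would record two elementary translations between $G$ and $\Gamma$. Since the values $x_\ell$ have mean $0$, any configuration $c'$ with $G(c') \ge \gamma'\log n$ has $\max_\ell x_\ell(c') \ge 0 \ge \min_\ell x_\ell(c')$ and $\max_\ell x_\ell(c') - \min_\ell x_\ell(c') \ge \gamma'\log n$, so some bin $\ell$ satisfies $|x_\ell(c')| \ge (\gamma'/2)\log n$; hence $\Gamma(c') \ge 2\cosh\!\big(\alpha(\gamma'/2)\log n\big) \ge n^{\alpha\gamma'/2}$. In the other direction, $G(c) \le \gamma\log n$ forces $|x_\ell(c)| \le \gamma\log n$ for all $\ell$ (each $x_\ell$ lies between $\min_\ell x_\ell$ and $\max_\ell x_\ell$, which differ by at most $\gamma\log n$ and straddle $0$), so $\Gamma(c) \le 2n\exp(\alpha\gamma\log n) = 2n^{1+\alpha\gamma}$.

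Next I would iterate \lemmaref{lem:ptw}. Taking full expectations of the one-step inequality and summing the resulting geometric series gives, for every $t\ge 0$,
\[
\E[\Gamma(t)] \;\le\; (1-\tfrac{\alpha}{n})^t\,\Gamma(0) + \theta\sum_{i=0}^{t-1}(1-\tfrac{\alpha}{n})^i \;\le\; \Gamma(0) + \tfrac{\theta n}{\alpha} \;\le\; 3\,n^{1+\alpha\gamma}
\]
for all sufficiently large $n$, using the bound on $\Gamma(0)=\Gamma(c)$ from the previous step. Now fix $t\in\{1,\dots,m\}$ and let $c_t$ be the configuration after $t$ interactions following $c$. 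By Markov's inequality and the first translation above,
\[
\Pr\big[G(c_t) \ge \gamma'\log n\big] \;\le\; \Pr\big[\Gamma(t) \ge n^{\alpha\gamma'/2}\big] \;\le\; \frac{\E[\Gamma(t)]}{n^{\alpha\gamma'/2}} \;\le\; 3\,n^{1+\alpha\gamma-\alpha\gamma'/2}.
\]
Choosing $\gamma'(\beta) := \max\!\big(\gamma+1,\ \tfrac{2}{\alpha}(\beta+2+\alpha\gamma)\big)$ makes the exponent at most $-(\beta+1)$, so this per-step probability is at most $n^{-\beta-1}$ for large $n$; a union bound over the $m$ configurations $c_1,\dots,c_m$ yields total failure probability at most $m/n^{\beta}$, which is exactly the claim.

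The argument is essentially routine once \lemmaref{lem:ptw} is available; the only points that require care are keeping the constants $\gamma,\gamma',\alpha,\theta,\beta$ straight so that the final exponent of $n$ is strictly negative with enough slack to absorb the constant factor $3$ and the multiplier $m$, and remembering that the potential bound is being applied to the unbounded two-choice load-balancing process, which coincides with the phase-clock dynamics precisely in the regime (gap $<\rho$) in which this corollary is invoked. I do not anticipate a genuine obstacle.
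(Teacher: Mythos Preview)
Your proof is correct and follows essentially the same route as the paper's: bound $\Gamma(0)$ using the initial gap assumption, iterate \lemmaref{lem:ptw} to keep $\E[\Gamma(t)]$ polynomially bounded, apply Markov's inequality, and finish with a union bound over the $m$ steps. You are in fact somewhat more explicit than the paper---you spell out the geometric-series bound $\E[\Gamma(t)]\le\Gamma(0)+\theta n/\alpha$ and the two translations between $G$ and $\Gamma$, whereas the paper appeals informally to the drift inequality and to ``convexity of the exponential''---and you even arrive at the same constant $\gamma'(\beta)=2\gamma+(2\beta+4)/\alpha$. The only cosmetic slip is the sentence ``per-step probability is at most $n^{-\beta-1}$'': what you actually get is $3n^{-(\beta+1)}$, but since the union bound then yields $3m/n^{\beta+1}\le m/n^{\beta}$ for $n\ge 3$, the conclusion stands.
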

\begin{proof}
We let $\gamma'(\beta) = 2\gamma + \frac{4 + 2\beta}{\alpha}$, 
  where $\alpha$ is the constant from~\lemmaref{lem:ptw},
  and let $\rho = \gamma'(\beta) \log{n}$.
As discussed in~\sectionref{sec:sync}, since we are counting the number of steps from configuration $c$,
  where the gap is less than $\rho$, until the gap becomes $\geq \rho$,
  we can instead analyze the unbounded two-choice process.
In the two choice process, $\Gamma(0)$ corresponds to the potential in configuration $c$.
By simple bounding, we must have that $\Gamma(0) \leq 2 n^{\alpha \gamma + 1}$. 
Assume without loss of generality that $\Gamma(0) = 2 n^{\alpha \gamma + 1}$.

It has already been established by~\lemmaref{lem:ptw} that
$$\E [ \Gamma (t + 1) | \Gamma(t) ] \leq \left(  1 - \frac{\alpha}{n}  \right) \Gamma(t) + \theta.$$
This implies that $\Gamma(t)$ will always tend to \emph{decrease} until it reaches the threshold $\Theta(n)$\footnote{By applying expectation and telescoping, as in the proof of Theorem 2.10 in~\cite{PTW15}.}.
  So, its expectation will always be below its level at step $0$ (in configuration $c$). 
	
Hence, we have that, for any $t \geq 0$, 
$$\E [ \Gamma (t) ] \leq 2 n^{\alpha \gamma + 1}.$$

\noindent By Markov's inequality, we will obtain that 
$$\Pr [ \Gamma (t) \geq n^{\alpha\gamma + 2 + \beta} ] \leq 1/n^{\beta}.$$
	
\noindent It follows by convexity of the exponential and the definition of $\Gamma$ that for each $c'$,
$$\Pr[ G(c') \geq 2(\gamma + (2+\beta)/\alpha) \log{n} ] \leq 1/n^{\beta}.$$
Setting $\rho = \gamma'(\beta) = 2\gamma + \frac{4 + 2\beta}{\alpha}$ and 
  taking union bound over the above event for $m$ steps following configuration $c$ completes the proof.
\end{proof}

\section{Majority Upper Bound}
\label{app:majupper}
\begin{figure}[ht]
\hrule
\DontPrintSemicolon
{\centering
{\small
\begin{algorithm}[H]
\SetKwInput{KwState}{Parameters}
\KwState{\;
$\rho,$ an integer $>0$, set to $\Theta(\log{n})$\;
$T_c < \rho,$ an integer $>0$, threshold for \lit{clock-creation}\;
}
\SetKwInput{KwState}{State Space}
\KwState{\;
\begin{tabular}{l l}
$\id{Worker States} =$ & $\id{.phase} \in \{ 1, \ldots, 2\log{n} + 1 \}$, \\
                       & $\id{.value} \in \{0, 1/2, 1\}$ \\
                       & $\id{.preference} \in \{\id{WIN}_A, \id{WIN}_B\}$; \\
$\id{Clock States} =$  & $\id{.position} \in \{ 0, \Psi - 1 \}$,\\
                       & $\id{.preference} \in \{\id{WIN}_A, \id{WIN}_B\}$; \\
$\id{Backup States} =$ & 4 states from the protocol of~\cite{DV12}; \\
$\id{Terminator States} = $ & $\{ D_A, D_B \}$.\\
\end{tabular}
\newline
Additional two bit-flags in every state
\begin{tabular}{l l}
                       & $\id{.InitialState} \in \{A, B\}$ \\
                       & $\id{.clock-creation} \in \{\id{true}, \id{false}\}$; \\
\end{tabular}
}

\KwIn{States of two agents, $S_1$ and $S_2$}
\KwOut{Updated states $S_1' = \lit{update}(S_1, S_2)$ and $S_2' = \lit{update}(S_2, S_1)$}

\SetKwInput{KwState}{Auxiliary Procedures}

\KwState{\;
$\id{backup}( S ) = \left\{ 
 \begin{array}{ll} 
  A_{\textnormal{\cite{DV12}}} & \textnormal{if $S.\id{InitialState} = A$; } \\
  B_{\textnormal{\cite{DV12}}} & \textnormal{otherwise.}
  \end{array} 
  \right. $

$\id{term-preference}( S ) = \left\{ 
 \begin{array}{ll} 
  D_A          & \textnormal{if } S = D_A \textbf{ or } S.\id{preference} = \id{WIN}_A \\
  D_B          & \textnormal{if } S = D_B \textbf{ or } S.\id{preference} = \id{WIN}_B \\
  \end{array} 
  \right. $

$\id{pref-conflict}( S, O ) = \left\{ 
 \begin{array}{ll} 
  \id{true}        & \id{term-preference}(S) \neq \id{term-preference}(O) \\
  \id{false}       & \textnormal{otherwise.}
  \end{array} 
  \right. $

$\id{is-strong}( S ) = \left\{ 
 \begin{array}{ll} 
  \id{true}    & \textnormal{if } S \in \id{Worker States} \textbf{ and } S.\id{value} \neq 0 \\
  \id{false}   & \textnormal{otherwise.}
  \end{array} 
  \right. $

$\id{clock-label}( O ) = \left\{ 
 \begin{array}{ll} 
  0          & \textnormal{if } O.\id{position} \in [2\rho, 3\rho) \\
  1          & \textnormal{if } O.\id{position} \in [0, \rho) \\
  -1        & \textnormal{otherwise.}
  \end{array} 
  \right. $

$\id{inc-phase}( \phi, O) = \left\{ 
 \begin{array}{ll} 
  \id{true}    & \textnormal{if } \phi=O.\id{phase}-1 
                 \textbf{ or } \phi \textnormal{ \emph{mod} } 2 = 1 - \id{clock-label}(O) \\
  \id{false}   & \textnormal{otherwise.}
  \end{array} 
  \right. $
}

\BlankLine
\textbf{procedure} $\lit{update}\langle S, O \rangle$\;
{
\Indp

\If{$S \in \id{Backup States}$ \textbf{or} $O \in \id{Backup States}$}
{
  \If{$S \in \id{Backup States}$ \textbf{and} $O \in \id{Backup States}$}
  {
     $S' \gets \lit{update}_{\textnormal{\cite{DV12}}}(S, O)$ \nllabel{line:4backup} 
  } \uElseIf{$O \in \id{Backup States}$}
  {
    $S' \gets \id{backup}(S)$ \nllabel{line:turnbackup}
  } \lElse{$S' \gets S$}

  \Return $S'$
} 

\tcp{Backup states processed, below $S$ and $O$ are not in backup states}

\If{$S \in \id{Terminator States}$ \textbf{or} $O \in \id{Terminator States}$}
{
  \If{$\id{pref-conflict}(S, O) = \id{false}$}
  {
    $S' \gets \id{term-preference}(S)$
  } \lElse{$S' \gets \id{backup}(S)$}
  \Return $S'$
} 

\Indm
}
\end{algorithm}}}
\hrule
\caption{Pseudocode for the phased majority algorithm, part 1/2}
\label{fig:phmajcode1}
\end{figure}

\begin{figure}[ht]
\hrule
\DontPrintSemicolon
{\centering
{\small
\begin{algorithm}[H]
\setcounter{AlgoLine}{20}
\tcp{Below, both $S$ and $O$ are workers or clocks}

$S' \gets S$

\If{$O.\id{clock-creation} = \id{false}$}
{
  $S'.\id{clock-creation} \gets \id{false}$
}

\If{$\id{is-strong}(S) = \id{false}$ \textbf{and} $\id{is-strong}(O) = \id{true}$}
{
  $S'.\id{preference} \gets O.\id{preference}$
}

\tcp{Clock creation flag and preference updated (always)}

\If{$S \in \id{Clock States}$}
{
  \If{$O \in \id{Clock States}$}
  {
    \tcc{Update $S'.\id{Position}$ according to~\sectionref{sec:sync}. If gap between $S.\id{position}$ and $O.\id{position}$ not less than $\rho$, set $S' \gets \id{backup}(S)$. If $S.\id{position} \geq T_c$, set $S'.\id{clock-creation} \gets \id{false}$.}
  } 
  \Return $S'$
}

\tcp{Below, $S$ is a worker and $O$ is a worker or a clock}

$\phi \gets S.\id{phase}$

\If{$\id{inc-phase}(\phi, O) = \id{true}$}
{
  \If{$\phi = 2\log{n} + 1$ \textbf{ or } ($\phi \textnormal{ \emph{mod} } 2 = 0$ \textbf{ and } $S.\id{value} = 1$)}
  {
    $S' \gets \id{term-preference}(S)$
  } \Else {
    $S.\id{phase} = \phi + 1$

    \If{$\phi \textnormal{ \emph{mod} } 2 = 0$ \textbf{ and } $S.\id{value} = 1/2$}
    {
      $S'.\id{value} = 1$
    }
  }
  \Return $S'$
}

\If{$O \in \id{Clock States}$}
{
  \Return $S'$
}
\tcp{Below, $S$ is a worker and $O$ is a worker}

\If{$|S.\id{phase} - O.\id{phase}| > 1$}
{
  $S' \gets \id{backup}(S)$
  
  \Return $S'$
}

\tcp{Below, worker meets worker within the same phase}
\If{$\phi \textnormal{ \emph{mod} } 2 = 1$}
{
  \tcp{Cancellation phase}
  \If{$S.\id{value} = 1$ \textbf{ and } $O.\id{value} = 1$ \textbf{ and } $\id{pref-conflict}(S, O) = \id{true}$}
  {
    \If{$S'.\id{clock-creation} = true$ \textbf{ and } $S.\id{preference} = \id{WIN}_A$}
    {
      $S' \gets \id{clock}(.\id{position} = 0, .\id{preference} = S.\id{preference})$
    } \Else {$S'.\id{value} \gets 0$}
  }
} \Else {
  \tcp{Doubling phase}
  \If{$S.\id{value} + O.\id{value} = 1$}
  {
    $S'.\id{value} = 1/2$
  }
}

\Return $S'$

\end{algorithm}}}
\hrule
\caption{Pseudocode for the phased majority algorithm, part 2/2}
\label{fig:phmajcode2}
\end{figure}
\begin{lemma}
\label{lem:clockcnt}
In any reachable configuration of the phased majority algorithm
  from valid initial configurations,
  the number of clock agents is at most $n/2$.
\end{lemma}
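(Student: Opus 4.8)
The plan is to prove this as an invariant maintained along every legal execution, by strong induction on the number of interactions. The key observation is a \emph{charging} or \emph{conservation} argument: a clock agent is created only during a cancellation-phase interaction between two workers with value~$1$ and \emph{conflicting} preferences, and in that interaction the \emph{other} participating worker (the one with preference $\id{WIN}_B$) does \emph{not} become a clock; it has its value set to~$0$ and stays a worker (or, per the phase rules, possibly becomes a terminator, but in any case is not a clock). So each clock-creation event can be injectively charged to a distinct non-clock agent that was "consumed" (lost its value-$1$ status) at the same instant. First I would make this precise by defining, for a configuration $c$, the quantity $N_{\mathrm{clk}}(c)$ = number of clock agents, and showing it only increases via this one transition rule, and that whenever it increases by~$1$, some specific other agent simultaneously transitions from a value-$1$ worker to a non-strong state.

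The cleanest way to package the charging is to exhibit a stronger invariant that directly implies the bound. A natural candidate: in any reachable configuration, $N_{\mathrm{clk}}(c) \le n - N_{\mathrm{clk}}(c)$, i.e.\ the number of non-clock agents is always at least the number of clock agents. Equivalently, track a potential such as (number of clocks) $-$ (number of agents that are value-$1$ $\id{WIN}_B$ workers $+$ \dots), but the most robust route is to show that one can \emph{pair up} every clock with a distinct non-clock agent. Concretely: maintain an injective map $\sigma$ from the current set of clock agents to the set of non-clock agents, built incrementally — when a clock is born in an interaction between a $\id{WIN}_A$ value-$1$ worker (which becomes the clock) and a $\id{WIN}_B$ value-$1$ worker, set $\sigma(\text{new clock}) := $ that $\id{WIN}_B$ agent; and argue the map stays well-defined and injective as the execution proceeds. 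The subtle points are (i) the target agent might later itself want to become a clock — but it cannot, because to become a clock an agent must currently be a value-$1$ $\id{WIN}_A$ worker, and once an agent's value drops to~$0$ it can only return to a strong value via a doubling phase, and one would have to check whether it can cycle back to value~$1$ with preference $\id{WIN}_A$ and still have \lit{clock-creation}=\id{true}; if so, reassign $\sigma$ so that the moment it becomes a clock, we re-point using the \emph{new} cancellation that creates it, freeing the old slot. (ii) A clock never reverts to a worker, so clocks are only added, never removed, which is what makes a monotone injective map natural to maintain. (iii) Agents turning into backups or terminators only removes agents from both sides and cannot break injectivity.

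I expect the main obstacle to be case~(i) above: controlling the fact that the agent charged for a clock's creation could, in principle, regain value~$1$ and preference $\id{WIN}_A$ and then itself be consumed to create \emph{another} clock. The resolution I anticipate is that each \emph{clock-creation event} already consumes a fresh distinct $\id{WIN}_B$ value-$1$ worker at that event, so rather than maintaining a global map across time I would instead argue locally: at the moment clock number $j$ is created, the interaction simultaneously involves a $\id{WIN}_B$ agent whose value becomes~$0$; since each creation event is a single interaction involving exactly one such $\id{WIN}_B$ agent, and the count of clocks goes up by exactly one per such event, we get $N_{\mathrm{clk}}(c) \le \#\{\text{distinct clock-creation events so far}\} \le \#\{\text{agents, each counted once}\}$ — but to make "distinct" rigorous I will need the per-event argument to show the $\id{WIN}_B$ partner at a creation event is always a \emph{worker} at that moment and there are at least as many worker-or-other non-clock slots as creation events, which reduces to: at any time, (number of clocks created so far) $+$ (number of current $\id{WIN}_A$ value-$1$ workers that triggered them) does not exceed $n$. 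Filling in this bookkeeping carefully — likely via a monovariant of the form $N_{\mathrm{clk}}(c) + (\text{something nonnegative}) \le n/2 + (\text{something})$, or simply "$2 N_{\mathrm{clk}}(c) \le n$ because each clock is matched to a distinct agent that is provably never a clock" — is the crux, and everything else (backup, terminator, phase-advance, and clock–worker preference-copy interactions) is a routine check that they preserve the invariant because none of them creates a clock.
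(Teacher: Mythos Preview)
Your overall strategy---pair each newly created clock with its $\id{WIN}_B$ partner from the creation interaction---is exactly the paper's approach. But you correctly flag, and then fail to close, the crucial gap in case~(i): you need to show that the $\id{WIN}_B$ partner can \emph{never} later become a clock. Your proposed workarounds (dynamically reassigning $\sigma$, or a purely local per-event count) do not yield the $n/2$ bound on their own; the per-event count only shows that the number of clocks equals the number of creation events, and re-pointing $\sigma$ when a paired agent becomes a clock does not keep the map injective without an additional invariant you never state.

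The missing observation is that clock creation can \emph{only} happen in phase~$1$. The reason is the interaction between the phase-increment rule and the \lit{clock-creation} flag: a worker enters phase~$2$ only by meeting either a clock in an \id{EVEN}-labelled state (position $\ge 2\rho > T_c$, hence already with \lit{clock-creation}$=\id{false}$) or another phase-$2$ worker (which, inductively, already has \lit{clock-creation}$=\id{false}$); in either case the worker inherits \lit{clock-creation}$=\id{false}$ in that same interaction. So every worker in phase $\ge 2$ has \lit{clock-creation}$=\id{false}$ and cannot spawn a clock. Within phase~$1$ there is no doubling, so a value that drops to~$0$ never returns to~$1$. Hence the $\id{WIN}_B$ partner at a creation event---now value~$0$ in phase~$1$---can never again satisfy the value-$1$ precondition while in phase~$1$, and has \lit{clock-creation}$=\id{false}$ the moment it leaves phase~$1$. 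This is what the paper compresses into ``the agent that did not become a clock may never re-enter the initial state,'' and it is the one fact you need to finish your pairing argument cleanly.
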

\begin{proof}
$n$ workers start in input states and 
  at most one clock is created per two agents in these initial worker states.
This happens only when two workers in the input states with opposite preferences interact
  while \lit{clock-creation} is \id{true}.
However, the values get cancelled, and due to the transition rules, 
  the agent that did not become a clock may never re-enter the initial state.
Therefore, per each clock created there is one agent that will never become a clock,
  proving the claim.
\end{proof}
\begin{lemma}[Rumor Spreading]
\label{lem:rumor}
Suppose that in some configuration $c$, one agent knows a rumor.
The rumor is spread by interactions
  through a set of agents $S$ with $|S| \geq n/2$.
Then, the expected number of interactions from $c$
  for all agents in $S$ to know the rumor is $O(n \log{n})$.
Moreover, for sufficiently large constant $\beta$, 
  after $\beta n \log{n}$ interactions,
  all agents know the rumor with probability $1 - n^{-9}$.
\end{lemma}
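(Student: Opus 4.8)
The plan is to reduce the problem to the classical analysis of randomized broadcast (epidemic spreading) on the complete graph, restricted to the set $S$. Let $k$ denote the number of agents in $S$ that currently know the rumor; initially $k=1$, and since $|S|\ge n/2$ the process ends when $k=|S|\ge n/2$. Each interaction chosen uniformly among all $\binom{n}{2}$ pairs "helps" exactly when it pairs a knowing agent in $S$ with a non-knowing agent in $S$: there are $k(|S|-k)$ such ordered-into-unordered pairs, so the probability of progress at a given step is $\frac{k(|S|-k)}{\binom{n}{2}}$. I would split the evolution into the \emph{growth phase} ($1\le k\le |S|/2$) and the \emph{saturation phase} ($|S|/2 < k < |S|$), exactly as in the standard push-protocol argument.

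\textbf{Expected bound.} In the growth phase, $|S|-k\ge |S|/2 \ge n/4$, so the success probability is at least $\frac{k\cdot n/4}{\binom{n}{2}} = \Omega(k/n)$; hence the expected number of interactions to go from $k$ knowers to $k+1$ knowers is $O(n/k)$, and summing over $k=1,\dots,|S|/2$ gives $O(n\log n)$ by the harmonic sum. In the saturation phase, writing $j=|S|-k$ for the number of remaining ignorant agents in $S$, we have $k\ge |S|/2\ge n/4$, so the success probability is at least $\frac{(n/4)\cdot j}{\binom{n}{2}}=\Omega(j/n)$, and the expected number of steps to reduce $j$ by one is $O(n/j)$; summing over $j=|S|/2$ down to $1$ again gives $O(n\log n)$. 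Adding the two phases yields the claimed $O(n\log n)$ expectation by linearity.

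\textbf{High-probability bound.} For the tail bound I would apply a Chernoff/Markov argument on top of the expectation. One clean route: in each phase, partition the range of $k$ (resp.\ $j$) into dyadic blocks $[2^a,2^{a+1})$; within block $a$ the per-step success probability is $\Omega(2^a/n)$, so the number of steps to clear the block is stochastically dominated by a sum of $2^a$ geometric random variables with that parameter, which has expectation $O(n)$ and, by a standard Chernoff bound for sums of geometrics, exceeds $C n\log n$ (for a large constant $C$) with probability at most $n^{-10}$. There are $O(\log n)$ blocks across both phases, so a union bound gives that the whole process finishes within $O(n\log n)$ interactions except with probability $O(\log n)\cdot n^{-10}\le n^{-9}$ for $n$ large; choosing the constant $\beta$ large enough makes the total $\le \beta n\log n$ with probability $1-n^{-9}$.

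\textbf{Main obstacle.} The routine part is the harmonic-sum expectation; the delicate part is the high-probability bound, specifically getting a clean concentration statement for a \emph{sum of geometric random variables with heterogeneous (growing and then shrinking) parameters} and making sure the union bound over the $O(\log n)$ dyadic blocks only costs a polylogarithmic factor in the failure probability. I would handle this with the standard trick of bounding each geometric by a coupled Bernoulli-trials count and invoking a Chernoff bound, being careful that the constant $\beta$ absorbs both the number of blocks and the constants hidden in the $\Omega(\cdot)$ estimates for the success probabilities; the factor $|S|\ge n/2$ enters only to guarantee that $|S|-k$ (and symmetrically $k$) stays $\Omega(n)$ in the appropriate phase, which is what keeps every per-step probability estimate of the promised order.
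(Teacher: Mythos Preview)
Your argument is correct and is in fact more detailed than what the paper provides. The paper's own proof is essentially a citation sketch: it calls the problem folklore, points to~\cite{AG15} for the expectation bound and to~\cite{KMPS95,AAE08le} for the high-probability bound, and then handles the restriction $|S|\ge n/2$ by a slightly different reduction than yours---namely, it observes that each interaction lands inside $S\cup\{u\}$ (where $u$ is the source) with constant probability, so by a Chernoff bound a constant fraction of the $\beta n\log n$ interactions are effectively rumor-spreading steps on the complete graph over $S\cup\{u\}$, to which the cited results apply directly. You instead keep the full $n$-agent interaction model and absorb the factor $|S|\ge n/2$ into the per-step success probability; this is arguably cleaner and avoids the extra Chernoff/thinning step.

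One small point worth making explicit: the lemma does not assume the initial rumor holder $u$ lies in $S$ (and the paper's proof writes $S\cup\{u\}$ for this reason). Your line ``initially $k=1$'' tacitly assumes $u\in S$. If $u\notin S$, you need one preliminary step in which $u$ first meets some agent of $S$; since $|S|\ge n/2$ this has probability $\Omega(1)$ per interaction, so it costs $O(n)$ steps in expectation and at most $O(n\log n)$ steps with probability $1-n^{-10}$, which folds into your bounds without change.
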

\begin{proof}[Proof Adopted]
This problem,
  also known as epidemy spreading, is folklore.
Analysis follows via coupon collector arguments.
The expectation bound is trivial and proved for instance 
  in~\cite{AG15}, Lemma 4.2.

A formal proof of the high probability claim using techniques 
  from~\cite{KMPS95} can for instance be found in~\cite{AAE08le}.
The fact that rumor spreads through at least half of the agents
  affects the bounds by at most a constant factor.
To see this, observe that each interaction has a constant probability of being 
  between agents in $S \cup \{u\}$, where $u$ is the source of the rumor. 
Thus, with high probability by Chernoff, constant fraction of interactions 
  actually occur between these agents and these intaractions act 
  as a rumor spreading on $S \cup \{u\}$.
\end{proof}
\begin{lemma}[Backup]
\label{lem:majerror}
Let $c$ be a configuration of all agents, containing a backup agent.
Then, within $O(n^2 \log{n})$ expected intaractions from $c$, 
  the system will stabilize to the correct majority decision.
\end{lemma}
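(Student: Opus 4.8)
The plan is to argue that once any backup agent exists in a configuration $c$, the backup infection spreads to the entire population, after which the $4$-state protocol of~\cite{DV12,MNRS14} takes over and drives the system to a stable correct majority decision. First I would invoke the transition rules governing backup interactions: a backup agent never changes its type, and any interaction between a backup and a non-backup (worker, clock, or terminator) converts the non-backup into a backup, using its original input state $A$ or $B$ as the corresponding input state of the $4$-state protocol. Since by assumption $c$ contains at least one backup agent, I would apply the rumor-spreading bound of~\lemmaref{lem:rumor} (with ``being a backup'' as the rumor, spreading through the set $S$ of all $n$ agents): in $O(n \log n)$ expected interactions — indeed with high probability — every agent has become a backup. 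Call the resulting configuration $c^\ast$.

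Next I would observe that $c^\ast$ is a legal configuration of the $4$-state protocol of~\cite{DV12,MNRS14} on $n$ agents, whose input multiset is exactly the multiset of original input states $\{A,B\}$ of the agents. Crucially, each agent retains its \id{InitialState} field throughout the execution, so the effective input configuration for the $4$-state protocol is the \emph{same} initial input configuration (same counts of $A$ and $B$) as the one the Phased-Majority protocol started from. Hence the majority decision of the $4$-state protocol from $c^\ast$ agrees with the true majority of the original initial configuration. Since backups never revert to non-backup states, from $c^\ast$ onward the system evolves purely as the $4$-state protocol. That protocol is known to stably compute exact majority in $O(n^2)$ expected interactions (linear parallel time); I would cite the analysis of~\cite{DV12,MNRS14}. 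Composing the two phases, the total expected number of interactions from $c$ is $O(n\log n) + O(n^2) = O(n^2\log n)$ — actually $O(n^2)$ suffices, but the looser bound stated in the lemma is all we need — and the final stable decision is correct.

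The main obstacle, and the step that needs the most care, is verifying that the effective $4$-state input configuration after full infection really coincides with the original initial input: one must check that an agent's \id{InitialState} bit is never overwritten by any transition (workers, clocks, terminators, and the conversion-to-backup rule all preserve it), so that the backup subpopulation, once it is everyone, is running the $4$-state protocol on precisely the right input multiset. A secondary point is handling the fact that $c$ itself may already contain \emph{several} backups in arbitrary $4$-state states: this only helps the infection spread (Lemma~\ref{lem:rumor} still applies since the rumor set still has size $\ge n/2$, in fact all $n$), and once all agents are backups the standard $4$-state correctness and time bounds apply regardless of the intermediate backup states reached, because the $4$-state protocol stabilizes to the correct majority from \emph{every} reachable configuration. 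I would close by noting this is exactly why the $4$-state protocol is a sound ``backup'': its correctness is unconditional on how the backup states were entered.
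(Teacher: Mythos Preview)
Your proposal is correct and follows essentially the same two-step argument as the paper: first invoke \lemmaref{lem:rumor} to convert everyone to a backup in $O(n\log n)$ expected interactions, then appeal to the $4$-state protocol's stabilization guarantee from any reachable configuration with the same input multiset.

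One small factual correction: the $4$-state protocol's expected stabilization time from an arbitrary reachable configuration is $\Theta(n^2\log n)$ interactions, not $O(n^2)$. The extra $\log n$ factor is genuine---in the worst case (discrepancy one) a single remaining strong majority agent must convert $\Theta(n)$ weak minority agents one by one, which is a coupon-collector process costing $\sum_{k=1}^{\Theta(n)} \Theta(n^2/k) = \Theta(n^2\log n)$ interactions. This is precisely why the lemma is stated with $O(n^2\log n)$ rather than $O(n^2)$; your parenthetical ``actually $O(n^2)$ suffices'' is not quite right, but since you fall back to the lemma's bound anyway, the argument goes through.
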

\begin{proof}
By~\lemmaref{lem:rumor}, within $O(n \log{n})$ expected interactions
  all agents will be in a backup state.
That configuration will correspond to a reachable configuration 
  of the $4$-state protocol of~\cite{DV12, MNRS14}, and
  all remaining interactions will follow this backup protocol.
As the agents have the same input in $4$-state protocol as in the original protocol, 
  it can only stabilize to the correct majority decision.
The $4$-state protocol stabilizes in $n^2 \log{n}$ expected interactions 
  from any reachable configuration, completing the proof.
\end{proof}
\begin{lemma}[Correctness]
\label{lem:majcorrect}
If the system stabilizes to majority decision $\id{WIN}_X$
  for $X \in \{A, B\}$, then state $X$ had the majority 
  in the initial configuration.
\end{lemma}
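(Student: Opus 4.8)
The plan is to argue by contradiction using \invariantref{inv:majsum} (the Sum Invariant), after first peeling off the executions that ever create a backup or a terminator agent. Suppose the protocol stabilizes to $\id{WIN}_X$; concretely, there is an initial configuration $i$ and a transition sequence reaching a configuration $c^*$ that has a stable majority decision $\id{WIN}_X$. If $i$ is an initial tie the statement is moot, so let $Y\in\{A,B\}$ be the strict initial majority, with discrepancy $\epsilon n\ge 1$. Since $c^*$ is stable with decision $\id{WIN}_X$, every agent in $c^*$ outputs $\id{WIN}_X$; in particular every worker and every clock of $c^*$ has preference $\id{WIN}_X$.

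First I would handle the executions that touch a backup or terminator state. If some agent is ever in a backup state along the chosen transition sequence, then by \lemmaref{lem:majerror} the system, started from that configuration, stabilizes to the \emph{correct} majority decision; as $c^*$ is a stable configuration reachable from it, the decision $\id{WIN}_X$ must be the correct one, so $X=Y$ and we are done. The same reasoning works, via \lemmaref{lem:majdone}, if the transition sequence is backup-free but some agent is ever in a terminator state. So it remains to treat the case where the transition sequence to $c^*$ is backup-free and no agent along it is ever in a terminator state — in particular, never in the state $D_Y$ — and hence $c^*$ itself contains only workers and clocks.

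In this remaining case \invariantref{inv:majsum} applies directly to $c^*$: it is reachable from $i$ by a backup-free execution during which no agent is ever in the terminator state $D_Y$ of the initial majority, so the invariant pins the sign of $Q$, giving $Q(c^*)\ge\epsilon n^2>0$ when $Y=A$ and $Q(c^*)\le-\epsilon n^2<0$ when $Y=B$. On the other hand, every worker of $c^*$ has preference $\id{WIN}_X$ and clocks contribute $0$ to $Q$, so $Q(c^*)$ is a sum of terms that are all nonnegative when $X=A$ and all nonpositive when $X=B$; thus $Q(c^*)\ge 0$ if $X=A$ and $Q(c^*)\le 0$ if $X=B$. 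If $X\ne Y$ these two facts contradict each other (for instance $Y=A$ forces $Q(c^*)>0$ while $X=B$ forces $Q(c^*)\le 0$), so we must have $X=Y$, i.e.\ $\id{WIN}_X$ names the initial majority.

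The one point that needs care is the applicability of \invariantref{inv:majsum}, which is stated only for backup-free executions avoiding the terminator state of the initial majority; this is precisely why the backup and terminator cases are dispatched first, using \lemmaref{lem:majerror} and \lemmaref{lem:majdone}. Beyond that, the argument is bookkeeping: reading ``stabilizes to $\id{WIN}_X$'' as ``a stable $\id{WIN}_X$-configuration is reachable'', and observing that in such a configuration all workers share preference $\id{WIN}_X$. The substantive work — establishing the Sum Invariant and the two auxiliary lemmas — lies elsewhere and is used here as a black box.
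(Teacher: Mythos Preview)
Your overall architecture matches the paper's: peel off the backup case via \lemmaref{lem:majerror}, reduce to a backup-free execution in which the relevant terminator state never appears, and then invoke \invariantref{inv:majsum} to force a sign contradiction on $Q$. The final step---reading off the sign of $Q(c^*)$ from the uniform preference of the workers---is just the contrapositive of the paper's ``$Q>0$ implies a strong worker with preference $\id{WIN}_A$'', and is fine.

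The gap is your use of \lemmaref{lem:majdone} to dispose of executions in which a terminator appears. That lemma's correctness claim (``stabilizes to the \emph{correct} majority decision'') is itself proved in the paper by invoking \lemmaref{lem:majcorrect}: once all agents are in the same terminator state, one appeals to the present lemma to certify that this terminator is the right one. So citing \lemmaref{lem:majdone} here is circular. The paper avoids this by a short direct argument that you should substitute in: you only need to rule out the terminator $D_Y$ of the \emph{true} majority $Y$ (that is all the invariant requires). In a backup-free execution, an agent in state $D_Y$ can only leave $D_Y$ by becoming a backup; hence if $D_Y$ ever appears, it persists to $c^*$, and since $\gamma(D_Y)=\id{WIN}_Y$ the stable decision cannot be $\id{WIN}_X$ with $X\neq Y$. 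With this replacement your proof goes through; note you do not need to exclude $D_X$ along the way, since terminators contribute nothing to $Q$ and the invariant only forbids $D_Y$.
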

\begin{proof}
Without loss of generality, assume that state $A$ had the majority 
  in the initial configuration ($\id{WIN}_A$ is the correct decision).
For contradiction, suppose the system stabilizes to the decision $\id{WIN}_B$.
Then, the stable configuration may not contain terminators in state $D_A$ 
  or strong workers with preference $\id{WIN}_A$.
We show that such configurations are unreachable in backup-free executions.

If any agent is in state $D_A$ during the execution,
  it will remain in $D_A$ unless an error occurs (and agents change to backup states).
In neither of these cases can the system stabilize to decision $\id{WIN}_B$.
This is because $\gamma(D_A) = \id{WIN}_A$ and 
  in executions where some agent enters a backup state,
  we stabilize to the correct decision by~\lemmaref{lem:majerror}.

By~\invariantref{inv:majsum}, for any configuration $C$
  reached by a backup-free execution during which, additionally, 
  no agent is ever is state $D_A$, we have $Q(C) \geq n$.
But any configuration $C$ with strictly positive $Q(C)$
  contains at least one strong agent with preference $\id{WIN}_A$, as desired.
\end{proof}
\begin{lemma}[Terminator]
\label{lem:majdone}
Let $c$ be a configuration of all agents, containing a terminator agent.
In backup-free executions,
  the system stabilizes to the correct majority decision within 
  $O(n \log n)$ interactions in expectation and with high probability.
Otherwise, the system stabilizes within $O(n^2 \log{n})$ expected intaractions.
\end{lemma}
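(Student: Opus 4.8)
The plan is to prove \lemmaref{lem:majdone} by tracing the fate of a configuration $c$ that contains a terminator agent, say in state $D_X$ for some $X \in \{A,B\}$. The key structural observation is that once a terminator $D_X$ exists, it acts as an ``infection'' that spreads: whenever a $D_X$ agent meets a clock or worker whose (term-)preference is $\id{WIN}_X$, both end up in state $D_X$, so $D_X$ propagates through all same-preference agents exactly as a rumor. The only way this spreading is interrupted is a preference conflict --- a $D_X$ meeting an agent whose preference is $\id{WIN}_{\bar X}$, or a $D_A$ meeting a $D_B$ --- in which case both agents fall into backup states. So from $c$ there are two cases, and the dichotomy in the statement mirrors exactly whether or not a backup state ever appears.

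In the backup-free case, I would argue that no preference conflict is ever realized along the execution, which forces every agent the terminator ever meets to already have preference $\id{WIN}_X$; more strongly, I claim that in a backup-free continuation all agents must in fact share preference $\id{WIN}_X$, since any surviving agent of the opposite preference would, with probability $1$ over the infinite execution, eventually meet the terminator and trigger a conflict, contradicting backup-freeness. (Here I would lean on \lemmaref{lem:majcorrect}/\invariantref{inv:majsum} to the extent needed to know $X$ is the true majority, so that ``stabilizes to the correct decision'' is what we get.) Given that all agents have preference $\id{WIN}_X$, the set $S$ of non-terminator agents together with the terminator is a set through which the ``become $D_X$'' rumor spreads, and $|S \cup \{\text{terminator}\}| = n \geq n/2$; applying the Rumor Spreading \lemmaref{lem:rumor} gives that within $O(n\log n)$ interactions, in expectation and with probability $1 - n^{-9}$, every agent is in state $D_X$. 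A configuration of all $D_X$ agents is closed under the transition function (a $D_X$--$D_X$ interaction is a no-op) and outputs $\id{WIN}_X$ uniformly, hence it is stable; so the system has stabilized to the correct decision within $O(n\log n)$ interactions both in expectation and w.h.p., as claimed.

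In the case where a backup agent does appear at some point --- whether it was already present, or gets created by a preference conflict involving the terminator, or by any other backup-producing transition (phase gap $>1$, clock gap $\geq \rho$, etc.) --- I would simply invoke \lemmaref{lem:majerror}: from any configuration containing a backup agent the system stabilizes to the correct majority decision within $O(n^2\log n)$ expected interactions. Since from $c$ the execution either stays backup-free forever (first case) or reaches a configuration with a backup agent after finitely many steps (and those finitely many steps cost $O(n^2\log n)$ in expectation to traverse, e.g. bounded crudely by the time to reach all-backup via \lemmaref{lem:rumor} plus the $4$-state bound), the expected stabilization time is $O(n^2 \log n)$ overall. To combine the two cases cleanly into the stated ``otherwise'' bound, I would note that the worst case over the randomness is dominated by the backup branch, and take a max.

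The main obstacle I anticipate is making rigorous the step ``in a backup-free continuation all agents must share preference $\id{WIN}_X$'': this is an almost-sure argument about the infinite execution, and one has to be careful that preferences themselves change over time (a weak worker adopts a strong partner's preference, a clock adopts a strong worker's preference). The clean way is contrapositive --- if some reachable backup-free configuration still had an agent whose term-preference differs from the terminator's, then with positive probability that agent (or a preference-descendant of it carrying the opposing preference; one must check the opposing preference cannot simply vanish without a conflict) meets a $D_X$ agent, producing a backup and contradicting backup-freeness. Handling the bookkeeping of where opposing preferences can go --- and confirming that the \emph{only} sink for a $\id{WIN}_{\bar X}$ agent adjacent to a terminator is a backup-producing conflict --- is the delicate part; everything else reduces to \lemmaref{lem:rumor}, \lemmaref{lem:majerror}, and closure of the all-$D_X$ configuration.
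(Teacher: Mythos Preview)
Your approach is essentially correct and shares the paper's skeleton (rumor spreading from the terminator, then \lemmaref{lem:majerror} if a backup appears, correctness via \lemmaref{lem:majcorrect}), but you introduce an unnecessary detour that creates the very obstacle you flag.

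The paper sidesteps the preference argument entirely by choosing a better rumor. Instead of treating ``being in state $D_X$'' as the rumor (which forces you to argue that every agent the rumor reaches has preference $\id{WIN}_X$), the paper treats ``being in state $D_X$ \emph{or} backup'' as the rumor. This set is genuinely absorbing and spreading under \emph{every} interaction with an uninfected agent: $D_X$ meeting a same-preference worker/clock yields two $D_X$'s; $D_X$ meeting an opposite-preference agent or $D_{\bar X}$ yields two backups; a backup meeting anyone yields two backups. So \lemmaref{lem:rumor} applies unconditionally and gives that within $O(n\log n)$ interactions, in expectation and w.h.p., every agent is either $D_X$ or backup. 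Now just case on the outcome: if no backup ever appeared (the backup-free branch), all agents are $D_X$ and the system is stable at the correct decision by \lemmaref{lem:majcorrect}; if some agent is backup, invoke \lemmaref{lem:majerror} for the remaining $O(n^2\log n)$ expected interactions. No almost-sure argument about vanishing preferences is needed.

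This also repairs the soft spot in your ``otherwise'' case: you never actually bound the expected time until the \emph{first} backup appears (your suggestion ``time to reach all-backup via \lemmaref{lem:rumor}'' has no source to spread from until a backup already exists). With the paper's choice of rumor, that time is automatically at most the $O(n\log n)$ rumor-spreading time, so the total is $O(n\log n) + O(n^2\log n) = O(n^2\log n)$ as stated.
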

\begin{proof}
If there is a backup agent in $c$,
  then the claim follows from~\lemmaref{lem:majerror}.

Otherwise, the terminator spreads the rumor, such that 
  the agents that the rumor has reached are always either in the
  same terminator state, or in an backup state.
By~\lemmaref{lem:rumor}, this takes $O(n \log n)$ interactions
  both in expectation and with high probability.
If all agents are in the same terminator state, 
  then the system has stabilized to the correct majority decision by~\lemmaref{lem:majcorrect}.
Otherwise, there is a backup agent in the system,
  and by~\lemmaref{lem:majerror}, the system will stabilize
  within further $O(n^2 \log{n})$ expected interactions.
\end{proof}
We derive a lemma about each type of phase.
\begin{lemma}[Cancellation]
\label{lem:majcancel}
Suppose in configuration $c$ every agent is either a clock 
  or a worker in the same cancellation phase $\phi$ ($\phi$ is odd).
Consider executing $8(\beta+1)n\log{n}$ interactions from $c$
  conditioned on an event that during this interaction sequence,
  no clock is ever in a state with label $\id{EVEN}$,
  and that the phase clock gap is never larger than $\rho$.
Let $c'$ be the resulting configuration.
Then, with probability $1-n^{-\beta}$, in $c'$ it holds that: 
  (1) all strong agents have the same preference,
  or there are at most $n/10$ strong agents with each preference;
  (2) every agent is still a clock, or a worker in phase $\phi$.
\end{lemma}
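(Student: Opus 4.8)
The plan is to analyze the cancellation phase as a two-step process: first argue that the phase clock keeps all workers ``frozen'' in phase $\phi$ during the window of $8(\beta+1)n\log n$ interactions, and then argue that the cancellation dynamics among the strong workers rapidly equalize the counts of the two preferences unless one of them is already tiny. For part (2), I would observe that under the conditioning — no clock ever carries the label $\id{EVEN}$, and the clock gap stays below $\rho$ — the predicate $\id{inc-phase}(\phi,O)$ can never become true for a worker in the odd phase $\phi$ meeting a clock (since $\phi$ odd requires the clock label to be $\id{EVEN}$ to advance, which we have ruled out), and the worker-meets-worker phase-advance rule cannot fire either because all workers start in phase $\phi$ and, as no worker leaves phase $\phi$, no worker is ever in phase $\phi+1$ to pull others forward. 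Hence no phase numbers ever change, so every agent remains a clock or a worker in phase $\phi$ deterministically, which is the stronger statement that (2) holds with probability $1$.

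For part (1), I would set up a potential-decrease argument on the quantity $M(c) = \min\{a(c), b(c)\}$, where $a(c)$ and $b(c)$ are the numbers of strong workers with preference $\id{WIN}_A$ and $\id{WIN}_B$ respectively (strong here meaning value $1$, since in a cancellation phase the only workers that can be created with value $1/2$ come from a previous doubling phase and stay at value $1$ or $1/2$ — I need to check which of value $1$ vs. value $1/2$ the statement's ``strong'' refers to and track both, but only value-$1$ disagreeing pairs actually cancel). Each cancellation interaction between two value-$1$ workers of opposite preference removes one agent from each side (one becomes weak or a clock), so $M$ is non-increasing and strictly decreases by $1$ whenever such an interaction occurs. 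While $a(c),b(c)\ge n/10$, the number of such ``productive'' ordered pairs is $\Omega(n^2)$, so each interaction is productive with probability $\Omega(1)$; by a Chernoff bound over $8(\beta+1)n\log n$ interactions, with probability $1-n^{-\beta}$ at least $\Omega(n\log n)$ productive interactions occur — but since $M$ can decrease at most $n/2$ times total, this forces $M$ below $n/10$ well within the window. Once $M(c'')<n/10$ at some intermediate configuration, I need to check it stays below: the only way a preference count can increase is a weak worker adopting a strong worker's preference, but a weak worker is not strong, so it does not increase $a$ or $b$ as defined; thus $M$ stays below $n/10$, giving conclusion (1). The accounting of interactions should be done carefully with the $8(\beta+1)$ constant so that the Chernoff tail is exactly $n^{-\beta}$.

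The main obstacle I anticipate is the bookkeeping around exactly what counts as a ``strong agent'' and how value-$1/2$ workers interact with the cancellation rule and with the $M$-potential: value-$1/2$ workers are strong by the definition $\id{is-strong}$, but they do not participate in cancellation (the rule requires both values equal to $1$), so they could in principle persist with their preference and break the ``all strong agents have the same preference'' clause. I would handle this by noting that in a cancellation phase no new value-$1/2$ workers are created (doubling only happens in even phases), and value-$1/2$ workers entering phase $\phi$ from phase $\phi-1$ had their value promoted to $1$ by the phase-entry rule when $\phi$ is odd — so at the start of an odd phase every strong worker in fact has value exactly $1$, and the distinction disappears. Confirming this invariant (every strong worker in an odd phase has value $1$) is the crux; given it, the rest is a standard concentration argument, and I would state it as a short sub-claim before the main counting argument.
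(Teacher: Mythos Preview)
Your treatment of part~(2) and of the value-$1/2$ issue is correct and essentially identical to the paper's: under the conditioning no worker can advance its phase, and the phase-entry rule guarantees every strong worker in an odd phase has value exactly~$1$.

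The gap is in part~(1). You establish that $M=\min(a,b)$ drops below $n/10$ and then write ``$M$ stays below $n/10$, giving conclusion~(1).'' But conclusion~(1) is \emph{not} $\min(a,b)\le n/10$; it is the disjunction ``$\min(a,b)=0$'' or ``$\max(a,b)\le n/10$'' (the phrase ``at most $n/10$ strong agents with each preference'' means both counts are bounded by $n/10$, and this is exactly how the lemma is used in~\lemmaref{lem:majwhp} to bound the total number of strong workers by $n/5$). Since each cancellation decrements $a$ and $b$ simultaneously, the difference $|a-b|$ is invariant throughout the phase; hence whenever the initial $|a-b|$ exceeds $n/10$, the second disjunct is unreachable and you must drive $\min(a,b)$ all the way to zero. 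Your Chernoff step, which requires both $a,b\ge n/10$ to secure a constant probability of a productive interaction, says nothing about this tail regime where the minority count is small but nonzero.

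The missing ingredient is a death-chain analysis of that tail: while $\max(a,b)\ge n/10$, the probability that a given interaction is a cancellation is at least $2\cdot(n/10)\cdot\min(a,b)/\binom{n}{2}=\Theta(\min(a,b)/n)$, so the expected number of interactions to take $\min(a,b)$ from $k$ down to $0$ is $O\bigl(n\sum_{j\le k}1/j\bigr)=O(n\log n)$, with a standard concentration for sums of geometrics yielding the $1-n^{-\beta}$ bound. This harmonic-sum step is precisely the content of Lemma~5 in~\cite{AAE08le}, which the paper simply cites after setting up the reduction; a self-contained proof along your lines would need to reproduce it rather than stop at $M<n/10$.
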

\begin{proof}
By our assumption, no clock is ever in a state with 
  label $\id{EVEN}$ during the interaction sequence.
This implies that no worker may enter phase $\phi+1$ or become a terminator.
We assumed that the phase clock gap never violates the threshold $\rho$, 
  and we know all workers are in the same phase, so backups also do not occur.

In configuration $c$, all workers are in phase $\phi$,
  which is a cancellation phase, and must have values in $\{0, 1\}$.
This is true for phase $1$, and 
  when an agent becomes active in a later cancellation phase,
  it updates value $1/2$ to $1$, so having value $1/2$ is impossible. 
Thus, the only strong agents in the system have value $1$.
As no weak worker or a clock may become strong during these 
  $8(\beta+1)n\log{n}$ interactions,
  the count of strong agents never increases.
The only way the count of strong agents decreases is 
  when two agents with value $1$ and opposite preferences interact. 
In this case, the count always decreases by $2$ 
  (both values become $0$ or if \lit{clock-creation}=\id{true}, 
  one agent becomes a clock).

Our claim about the counts then is equivalent to Lemma 5 in~\cite{AAE08le}
  invoked with a different constant ($5$ instead of $4$, as $8(\beta+1)n\log{n} > 5(\beta+1)n\ln{n}$)
  and by treating strong agents with different preferences as $(1,0)$ and $(0,1)$.
\end{proof}
\begin{lemma}[Duplication]
\label{lem:majdup}
Suppose in configuration $c$ every agent is either a clock 
  or a worker in the same duplication phase $\phi$ ($\phi$ is even).
Consider executing $8(\beta+1)n\log{n}$ interactions from $c$
  conditioned on events that during this interaction sequence
  (1) no clock is ever in a state with label $\id{ODD}$,
  (2) the phase clock gap is never larger than $\rho$, and
  (3) the number of weak workers is always $\geq n/10$.
Let $c'$ be the resulting configuration.
Then, with probability $1-n^{-\beta}$, in $c'$ it holds that:
  (1) all strong workers have value $1/2$;
  (2) every agent is still a clock, or a worker in phase $\phi$.
\end{lemma}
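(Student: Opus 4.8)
The plan is to follow the two-step template of \lemmaref{lem:majcancel}. First I would show that property~(2) is forced by the conditioning alone: with $\phi$ even, a worker can leave phase $\phi$ only when $\id{inc-phase}$ fires, which needs a worker partner already in phase $\phi+1$ (impossible while every worker is in phase $\phi$) or a clock partner in an $\id{ODD}$-labelled state (excluded by item~(1)); hence no worker ever leaves phase $\phi$ and, consequently, none becomes a terminator (which requires either a phase increment or the maximum phase $2\log n+1$, which is odd and hence $\neq\phi$). No agent becomes a backup either --- worker--worker backups need a phase gap $>1$, clock--clock backups are ruled out by item~(2), and terminator-conflict backups need a terminator --- and clocks never turn into workers. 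A one-step induction over the $8(\beta+1)n\log n$ interactions then yields property~(2) and shows that every interaction is governed by the doubling-phase rules.

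Second, for property~(1), I would observe that in a doubling phase the only value-changing transition is $\{1,0\}\mapsto\{1/2,1/2\}$: value $1/2$ is absorbing, value $0$ can only turn into $1/2$, and a value-$1$ worker can only turn into $1/2$, and only upon meeting a value-$0$ worker. Thus no value-$1$ worker is ever created, so it suffices to show that every worker with value $1$ in $c$ meets some value-$0$ worker during the sequence. Fixing such a worker $a$: conditioned on the history and on item~(3), every step has $\ge n/10$ value-$0$ workers present, so $a$ is converted at that step with conditional probability at least $\tfrac{2(n/10)}{n(n-1)}\ge\tfrac1{5n}$; hence $a$ still has value $1$ in $c'$ with probability at most $(1-\tfrac1{5n})^{8(\beta+1)n\log n}\le n^{-(\beta+1)}$ for all large $n$, and a union bound over the $\le n$ workers with value $1$ in $c$ gives failure probability $\le n^{-\beta}$. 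On the good event $c'$ has no value-$1$ worker, so all strong workers in $c'$ have value $1/2$; combined with property~(2), which holds deterministically under the conditioning, this is the lemma.

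The one genuinely delicate point, exactly as in \lemmaref{lem:majcancel}, is that the three events we condition on constrain the future of the execution, so the per-step estimate ``$a$ is converted with probability $\ge 1/(5n)$'' must be argued for the \emph{conditioned} process rather than the raw one. I would handle this in the standard way: at every step at which $a$ still has value $1$, item~(3) guarantees $\ge n/10$ value-$0$ workers are present, so the conversion probability is $\ge 1/(5n)$ conditionally on the history; equivalently, one stochastically dominates the indicator that $a$ is still value $1$ by a run of independent $\mathrm{Bernoulli}(1/(5n))$ trials. Everything else is routine, and the constant $8(\beta+1)$ leaves ample slack regardless of the base of the logarithm.
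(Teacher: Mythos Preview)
Your proposal is correct and follows essentially the same approach as the paper's proof: first argue deterministically from the conditioning that no worker leaves phase $\phi$ and no backups/terminators arise, then fix a value-$1$ worker, lower-bound its per-step conversion probability by $1/(5n)$ using assumption~(3), and union-bound over at most $n$ such workers. You are in fact more explicit than the paper on two points---the case analysis establishing property~(2), and the caveat about conditioning on future-dependent events---but the skeleton and the constants are identical.
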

\begin{proof}
By our assumption, no clock is ever in a state with 
  label $\id{ODD}$ during the interaction sequence.
This implies that no worker may enter phase $\phi+1$ or become a terminator.
We assumed that the phase clock gap never violates the threshold $\rho$, 
  and we know all workers are in the same phase, so backups also do not occur.

In a duplication phase, workers may not update a state
  such that their value becomes $1$.
Consider a fixed strong worker state in configuration $c$ with value $1$.
By the assumption, probability of an interaction between our fixed agent and a weak
  worker is at least $\frac{n/10}{n(n-1)/2}\geq 1/5n$.
If such an interaction occurs, our agent's value becomes $1/2$.
The probability that this does not happen is at most 
  $(1-1/5n)^{8(\beta+1)n\log{n}} \leq (1-1/5n)^{5n \cdot (\beta+1)\ln{n}} = n^{-\beta-1}$.
By union bound over at most $n$ agents, 
  we get that with probability $1-n^{-\beta}$, 
  no worker will have value $1$, as desired.
\end{proof}
Next, we develop a few more tools before proving stabilization guarantees.
\begin{lemma}
\label{lem:syncinc}
Suppose we execute $\alpha (\beta+1) n \log{n}$ 
  successive interactions for $\alpha \geq 3/2$.
With probability $1-n^{-\beta}$, no agent interacts more than 
  $2 \alpha (1+\sqrt{\frac{3}{2\alpha}}) (\beta+1) \log{n}$ times
  in these interactions.
\end{lemma}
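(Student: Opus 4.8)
The plan is to treat the number of interactions involving a fixed agent as a sum of independent indicator variables, apply a standard multiplicative Chernoff bound, and then take a union bound over all $n$ agents.

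First I would fix an agent $u$ and, for $t = 1, \ldots, T$ with $T = \alpha(\beta+1)n\log n$, let $Y_t$ be the indicator that $u$ is one of the two agents selected in the $t$-th interaction. Since each execution step independently picks a uniformly random unordered pair of agents, the $Y_t$ are i.i.d.\ Bernoulli random variables with $\Pr[Y_t = 1] = (n-1)/\binom{n}{2} = 2/n$. Hence $X_u := \sum_{t=1}^T Y_t$, the number of interactions involving $u$, is a sum of independent indicators with mean $\mu = 2T/n = 2\alpha(\beta+1)\log n$.

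Next I would apply the multiplicative Chernoff bound $\Pr[X_u \ge (1+\delta)\mu] \le \exp(-\delta^2\mu/3)$, valid for $0 \le \delta \le 1$, with the choice $\delta = \sqrt{3/(2\alpha)}$. The hypothesis $\alpha \ge 3/2$ is exactly what guarantees $\delta \le 1$, so the bound is applicable. A direct computation gives $\delta^2\mu/3 = (\beta+1)\log n$, hence $\Pr[X_u \ge (1+\delta)\mu] \le n^{-(\beta+1)}$, while $(1+\delta)\mu = 2\alpha\bigl(1+\sqrt{3/(2\alpha)}\bigr)(\beta+1)\log n$, which is precisely the threshold appearing in the statement. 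A union bound over the $n$ agents then shows that the probability that some agent interacts more than $2\alpha\bigl(1+\sqrt{3/(2\alpha)}\bigr)(\beta+1)\log n$ times is at most $n \cdot n^{-(\beta+1)} = n^{-\beta}$, as claimed.

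The only mild subtleties are bookkeeping ones rather than genuine obstacles: confirming the per-step inclusion probability is $2/n$ (not $1/n$), verifying $\delta \le 1$ from $\alpha \ge 3/2$ so the quadratic form of the Chernoff bound applies, and being consistent about whether $\log$ denotes the natural logarithm — if it does not, the change of base is absorbed into the constant in the Chernoff exponent, exactly as is done elsewhere in the paper when converting between $\log$ and $\ln$. I expect no substantial difficulty here; the argument is a textbook concentration estimate.
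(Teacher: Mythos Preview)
Your proposal is correct and is essentially identical to the paper's own proof: fix an agent, observe it is involved in each interaction with probability $2/n$, apply the multiplicative Chernoff bound with $\sigma=\sqrt{3/(2\alpha)}\le 1$, and union-bound over the $n$ agents. Your write-up is in fact more explicit than the paper's (which just says ``by Chernoff bound'' with the same choice of $\sigma$), and your remark about the $\log$ vs.\ $\ln$ base is apt---the paper elsewhere silently converts between the two in exactly the way you describe.
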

\begin{proof}
Consider a fixed agent in the system.
In any interaction, it has a probability $2/n$ of being chosen.
Thus, we consider a random variable 
  $\mathrm{Bin}(\alpha (\beta+1) n \log{n}, 2/n)$,
  i.e. the number of successes in independent Benoulli trials with probability $2/n$.
By Chernoff bound, setting $\sigma = \sqrt{\frac{3}{2\alpha}} \leq 1$, 
  the probability interacting more than $2\alpha (1+\sigma) (\beta+1) \log{n}$ 
  times is at most $1/n^{\beta+1}$.
Union bound over $n$ agents completes the proof. 

Notice that the number of interactions trivially upper bounds 
  the number of times an agent can go through 
  any type of state transition during these interactions.
In particular, the probability that any clock in the system
  increases its position more than 
  $2 \alpha (1+\sqrt{\frac{3}{2\alpha}}) (\beta+1) \log{n}$ times
  during these interactions is $n^{-\beta}$.
\end{proof}
\begin{lemma}
\label{lem:syncreach}
Consider a configuration in which there are between $2n/5$ and $n/2$ clocks, 
  each with a position in $[0, 2\rho)$, and all remaining agents are workers 
  in the same phase $\phi$, where $\phi$ is odd.
Then, the number of interactions before some clock reaches position 
  $2\rho$ is $O(n \log{n})$ with probability $1-n^{-\beta}$.
\end{lemma}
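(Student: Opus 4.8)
The plan is to track the quantity $\Sigma$, the sum of the positions of all clock agents, and show that it grows by $\Theta(n\log n)$ during any window of $\Theta(n\log n)$ interactions. Since Lemma~\ref{lem:clockcnt} caps the number of clocks at $n/2$, a growth of $2\rho\cdot(n/2)$ in $\Sigma$ is impossible without some clock's position exceeding $2\rho$; so once $\Sigma$ has grown by a suitable $\Theta(n\log n)$ amount, some clock must have reached position $2\rho$. The two things to verify are (a) that $\Sigma$ really does increase by exactly one per clock--clock interaction until that happens, and (b) that a constant fraction of the first $\Theta(n\log n)$ interactions are clock--clock interactions, which follows from the assumption that at least $2n/5$ agents are clocks.

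Fix a constant $C=C(\beta)$ and consider the first $Cn\log n$ interactions after the given configuration. As long as no clock has reached position $2\rho$, every clock position lies in $[0,2\rho)\subseteq[0,\Psi-\rho)$, so by the update rules of Section~\ref{sec:sync} a clock--clock interaction never hits the wrap-around case \eqref{eqn:second} or a reset \eqref{eqn:four}; it simply increments the smaller of the two positions, raising $\Sigma$ by exactly $1$. Worker--clock and worker--worker interactions leave every clock's position fixed, and the only way a new clock appears is clock creation in the (odd) phase $\phi$, which inserts a clock at position $0$ and hence does not change $\Sigma$. Clocks are never destroyed unless an error occurs, and an error requires two clocks at gap $\geq\rho$; by Corollary~\ref{cor:ptw} (invoked with a parameter slightly larger than $\beta$, to absorb the $m/n^{\beta}$ term over $m=Cn\log n$ steps) the phase-clock gap stays below $\rho$ throughout the window with probability $1-n^{-\beta}$. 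On that event the clock count only increases, so it stays in $[2n/5,n/2]$ the whole time.

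Next, at every step the conditional probability, given the past, that the selected pair is a pair of clocks is at least $\tfrac{(2n/5)(2n/5-1)}{n(n-1)}\geq 1/7$ for $n$ large, since there are always at least $2n/5$ clocks; hence the number of clock--clock interactions among the first $Cn\log n$ steps stochastically dominates a $\mathrm{Binomial}(Cn\log n,1/7)$ variable, and a Chernoff bound makes it at least $\tfrac{C}{14}n\log n$ with probability $1-n^{-\beta}$. Condition on this event and on the gap event above. If no clock had reached position $2\rho$ during the window, then by the previous paragraph $\Sigma$ would have grown by at least $\tfrac{C}{14}n\log n$; dividing by the at most $n/2$ clocks, some clock's position would be at least $\tfrac{C}{7}\log n$, which exceeds $2\rho=2\gamma'(\beta)\log n$ once $C>14\gamma'(\beta)$ — contradicting the assumption. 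Thus, taking $C=15\gamma'(\beta)$, on an event of probability at least $1-2n^{-\beta}$ some clock reaches position $2\rho$ within $Cn\log n=O(n\log n)$ interactions, and replacing $\beta$ by $\beta+1$ throughout yields the stated $1-n^{-\beta}$ bound.

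The main obstacle is keeping $\Sigma$ an honest ``one increment per clock--clock interaction'' counter: clock creation is harmless because new clocks enter at position $0$; wrap-arounds and resets cannot occur before the very event we are establishing, since positions stay below $2\rho<\Psi-\rho$ until then; and clock errors — the only mechanism that could remove clocks and spoil the count — are ruled out with high probability by Corollary~\ref{cor:ptw}, exactly the way the gap is controlled in Lemma~\ref{lem:majcancel}. Everything else (the Chernoff tail and the choice of $C$) is routine constant-chasing.
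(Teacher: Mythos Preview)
Your proposal follows essentially the same line as the paper's proof: both observe that each clock--clock interaction raises some clock's position by one (hence the total position sum by one), that at most $n/2$ clocks ever exist, and that a constant fraction of interactions are clock--clock because at least $2n/5$ agents are clocks; a Chernoff bound then finishes. The paper phrases this more tersely --- ``upper bounded by the number of interactions until $2\rho n$ interactions are performed between clocks'' with per-step probability $\geq 1/9$ --- but the argument is the same.

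One remark on your handling of clock errors. You invoke Corollary~\ref{cor:ptw} to ensure no two clocks reach gap $\geq\rho$, but that corollary's hypothesis is $G(c)\leq\gamma\log n$ for the specific constant $\gamma=29(\beta+1)$ fixed in the analysis, whereas the lemma's hypothesis only places positions in $[0,2\rho)$, permitting an initial gap as large as $2\rho-1=2\gamma'(\beta)\log n-1>\gamma\log n$. So the invocation is not literally licensed by the stated hypothesis. The paper's own proof has the same informality: it simply asserts ``until some clock reaches position $2\rho$, no backup or terminator agents may appear,'' which is likewise not derivable from positions in $[0,2\rho)$ alone (two clocks at positions $0$ and $\rho$ would trigger a backup). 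In both cases the gap is cosmetic, since at every point where Lemma~\ref{lem:syncreach} is invoked inside Lemma~\ref{lem:majwhp} the configuration actually satisfies the stronger bound needed for Corollary~\ref{cor:ptw}, and that event is conditioned on separately there.
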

\begin{proof}
In this case, until some clock reaches position $2\rho$,
  no backup or terminator agents may appear in the system.
Every interaction between two clocks increases one of them.
Therefore, the number of interactions until some clock reaches
  position $2\rho$ is upper bounded by the number of interactions
  until $2\rho n$ interactions are performed between clocks.
At each interaction, two clocks are chosen with probability at least $1/9$
  (for all sufficiently large $n$).
We are interested in the number of Bernoulli trials with 
  success probability $1/9$, necessary to get $2\rho n$
  successes with probability at least $1-n^{-\beta}$.
As we have $\rho = \Theta(\log{n})$, this is $O(n \log{n})$ by Chernoff bound.
\end{proof}
\begin{lemma}
\label{lem:majweak}
Let $\delta(c)$ for a configuration $c$ be the number of weak workers
  minus the number of workers with value $1$.
Suppose that throughout a sequence of interactions from
  configuration $c$ to configuration $c'$ it holds that
  (1) all agents are clocks and workers; and
  (2) no worker enters an odd phase.
Then, $\delta(c') \geq \delta(c)$.
\end{lemma}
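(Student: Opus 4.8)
The plan is to prove that $\delta$ never decreases across a single interaction of the sequence; since the sequence runs from $c$ to $c'$, iterating then gives $\delta(c')\ge\delta(c)$. Write $W_0(e)$ and $W_1(e)$ for the number of workers with value $0$ and with value $1$ in a configuration $e$, so $\delta(e)=W_0(e)-W_1(e)$, and let $\Delta W_0,\Delta W_1$ denote the changes these quantities undergo in a fixed interaction. It suffices to verify $\Delta W_0\ge\Delta W_1$ in every case.

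First I would extract two structural consequences of the hypotheses. Since every agent stays a clock or a worker throughout, no transition that would create a backup or a terminator ever fires (phase gap larger than one, a value-$1$ worker crossing into an odd phase, reaching the maximal phase, or contact with a backup/terminator), and no clock, terminator, or backup ever reverts to a worker; in particular the set of workers only shrinks. Since no worker enters an odd phase, the only value-raising rule for workers — a value-$1/2$ worker being bumped to value $1$ upon entering phase $\phi+1$ with $\phi$ even (so $\phi+1$ odd) — never fires. Hence \emph{no worker ever acquires value $1$} during the sequence, and $W_1$ is non-increasing.

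Next I would run through the interaction types against the pseudocode. Clock--clock and clock--worker interactions leave all worker values fixed: a worker meeting a clock can only change its preference, or advance from an odd phase $\phi$ to the even phase $\phi+1$ (advancing to an odd phase is forbidden by hypothesis~(2)), and an odd$\to$even advance does not touch the value; so $\Delta W_0=\Delta W_1=0$. Two workers whose phases differ by exactly one behave identically — the lower moves up into an even phase with unchanged value, and a gap larger than one cannot occur — so again $\Delta W_0=\Delta W_1=0$. For two workers in the same \emph{odd} (cancellation) phase, the only value change is a pair of value-$1$ workers with opposite preferences, which both drop to value $0$, or, if clock-creation is still enabled, one drops to value $0$ while the other becomes a clock; either way $\Delta W_1=-2$ and $\Delta W_0\in\{1,2\}$, so $\Delta(\delta)\ge 3$. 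For two workers in the same \emph{even} (doubling) phase, the only value change is a value-$1$/value-$0$ pair both moving to value $1/2$ (the $1/2$--$1/2$ case changes nothing), giving $\Delta W_1=-1$, $\Delta W_0=-1$, hence $\Delta(\delta)=0$. In every case $\Delta W_0\ge\Delta W_1$, so $\delta$ does not decrease, and iterating over the sequence yields the claim.

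The case analysis itself is routine; the one place that needs care is confirming that the two value-raising mechanisms for $W_1$ are genuinely blocked: a worker is assigned value $1$ only on crossing into an odd phase (blocked by hypothesis~(2)), and no non-worker ever becomes a worker, so no fresh value-$1$ worker can appear. Dually, one should check that in a doubling step the guard $S.\id{value}+O.\id{value}=1$ forces a value-$1$ worker to pair only with a value-$0$ worker, so that the simultaneous unit decreases of $W_0$ and $W_1$ cancel exactly rather than hurting $\delta$.
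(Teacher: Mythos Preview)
Your proposal is correct and follows the same approach as the paper: show that $\delta$ is non-decreasing across each individual interaction by checking that cancellations strictly increase $\delta$ while duplications leave it unchanged, and all other interactions are irrelevant. The paper's proof is a terse four-sentence version of exactly this argument; your case analysis is more explicit (separating out clock--clock, clock--worker, and phase-difference-one interactions, and verifying that the value-raising rule $1/2\to 1$ is blocked by hypothesis~(2)), but there is no substantive difference in method.
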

\begin{proof}
We will prove that $\delta$ is monotonically non-decreasing 
  for configurations along the interaction sequence from $c$ to $c'$.
Under our assumptions, interactions that affect $\delta$ 
  are cancellations and duplications.
A cancellation decreases the count of workers with value $1$ and increases the 
  count of weak workers, increasing $\delta$ of the configuration.
A duplication decrements both, the number of workers with value $1$,
  and the number of weak workers, leaving $\delta$ unchanged.
\end{proof}
\begin{lemma}
\label{lem:majwhp}
If the initial majority state has an advantage of $\epsilon n$ 
  agents over the minority state,
  our algorithm stabilizes to the correct majority decision
  in $O(\log{1/\epsilon} \cdot \log{n})$ parallel time, with high probability.
\end{lemma}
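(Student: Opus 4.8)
I would argue by first conditioning on a high-probability ``good event'', and then running a potential argument on the number of cancellation/doubling \emph{super-rounds} until a terminator is created, after which \lemmaref{lem:majdone} finishes the job. Throughout I assume without loss of generality that $A$ is the initial majority, with advantage $\epsilon n$, and (to streamline the exposition) that $\epsilon$ is below a small absolute constant; for larger $\epsilon$ the same argument yields the stronger bound $O(\log n)$ and is only easier. By \lemmaref{lem:majerror}, if any agent ever becomes a backup the system stabilizes correctly (though slowly), and I will show this is an $n^{-\Omega(1)}$-probability event; so it suffices to analyze backup-free executions.

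\textbf{Getting the clock running and synchronizing phases.} All of phase $1$ is a cancellation phase, and this is the only time clocks are born. Since \lit{clock-creation} starts \id{true} everywhere and stays \id{true} until some clock reaches position $T_c=23(\beta+1)\log n$, I would first use a coupon-collector argument (of the kind behind \lemmaref{lem:rumor} and \lemmaref{lem:syncinc}) to show that within $O(n\log n)$ interactions all minority-opinion workers cancel, so that $\Theta(n)$ clocks are created, and that this happens before any clock advances $T_c$ positions, so that each such cancellation does spawn a clock. By \lemmaref{lem:clockcnt} the clock count never exceeds $n/2$, and by \corollaryref{cor:ptw}, applied with $\gamma=29(\beta+1)$ and the resulting $\rho=\gamma'(\beta)\log n$, the phase-clock gap stays below $\rho$ for the next $n^{\Theta(1)}$ interactions with probability $1-n^{-\Omega(1)}$. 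With $\Theta(n)$ clocks and gap at most $\rho$, I would then show that phases advance cleanly and each lasts $\Theta(n\log n)$ interactions: right after all workers enter an odd phase $\phi$ the clocks sit in the $[0,\rho)$ (\id{ODD}-labelled) region, by \lemmaref{lem:syncreach} and \lemmaref{lem:syncinc} it takes $\Theta(n\log n)$ interactions for any clock to advance the $2\rho$ positions needed to reach the \id{EVEN}-labelled region and trigger the transition to phase $\phi+1$, and during that window rumor spreading (\lemmaref{lem:rumor}) carries every worker into phase $\phi$. Because $\rho$ and $T_c$ are chosen as large enough multiples of $(\beta+1)\log n$, each buffer region of width $\rho$ is wider than the distance a clock can travel during one round of rumor spreading, so no clock ever signals the wrong label while a worker is still behind, no two workers are ever more than one phase apart (so the clock never causes a backup), and in particular during the first $8(\beta+1)n\log n$ interactions of an odd phase no clock is \id{EVEN}-labelled, which is exactly the hypothesis needed to invoke \lemmaref{lem:majcancel} (symmetrically \lemmaref{lem:majdup} for even phases, whose extra hypothesis that $\ge n/10$ workers stay weak I would verify from \lemmaref{lem:majweak} together with the $n/10$-bound of \lemmaref{lem:majcancel}, unless essentially all workers are already strong, a case handled below). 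Each per-phase guarantee fails with probability only $n^{-\beta}$.

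\textbf{Counting super-rounds via the sum invariant.} In a backup-free, terminator-free execution the potential $Q$ of \invariantref{inv:majsum} is \emph{conserved} (cancellations, doublings, and phase increments each leave it unchanged), so $Q\equiv\epsilon n^{2}$. At the start of an odd phase $\phi=2j+1$ every strong worker has value $1$ and weight $2^{\log n-j}$, hence the difference $g_j$ between the counts of strong workers preferring $\id{WIN}_A$ and $\id{WIN}_B$ equals $\epsilon n^{2}/2^{\log n-j}=\epsilon n\,2^{j}$, and the number $s_j$ of strong workers satisfies $s_j\ge|g_j|=\epsilon n\,2^{j}$. Since $s_j$ is at most the number of workers, which is at most $n$, this forces $j\le\log(1/\epsilon)$. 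Combining this with \lemmaref{lem:majcancel} (which at the end of each cancellation phase leaves either all strong workers with the majority preference, or at most $n/10$ of each preference, the latter being inconsistent with $s_{j+1}\ge\epsilon n\,2^{j+1}$ once $j=\Theta(\log(1/\epsilon))$) and with \lemmaref{lem:majdup}, I get that within $\log(1/\epsilon)+O(1)$ super-rounds either every worker is strong with the majority preference, or a doubling phase runs out of weak workers; in either case the next one or two phase advances convert the surviving value-$1$ workers into terminators $D_A$ (by the rule ``enter an odd phase with value $1$ $\Rightarrow$ terminator'' and the fail-safe at phase $2\log n+1$), and by then the clocks and remaining workers have also adopted preference $\id{WIN}_A$ by rumor spreading, so no backup is created when the terminators spread. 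Thus a terminator appears within $O(\log(1/\epsilon)\cdot n\log n)$ interactions, \lemmaref{lem:majdone} then stabilizes the system within a further $O(n\log n)$ interactions, and \lemmaref{lem:majcorrect} guarantees the decision is correct. A union bound over the $O(\log(1/\epsilon))$ phases of the $n^{-\beta}$ failure probabilities of \lemmaref{lem:majcancel}, \lemmaref{lem:majdup}, \lemmaref{lem:syncinc}, \lemmaref{lem:syncreach} and of \corollaryref{cor:ptw} shows all of this holds with probability $1-n^{-\Omega(1)}$, i.e., with high probability the parallel time is $O(\log(1/\epsilon)\cdot\log n)$.

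\textbf{Main obstacle.} The delicate step is the phase-synchronization argument: one must simultaneously control the phase-clock gap (\corollaryref{cor:ptw}), the speed at which clocks advance (\lemmaref{lem:syncinc}, \lemmaref{lem:syncreach}), and the speed of rumor spreading (\lemmaref{lem:rumor}), and verify that the specific constants hidden in $\rho$ and $T_c$ make the buffer regions wider than the distance a clock can travel during one round of rumor spreading, so that workers never drift more than one phase apart and enough clocks are created before \lit{clock-creation} switches off. This is precisely what legitimizes the ``conditioned on'' hypotheses of \lemmaref{lem:majcancel} and \lemmaref{lem:majdup}, and where essentially all of the technical bookkeeping lives; by contrast, the sum-invariant counting argument is short once \invariantref{inv:majsum} is in hand.
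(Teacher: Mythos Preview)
Your proposal is correct and follows essentially the same route as the paper's proof: (i) use \lemmaref{lem:syncinc}, \lemmaref{lem:syncreach}, \lemmaref{lem:rumor} and \corollaryref{cor:ptw} to argue that, with high probability, $\Theta(n)$ clocks are created in phase~1, the gap stays below $\rho$, and workers advance cleanly one phase at a time with each phase lasting $\Theta(n\log n)$ interactions; (ii) invoke \lemmaref{lem:majcancel} and \lemmaref{lem:majdup} per phase, with \lemmaref{lem:majweak} supplying the $\ge n/10$ weak-worker hypothesis; (iii) use \invariantref{inv:majsum} to cap the number of super-rounds at $O(\log(1/\epsilon))$; (iv) union-bound the failure probabilities.

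Two small remarks. First, your observation that $Q$ is \emph{exactly} conserved in backup-free, terminator-free executions (rather than merely $\ge \epsilon n^2$ as stated in \invariantref{inv:majsum}) is correct and gives a crisper counting argument than the paper's, since it pins down $g_j=\epsilon n\cdot 2^j$ exactly and hence forces the ``$\le n/10$ each'' branch of \lemmaref{lem:majcancel} to fail by round $j\approx\log(1/(10\epsilon))$. Second, you are slightly more explicit than the paper about the endgame: the paper stops at ``the case when all strong workers support the same decision must occur before phase $2\log(1/\epsilon)+1$'' without spelling out how a terminator is subsequently created, whereas you correctly identify that once strong workers outnumber weak ones the doubling phase cannot eliminate all value-$1$ workers, so the next odd-phase entry produces a $D_A$ and \lemmaref{lem:majdone} finishes. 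Your deferral of the large-$\epsilon$ regime (``only easier'') is the same hand-wave the paper makes; just be aware that when $i_n(B)$ is small the clock count is small and the argument really does need the separate ``all strong same preference after phase~1, stabilize by preference-spreading'' branch rather than the phase-clock machinery.
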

\begin{proof}
In this argument, we repeatedly consider high probability events,
  and suppose they occur.
In the end, an union bound over all these events gives the desired result.

Consider the first $8(\beta+1)n \log{n}$ interactions of the protocol.
Initially there are no clocks, and each clock starts with a position $0$ and 
  increases its position at most by one per interaction.
By~\lemmaref{lem:syncinc}, with probability $1-n^{-\beta}$, during these interactions 
  no clock may reach position $T_c = 23 (\beta+1) \log{n}$,
  as that would require an agent to interact more than $T_c$ times.
The states of the clock with label \id{EVEN} all have position 
  $2\rho \geq 58(\beta+1) \log{n}$.
Therefore, we can apply~\lemmaref{lem:majcancel} and get that in the 
  resulting configuration $c$, with probability $1-n^{-\beta}$, 
  either all strong workers have the same preference, 
  or the number of strong workers with each preference is at most $n/10$.
We will deal with the case when all strong agents have the same preference later.
For now, suppose the number of strong workers with each preference is at most $n/10$.
As every cancellation up to this point creates one weak worker and one clock, 
  the number of clocks and weak workers is equal and between $2n/5$ and $n/2$.
Thus, for $\delta$ defined as in~\lemmaref{lem:majweak} we have $\delta(c) \geq n/5 > n/10$.
We also know that in configuration $c$, 
  each agent is either a clock that has not yet reached position 
  $T_c = 23 (\beta+1) \log{n}$ 
  (and thus, also not reached a position with a label \id{EVEN}),
  or it is a worker still in phase $1$.

By~\lemmaref{lem:syncreach}, with probability at least $1-n^{-\beta}$,
  within $O(n \log{n})$ interactions we reach a configuration $c'$
  where some clock is at a position $2\rho$, which has a label \id{EVEN}.
But before this, some clock must first reach position $T_c$.
Consider the first configuration $c_1$ when this happens.
The clock at position $T_c$ would set $\lit{clock-creation} \gets \id{false}$.
Notice that from $c_1$, \lit{clock-creation}=\id{false} propagates via
  rumor spreading, and after the rumor reaches all agents, no agent will
  ever have \lit{clock-creation}=\id{true} again, and no more clocks will be created.
By~\lemmaref{lem:rumor}, this will be the case with high probability\footnote{Recall that $\beta$ was chosen precisely to be sufficiently large for the whp claim of~\lemmaref{lem:rumor}.}
  in a configuration $c_2$ reached after $(3/2) \beta n \log{n}$ interactions from $c_1$.
Moreover, by~\lemmaref{lem:syncinc}, no clock will have reached a position larger than
  $T_c + 6(\beta+1)\log{n} \leq 29(\beta+1)\log{n}$ in $c_2$, 
  which is precisely the quantity $\gamma\log{n}$ we used as 
  the maximum starting gap when applying~\corollaryref{cor:ptw} 
  to determine the $\rho$ of our phase clock.
In $c_2$, all clocks have positions in $[0, 29(\beta+1)\log{n})$,
  and no more clocks will ever be created.
By~\lemmaref{lem:clockcnt} and since the number of clocks was $\geq 2n/5$ 
  in configuration $c$, the number of clock agents is from now on 
  fixed between $2n/5$ and $n/2$ (unless some agent becomes a backup or a terminator).
Also, the definition of $\rho$ lets us focus on the high probability 
  event in~\corollaryref{cor:ptw}, that the phase clock gap remains less than 
  $\rho$ during $\Theta(n \log n)$ interactions following $c_2$.

Since $29 (\beta+1) \log{n} < \rho < 2\rho$, in $c_2$ no clock has reached 
  a state with label \id{EVEN}, and thus, configuration $c_2$ occurs 
  after configuration $c$ and before configuration $c'$.
Recall that we reach $c'$ from $c$ 
  within $O(n \log{n})$ interactions with high probability.
In $c'$, some clock has reached position $2\rho$, but the other agents are 
  still either clocks with position in $[\rho, 2\rho)$, or workers in phase $1$.
Let $c''$ be a configuration reached after $(3/2) \beta n \log{n}$ interactions following $c'$.
By~\lemmaref{lem:syncinc}, in $c''$, all clocks will have positions
  $\leq 2\rho + 6(\beta+1)\log{n} < 3\rho$. 
Combining with the fact that at least one agent was at $2\rho$ in $c'$, 
  maximum gap is $<\rho$, and positions $[\rho, 2\rho)$ have label buffer, 
  we obtain that during the $(3/2) \beta n \log{n}$ interactions from $c'$ 
  leading to $c''$, all clocks will be in states with label \id{EVEN} or buffer.
However, there is at least one clock with label \id{EVEN} starting from $c'$,
  spreading the rumor through workers making them enter phase $2$.
Due to~\lemmaref{lem:clockcnt}, at least half of the agents are workers.
Therefore, by~\lemmaref{lem:rumor}, in $c''$, with probability at least 
  $1-n^{-9}$, all worker agents are in phase $2$.
All clocks will be less than gap $\rho$ apart from each other with 
  some clock with a position in $[2\rho, 3\rho)$, 
  and no clock with position $\geq 3\rho$.

We now repeat the argument, but for a duplication phase
  instead of a cancellation using~\lemmaref{lem:majdup},
  and starting with all clocks with positions in $[2\rho, 3\rho)$ 
  as opposed to $[0, \rho)$ and all workers in phase $2$.
We consider a sequence of $8(\beta+1)n \log{n}$ interactions,
  and by~\lemmaref{lem:syncinc}, 
  no clock will reach position $3\rho + 23 (\beta+1)\log{n}$.
Thus, no agent will update to an odd phase and since $\delta(c) \geq n/10$,
  by~\lemmaref{lem:majweak}, the number of weak agents must be at least $n/10$ 
  throughout the interaction sequence, allowing the application of~\lemmaref{lem:majdup}.
We get that with high probability, after $O(n \log{n})$ rounds,
  there will again only be clocks and workers in the system.
All clocks will be less than gap $\rho$ apart with
  some clock at a position in $[3\rho, 0)$ and with 
  no clock yet reaching position $0$ (wrapping around).

Now, due to the loop structure of the phase clock, we can use the 
  same argument as in~\lemmaref{lem:syncreach} to claim that, 
  with probability at least $1-n^{-\beta}$, within $O(n \log{n})$ interactions 
  we reach a configuration where some clock is at a position $0$ (label \id{ODD}).
Because maximum gap is $<\rho$, all clocks will have label buffer,
  and the clock at $0$ will now spread the rumor making all workers enter phase $3$
  within the next $(3/2) \beta n \log{n}$ interactions.
No worker will become a terminator, since~\lemmaref{lem:majdup}
  guarantees that all the agents with value $1$ get their 
  values duplicated (turned into $1/2$) before they enter phase $3$.

Then, we repeat the argument for a cancellation phase (as for phase $1$), 
  except that interactions do not create clock agents 
  (due to \id{clock-creation}=\id{false})
With high probability, within $O(n \log{n})$ interactions,
  all agents will again be in a worker or a clock state.
Moreover, either all strong agents will support the same decision,
  or the number of strong agents supporting each decision will be at most $n/10$.
Since by~\lemmaref{lem:clockcnt}, the number of clocks is at most $n/2$,
  $\delta$ as defined in~\lemmaref{lem:majweak} is 
  at least $n/2 - 2(n/10) - 2(n/10) = n/10$ for this configuration,
  and will remain so until some agent reaches phase $5$,
  allowing us to use~\lemmaref{lem:majdup} for phase $4$, etc.

Due to~\invariantref{inv:majsum}, the case when all
  strong worker agents support the same decision must occur 
  before phase $2\log{1/\epsilon}+1$.
Assume that original majority was $A$, 
  then $Q(c)$ must remain larger than $\epsilon n^2$ 
  (up to this point all agents are clocks or workers, so
  the condition about $D_A$ holds). 
The maximum potential in phase $2\log{1/\epsilon}+1$
  is $\epsilon n^2$ and it is attained when all
  agents are strong and support $\id{WIN}_A$.

Hence, we only need to repeat the argument $O(\log{1/\epsilon})$ times.
The number of high probability events that we did union bound over is 
  $O(n \cdot \log{1/\epsilon} \cdot \log{n})$ 
  (number of interactions for the phase clock).
Combining everything, we get that with probability $1-\frac{O(\log{1/\epsilon})}{n^9}$,
  the algorithm stabilizes within $O(\log{1/\epsilon} \cdot \log{n})$ parallel time.
\end{proof}
\begin{lemma}
\label{lem:majexp}
If the initial majority state has an advantage of $\epsilon n$ agents over 
  the minority state, our algorithm stabilizes to the correct majority decision
  in $O(\log{1/\epsilon} \cdot \log{n})$ expected parallel time.
\end{lemma}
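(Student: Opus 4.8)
The plan is to bootstrap the high-probability bound of~\lemmaref{lem:majwhp} into an expectation bound by a standard two-regime argument. Write $\tau$ for the parallel stabilization time from the initial configuration. By~\lemmaref{lem:majwhp} there is a constant $C$ such that, with probability at least $1-p_0$ where $p_0 = O(\log(1/\epsilon)/n^9)$, the protocol has reached a stable configuration after $T_0 := C\log(1/\epsilon)\log n$ parallel time; since a stable configuration stays stable, on this event $\tau \le T_0$. On the complementary event the system is in some configuration reachable from the initial one, and the remaining expected stabilization time is at most $T_{\max} := \sup_{c}\,\mathbb{E}[\text{parallel time to stabilize from }c]$, the supremum over reachable configurations $c$. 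Hence $\mathbb{E}[\tau] = O\!\left(T_0 + p_0\,T_{\max}\right)$, and it suffices to show $T_{\max} = \mathrm{poly}(n)$: then $p_0\,T_{\max} = O(\log(1/\epsilon)\log^2 n / n^{6}) = o(\log(1/\epsilon)\log n)$, so $\mathbb{E}[\tau] = O(\log(1/\epsilon)\log n)$ as claimed.

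The substantive step is the polynomial bound on $T_{\max}$, and I would prove it by a case analysis on the structure of a reachable configuration $c$. If $c$ contains a backup agent,~\lemmaref{lem:majerror} gives stabilization within $O(n^2\log n)$ expected interactions; if $c$ contains a terminator agent,~\lemmaref{lem:majdone} does the same. So assume $c$ contains only clocks and workers. Here the clock count is non-decreasing along backup-free executions (a clock leaves its type only by becoming a backup), and there are at most $2\log n+1$ worker phases. I would argue that, within $\mathrm{poly}(n)$ expected interactions, the system either produces a backup or terminator (reducing to the cases above) or reaches a stable configuration outright. Concretely: if no cancellation ever creates a new clock, then no two value-$1$ workers of opposite preference ever meet, so all strong workers share a single preference $P$; every weak worker then adopts $P$ the first time it meets a strong worker, which for all weak workers simultaneously happens within $O(n^2\log n)$ expected interactions by a coupon-collector/rumor-spreading argument in the spirit of~\lemmaref{lem:rumor}, yielding a stable all-$P$ configuration. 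Otherwise clocks keep being created, and once at least two clocks are present every clock--clock interaction advances the phase clock, so the clocks cycle and drive workers through phases; after at most $2\log n+1$ phases some worker reaches the maximal phase (becoming a terminator) or all strong workers align, each phase costing at most $\mathrm{poly}(n)$ expected interactions even when few clocks are present. This gives $T_{\max} = \mathrm{poly}(n)$ (a crude bound of $O(n^3\log^2 n)$ parallel time is comfortably below the $n^9$ we can afford).

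I expect the main obstacle to be this polynomial bound on $T_{\max}$, not the final combination. The set of reachable configurations has size $n^{\Theta(\log n)}$, so the generic ``finite Markov chain $\Rightarrow$ finite absorption time'' estimate only yields a super-polynomial bound; one must instead exploit the protocol's structure---clock monotonicity, the bounded number of phases, and the fact that any synchronization error deterministically falls back to the four-state backup protocol of~\cite{DV12,MNRS14}---to rule out configurations from which progress is slow. The most delicate sub-case is a clocks-and-workers configuration with only a handful of clocks, where the phase clock effectively stalls; there one must check that the preference-propagation dynamics among workers still drive the output to stability in polynomial expected time, or else quickly push some worker to the maximal phase and thus create a terminator.
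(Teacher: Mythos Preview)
Your two-regime decomposition and the reduction of the polynomial bound to the clocks-and-workers case (via \lemmaref{lem:majerror} and \lemmaref{lem:majdone}) match the paper exactly; the combination $\mathbb{E}[\tau]=O(T_0+p_0\,T_{\max})$ is also how the paper finishes.

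The gap is in your handling of the clocks-and-workers case. First, the implication ``no cancellation ever creates a new clock $\Rightarrow$ no two value-$1$ workers of opposite preference ever meet'' is false in general: once a clock reaches position $T_c$ and \lit{clock-creation} turns \id{false}, cancellations happen without creating clocks. The implication does hold when at most one clock exists (a single clock never advances, so \lit{clock-creation} stays \id{true} everywhere), but you neither isolate that case nor argue it; and your split is not exhaustive, since creating one clock from a zero-clock configuration leaves a lone clock stuck at position $0$, which falls under neither your (a) nor your (b). The paper avoids this by arguing from the \emph{initial} configuration rather than from an arbitrary reachable one: it dispatches all-$A$ and single-$B$ inputs by hand, and shows that with $\ge 2$ agents in state $B$, two clocks appear within $O(n)$ expected interactions. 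From there, clock--clock interactions (probability $\Omega(1/n^2)$) cycle some clock through all $4\rho$ positions in $O(n^2\log n)$ expected interactions, and during such a cycle any fixed worker is forced to advance its phase with probability $\Omega(1/n^2)$; over $O(\log n)$ phases this gives $O(n^5\log^2 n)$ expected interactions until either a backup/terminator appears or every worker sits in phase $2\log n+1$. The second ingredient your sketch is missing is how to rule out the latter: the paper invokes \invariantref{inv:majsum} to contradict $Q\ge\epsilon n^2$, so a backup or terminator must appear within that time, after which \lemmaref{lem:majerror} and \lemmaref{lem:majdone} finish. The resulting $O(n^4\log^2 n)$ parallel-time bound is then absorbed by the $n^{-9}$ failure probability, exactly as in your final step.
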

\begin{proof}
We know that in the high probability case of~\lemmaref{lem:majwhp},
  the protocol stabilizes within $O(\log{1/\epsilon} \cdot \log{n})$ parallel time.
What remains to bound the expectation the low probability events 
  of~\lemmaref{lem:majwhp}.

Notice that as soon as any agent gets into an backup or a terminator state,
  by~\lemmaref{lem:majerror} and~\lemmaref{lem:majdone}, 
  the remaining expected time for the protocol to stabilize is $O(n^2 \log{n})$ interactions.
Therefore, we will be looking to bound expected time to reach configurations 
  with a backup or a terminator agent.

Without loss of generality, suppose $A$ is the inital majority.
If all agents start in $A$, then the system is already stable with the correct decision.
If the initial configuration contains just a single agent in state $B$,
  then it takes expected $O(n)$ interactions for this agent to interact with an agent 
  in state $A$, and lead to a configuration where $n-2$ agents are 
  in state $A$ (worker state with value $1$ and preference $\id{WIN}_A$),
  one agent is a worker with value $0$ and one agent is a clock with position $0$.
One of these two agents (weak worker and the clock) has preference 
  $\id{WIN}_B$ and it takes another $O(n)$ expected interactions for it to meet 
  a strong agent with preference $\id{WIN}_A$ and update its own preference.
At that point (after $O(1)$ expected parallel time) the system will be stable 
  with the correct majority decision (since there is only one clock, its position remains at $0$,
  and because of this, workers do not perform any phase updates).

Next, we consider the case when there are at least $2$ agents in state $B$
  in the initial configuration.
Interactions between two agents both in state $A$ and two agents both in state $B$
  do not lead to state updates.
After one cancellation, as in the previous case, there will be agents in input states,
  one clock stuck at position $0$, and one weak worker that might change its preference,
  but not phase or value.
Therefore, after $O(n)$ expected interactions, we will get at least
  two clock agents in the system.

Unless some agent ends up in a backup or a terminator state (this is a good case,
  as discussed earlier) the number of clocks never decreases.
During interactions when there are $k \geq 2$ clocks in the system,
  the probability of an interaction between two clocks
  is $\frac{k(k-1)/2}{n(n-1)/2} \geq k/n^2$.
Therefore, it takes $O(n^2/k)$ expected interactions for one of the 
  clocks to increment its position.
After $k \cdot 4\rho = O(k \log{n})$ such increments of some clock position,
  at least one of the clocks should go through all the possible positions.
Notice that this statement is true without the assumption about the maximum gap 
  of the clock (important, because that was a with high probability 
  guarantee, while here we are deriving an expectation bound
  that holds from all configurations)

Consider any non-clock agent $v$ in the system in some configuration $c$.
Since we know how to deal with the case when some agent ends up in a backup
  or a terminator state, suppose $v$ is a worker.
The clock agent that traverses all positions in $[0, 4\rho)$ necessarily
  passes through a state with label \id{ODD} and with label \id{EVEN}.
If $v$ is in an odd phase and does not move to an even phase,
  then when the clock is in state labelled \id{EVEN}, there would
  be $1/n^2$ chance of interacting with $v$, and vice versa.
If such intaraction occurs, and $v$ does not change its 
  state to a non-worker, then it must necessarily increase its phase.
Therefore, in any given configuration, for any given worker,
  the expected number of interactions before it either changes 
  to a non-worker state or increases it phase is 
  $O(k \log{n} \cdot \frac{n^2}{k} \cdot n^2) = O(n^4 \log{n})$.

By~\lemmaref{lem:clockcnt}, 
  there can be at most $n/2$ clocks in the system in any configuration.
Also, non-worker states can never become worker states again.
The maximum number of times a worker can increase its phase is $O(\log{n})$.
Thus, within $O(n^5 \log^2{n})$ expected interactions,
  either some agent should be in a backup or terminator state,
  or in the maximum phase possible ($2\log{n} + 1$).

If some worker reaches a maximum phase possible, 
  there are no backup or terminator agents
  and there exists another worker with a smaller phase, 
  within $O(n^2)$ expected interactions they will interact.
This will either turn both agents into backups,
  or the other agent will also enter phase $2\log{n}+1$.
Thus, within at most $O(n^3)$ additional expected interactions,
  all workers will be in phase $2\log{n} + 1$ 
  (unless there is a backup or a terminator in the system).
This contradicts with~\invariantref{inv:majsum}, implying that our 
  assumption that no agent gets into a backup or a terminator state 
  should be violated within expected $O(n^5 \log^2{n})$ interactions 
  (using linearity of expectation and discarding asymptotically dominated terms).
Hence, the protocol always stabilizes within $O(n^4 \log^2{n})$ expected parallel time.
The system stabilizes in this expected time in the low probability
  event of~\lemmaref{lem:majwhp}, giving the total expectated time of at most 
  $O(\log{1/\epsilon} \cdot \log{n}) + \frac{O(\log{1/\epsilon} \cdot n^4 \cdot \log^2{n})}{n^9} = O(\log{1/\epsilon} \cdot \log{n})$ as desired.
\end{proof}
\section{Phased Leader Election}
\paragraph{Overview} We partition the state space into 
  \emph{clock} states, 
  \emph{contender} states, and
  \emph{follower} states.
A clock state is just a position on the phase clock loop.
A contender state and a follower state share the following two fields
  (1) a \emph{phase number} in $[1, m]$, which we will fix to $m=O(\log{n})$ later, and
  (2) a \id{High}/\id{Low} indicator within the phase.
Finally, all states have the following bit flags
  (1) \lit{clock-creation}, as in the majority protocol, and
  (2) a coin bit for generating synthetic coin flips with small bias, 
      as in~\cite{AAEGR17}\footnote{State transitions in population protocols are deterministic and the protocol does not have access to a random coin flips. The idea for synthetic coin flips is to simulate an outcome of a random coin flip based on part of the state of the interaction partner. This can be made to work because the scheduler is randomized.} 
The loop size of the phase clock will be $\Theta(\log{n})$ as in the majority.
Thus, the state complexity of the algorithm is $\Theta(\log{n})$.

All agents start as contenders,
  with phase number $1$ and a \id{High} indicator.
The coin is initialized with $0$ and \lit{clock-creation}=\id{true}.
Each agent flips its coin at every interaction.
As in majority, labels \emph{buffer}, \id{ODD} and \id{EVEN}
  are assigned to clock positions.
Only contenders map to the leader output. 

\paragraph{Clock States and Flags}
Clock agents, as in the majority algorithm, follow the phase clock protocol 
  from~\sectionref{sec:sync} to update their position.
When a clock with \lit{clock-creation}=\id{true} reaches 
  the threshold $T_c$, it sets \lit{clock-creation} to \id{false}.
The \lit{clock-creation} flag works exactly as in the majority protocol.

\paragraph{Contenders and Followers}
The idea of followers that help contenders eliminate each other comes from~\cite{AG15}.
A follower maintains a maximum pair of (phase number, \id{High}/\id{Low} indicator)
  ever encountered in any interaction partner, contender or follower
  (lexicographically ordered, \id{High} $>$ \id{Low}).
When a contender meets another agent with a larger
  phase-indicator pair than its own, it becomes a follower and adopts the pair.
An agent with a strictly larger pair than its interaction partner
  does not update its state/pair.
Also, when two agents with the same pair interact and one of them is a follower,
  both remain in their respective states.

When two contenders with the same pair interact and
  \lit{clock-creation}=\id{true}, one of them becomes a clock at position $0$.
If \lit{clock-creation}=\id{false}, then one of them becomes 
  a follower with the same pair.
The other contender remains in the same state.
As in phased majority, we want to control the counts of states and 
  in particular, avoid creating more than $n/2$ clocks.
This can be accomplished by adding a single \id{created} bit initialized to $0$.
When two contenders with the same pair meet, 
  and both of their \id{created} bit is $0$, then one of them becomes a clock
  and another sets \id{created} to $1$.
Otherwise, if one of the contenders has \id{created}$=1$, then it becomes a
  follower; the other remains unchanged.
Then~\lemmaref{lem:clockcnt} still works 
  and gives that we will never have more than $n/2$ clocks.

\paragraph{Contender Phase Update}
Consider a contender in phase $\phi$.
If $\phi$ is odd phase and the contender meets a clock whose
  state has an \id{EVEN} label, or when $\phi$ is even and the contender
  meets a clock with an \id{ODD}-labelled state,
  then it increments its phase number to $\phi+1$.
However, again due to technical reasons (to guarantee unbiased synthetic randomness),
  entering the next phase happens in two steps.
First the agent changes to a special \emph{intermediate} state 
  (this can be implemented by a single bit that is true if the state is intermediate),
  and only after the next interaction changes to non-intermediate contender with phase $\phi+1$
  and sets the \id{High}/\id{Low} indicator to the coin value of the latest interaction partner.
If the coin was $1$, indicator is set to \id{High}
  and if the coin was $0$, then it is set to \id{Low}.
For the partner, meeting with an intermediate state is almost like missing an interaction -
  only the coin value is flipped.
An exception to the rule of incrementing the phase is obviously when a contender 
  is in phase $m$.
Then the state does not change.
\begin{theorem}
\label{thm:lemain}
Our algorithm elects a unique stable leader within
  $O(\log^2{n})$ parallel time, both with high probability and in expectation.
\end{theorem}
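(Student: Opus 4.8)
The plan is to follow the template of the majority analysis (\lemmaref{lem:majwhp} and \lemmaref{lem:majexp}), with the cancellation/doubling dynamics on workers replaced by the tournament elimination dynamics on contenders. I would treat correctness/stability, the high-probability time bound, and the expected-time bound as three separate steps, reusing verbatim the phase-clock infrastructure (\lemmaref{lem:clockcnt}, \lemmaref{lem:syncinc}, \lemmaref{lem:syncreach}, \lemmaref{lem:rumor}, \corollaryref{cor:ptw}) for the synchronization.

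For correctness I would first record two unconditional structural facts. (i) The contender count is monotonically non-increasing along every execution: clocks and followers never become contenders, and every interaction between two contenders — neither in the transient intermediate state, which resolves within one further interaction — strictly decreases the count, since if they hold the same (phase, \id{High}/\id{Low}) pair then one becomes a clock or a follower (depending on \lit{clock-creation} and the \id{created} bit), and if their pairs differ then the lexicographically smaller one becomes a follower. (ii) The invariant that, in every reachable configuration, every follower's pair is at most the maximum pair currently held by any contender; this is preserved because a follower only ever adopts a pair that was held by some contender at the current or an earlier time, while the maximum contender pair never decreases (eliminations retain the larger pair and phase updates only increase it). From (i) the contender count stabilizes at exactly $1$ — it cannot drop below $1$, since a lone contender meets no other contender and, by (ii), no follower carrying a strictly larger pair — and since clocks and followers output \id{Lose} while the unique surviving contender outputs \id{Win}, the configuration has a stable unique leader. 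This argument is independent of whether the phase clock behaves well.

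For the high-probability bound I would reuse the reasoning of \lemmaref{lem:majwhp}: after $\Theta(n\log n)$ initial interactions a constant fraction of agents are clocks (never more than $n/2$, by \lemmaref{lem:clockcnt}), \lit{clock-creation} has propagated to \id{false} by \lemmaref{lem:rumor}, and by \corollaryref{cor:ptw} the clock gap stays below $\rho$ throughout the $\Theta(n\log^2 n)$ interactions of interest, so the \id{ODD}/\id{EVEN} clock labels drive the contenders through successive phases in lockstep, each phase spanning $O(n\log n)$ interactions (i.e.\ $O(\log n)$ parallel time) by the argument of \lemmaref{lem:syncreach} together with \lemmaref{lem:syncinc} and \lemmaref{lem:rumor}. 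Within a phase $\phi$ carrying $k\ge 2$ contenders, each contender fixes its \id{High}/\id{Low} indicator from the coin bit of a fresh partner on entering $\phi$; invoking the synthetic-coin bias bound of~\cite{AAEGR17} and revealing these indicator bits one at a time, the probability that all $k$ are \id{Low} is at most $(1-\delta)^k$ for a constant $\delta$, and symmetrically the \id{High} count is whp at most $(1-\delta')k$. Hence whp there is at least one phase-$\phi$ contender with pair $(\phi,\id{High})$, which by \lemmaref{lem:rumor} turns every phase-$\phi$ \id{Low} contender into a follower within $O(n\log n)$ interactions, so the contender count shrinks by a constant factor per phase while it exceeds $C\log n$; a standard tournament/split argument (as in~\cite{AAEGR17}) then shows $O(\log n)$ further phases bring it to exactly $1$ with probability $1-n^{-\beta}$, and we choose the number of phases $m=\Theta(\log n)$ large enough to accommodate all of them. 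Summing over $O(\log n)$ phases of $O(\log n)$ parallel time each gives stabilization within $O(\log^2 n)$ parallel time with probability $1-O(\log n)/n^{\beta}$.

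Finally, for the expectation I would bound the residual low-probability mass crudely and unconditionally: by structural fact (i), from any configuration with $k\ge 2$ contenders the next contender--contender interaction (which strictly decreases the count) occurs within $O(n^2/k)$ expected interactions, so the contender count reaches $1$ within $O(n^3)$ expected interactions, i.e.\ $O(n^2)$ expected parallel time, from \emph{any} starting configuration, even if the phase clock has fallen out of synchrony or the phase budget has been exhausted. Combining this with the high-probability bound exactly as in \lemmaref{lem:majexp}, with $\beta$ chosen large enough (say $\beta\ge 3$) so that $n^2\cdot O(\log n)/n^{\beta}=o(1)$, yields expected parallel time $O(\log^2 n)$. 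I expect the main obstacle to be the high-probability step: the per-phase geometric decrease of the contender count requires simultaneously controlling the phase-clock skew, the lockstep advancement of all contenders through each phase, and the biased, not-quite-independent synthetic coin flips, all pieced together through a single union bound over the $O(n\log^2 n)$ interaction-level events, precisely as in the proof of \lemmaref{lem:majwhp}.
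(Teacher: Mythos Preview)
Your proposal is correct and follows essentially the same approach as the paper's proof: correctness via showing at least one contender always survives, the high-probability bound by reusing the phase-clock infrastructure from \lemmaref{lem:majwhp} together with the synthetic-coin bias of~\cite{AAEGR17} to get geometric shrinking of contenders across $O(\log n)$ phases, and the expectation bound by combining this with a crude $O(n^3)$-interaction fallback from the $\Omega(1/n^2)$ probability of a contender--contender meeting. The one notable difference is in the correctness step: you prove the invariant that every follower's pair is dominated by the current maximum contender pair and use it to argue a lone contender cannot be eliminated, whereas the paper gives the shorter contradiction argument of considering, among all eliminated contenders, the one with the largest pair (breaking ties toward the later interaction) and observing nothing could have eliminated it. Your invariant is a bit more work but makes explicit why followers cannot kill the last contender; the paper's argument is terser but equivalent.
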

\begin{proof}
We first prove that is always at least one contender in the system.
Assume the contrary, and consider the interaction sequence leading to a contenderless configuration.
Consider the contender which had the highest phase-indicator pair when it got eliminated,
  breaking ties in favor of the later interaction.
This is a contradiction, because no follower or other contender may have
  eliminated it, as this requires having a contender with a larger phase-indicator pair. 

By construction, the \emph{interacted} bit combined with~\lemmaref{lem:clockcnt}
  ensures that there are never more than $n/2$ clocks in the system. 
We set up the phase clock with the same $\rho$ as in majority, and  
  also the \lit{clock-creation} threshold $T_c = 23 (\beta+1) \log{n}$.
After the first $8 (\beta+1) n \log{n}$ interactions,
  with probability $1-n^{-\beta}$, there will be at least $2/5n$ clocks.
The proof of this claim is similar to~\lemmaref{lem:majdup}: if the number of
  contenders with initial state and created set to $0$ was at least $n/10$
  throughout the sequence of interactions, then any given agent
  would have interacted with such agent with high probabiliy, 
  increasing the number of clocks.
Otherwise, the number of agents with $\id{created}=0$ falls under $n/10$, but
  there are as many agents that are clocks as contenders that are not $\id{created}=0$
  and at least $(n - n/10) / 2 > 2n/5$.

Now we can apply the same argument as in~\lemmaref{lem:majwhp} and get that,
  with high probability, the agents will keep entering larger and larger phases.
In each phase, as in the majority argument, 
  a rumor started at each agent reaches all other agents with high probability.
This means that if a contender in a phase selects indicator \id{High},
  then all other contenders that select indicator \id{Low} in the same phase
  will get eliminated with high probability.
By Theorem 4.1 from~\cite{AAEGR17}, the probability that a given contender
  picks \id{High} is at least $1/2 - 1/2^8$ 
  with probability at least $1-2\exp(-\sqrt{n}/4)$.
For every other agent, the probability of choosing $\id{Low}$ is similarly lower bounded.
Thus, Markov's inequality implies that in each phase, the number of contenders
  decreases by a constant fraction with constant probability,
  and phases are independent of each other.
By a  Chernoff bound, it is sufficient to take logarithmically many phases 
  to guarantee that one contender will remain, with high probability, 
  taking a union bound with the event that each phase takes $O(\log{n})$ parallel time,
  as proved in~\lemmaref{lem:majwhp}.

To get an expected bound, observe that when there are more than two contenders 
  in the system, there is $1/n^2$ probability of their meeting.
Hence, the protocol stabilizes from any configuration, in particular
  in the with low probability event, within $O(n^3)$ interactions, which does 
  not affect the total expected parallel time of $O(\log^2{n})$.
\end{proof}

\end{document}